\def\bE{\mathbf{E}}
\def\cX{\mathcal{X}}
\def\cY{\mathcal{Y}}
\def\Var{\mathrm{Var}}
\def\logit{\mathrm{logit}}
\def\Ex{{\bE}}
\def\exp{\mathrm{exp}}
\newcommand{\ben}{\begin{enumerate}}
\newcommand{\een}{\end{enumerate}}
\newcommand{\ee}{\end{equation}}
\newcommand{\bas}{\begin{eqnarray*}}
\newcommand{\eas}{\end{eqnarray*}}
\newcommand{\ba}{\begin{eqnarray}}
\newcommand{\ea}{\end{eqnarray}}
\newcommand{\bit}{\begin{itemize}}
\newcommand{\eit}{\end{itemize}}
\newtheorem{theorem}{Theorem}
\newtheorem{lemma}{Lemma}
\newtheorem{prop}{Proposition}
\newcommand{\convergeto}{\overset{d}{\longrightarrow}}
\newcommand{\convergepto}{\overset{p}{\longrightarrow}}
\newsavebox{\coloredquotationbox}
\title{\LARGE\bf Transfer Learning under Group-Label Shift: A Semiparametric Exponential Tilting Approach}
\author{Manli Cheng, Subha Maity, Qinglong Tian, Pengfei Li}
\affil{Department of Statistics and Actuarial Science\\
University of Waterloo}
\date{}
\begin{document}

\maketitle

\bigskip

\begin{abstract}
%The text of your abstract. 200 or fewer words.
We propose a new framework for binary classification in transfer learning settings where both covariate and label distributions may shift between source and target domains. Unlike traditional covariate shift or label shift assumptions, we introduce a group-label shift assumption that accommodates subpopulation imbalance and mitigates spurious correlations, thereby improving robustness to real-world distributional changes. To model the joint distribution difference, we adopt a flexible exponential tilting formulation and establish mild, verifiable identification conditions via an instrumental variable strategy. We develop a computationally efficient two-step likelihood-based estimation procedure that combines logistic regression for the source outcome model with conditional likelihood estimation using both source and target covariates. We derive consistency and asymptotic normality for the resulting estimators, and extend the theory to receiver operating characteristic curves, the area under the curve, and other target functionals, addressing the nonstandard challenges posed by plug-in classifiers. Simulation studies demonstrate that our method outperforms existing alternatives under subpopulation shift scenarios. A semi-synthetic application using the waterbirds dataset further confirms the proposed method’s ability to transfer information effectively and improve target-domain classification accuracy.    
\end{abstract}

\noindent%
{\it Keywords:} Exponential tilting model, 
Receiver operating characteristic curve,
Semiparametric inference,
Subpopulation shift,
Transfer learning

%3 to 6 keywords that do not appear in the title
%\vfill

%\newpage

%%\spacingset{1.9} % DON'T change the spacing!

\section{Introduction}
\label{sec: intro}

\subsection{Traditional transfer learning assumptions}

Traditional statistical models typically assume that the training and testing data come from the same distribution. However, this assumption often fails in real-world applications, where the training (source) and testing (target) distributions may differ significantly, making conventional statistical inference unreliable. In response to this challenge, transfer learning has emerged as a powerful framework that allows for distributional differences between the source and target domains.
In transfer learning, the source domain refers to the distribution from which we obtain labeled data to learn knowledge, while the target domain is where we aim to apply that knowledge. 
A common setting, called unsupervised domain adaptation, involves having labeled data from the source domain (i.e., the source dataset) and unlabeled data from the target domain (i.e., the target dataset).
The goal of transfer learning is to perform prediction, as well as parameter estimation and statistical inference, on the target domain by leveraging the information learned from the source domain.

In the unsupervised setting, it is generally impossible to transfer knowledge from the source domain to the target domain without further assumptions, as an arbitrary shift between the source and target distributions might not be identifiable from the observed data distribution.
Thus, one needs to make certain similarity assumptions between the source and target distributions to rule out arbitrary shifts.
To discuss these assumptions, let us denote these two distributions of the source and target populations as $P^{(1)}$ and $P^{(0)}$, respectively, which are distributed over the sample space $\cX \times \cY$ consisting of a feature space $X \in \cX$ and a label space $Y \in \cY$. The two most common assumptions are: (1) covariate shift, that posits $P^{(1)}$ and $P^{(0)}$ differ only in their marginal distributions of $X$ (i.e., \ $P^{(1)}_{Y | X} = P^{(0)}_{Y | X}$ whereas $P^{(1)}_X \neq P^{(0)}_X$) and (2) label shift, that assumes only the marginal distribution of $Y$ is different  (i.e., $P^{(1)}_{X | Y} = P^{(0)}_{X | Y}$ whereas $P^{(1)}_Y \neq P^{(0)}_Y$).
The causal interpretation of covariate shift relies on the assumption that features $X$ cause the label $Y$ (i.e., $X\rightarrow Y$) while label shift assumes that the label $Y$ causes the features $X$ (i.e., $Y\rightarrow X$).
In both cases, the causal mechanism is assumed to be stable across domains such that the conditional distributions remain unchanged.

\subsection{Subpopulation shift and spurious correlation}
\label{sec:subpopulation-shift-spurious-correlation}

While these assumptions make transfer learning possible, they often fall short in real-world situations.
For example, consider dividing the features $X$ into two parts, $X = (X_1^\top, X_2^\top)^\top$, where $X_1$ represents demographic features (e.g., age, race, and sex) and $X_2$ represents symptoms of a disease (e.g., X-ray images, biomarkers, and vital signs).
It would be unreasonable to assume that the disease $Y$ causes demographic features $X_1$, or that the symptoms $X_2$ cause the disease $Y$.
Thus, neither covariate shift nor label shift is applicable, revealing the limitations of traditional shift assumptions in real-world settings.

The example above highlights a specific type of distributional shift known as subpopulation shift or group shift (\citealt{Sagawa2020Distributionally,change2023yang}) in machine learning (ML) research, which has received less attention than covariate shift and label shift.

Subpopulation shift occurs when certain subgroups (e.g., $X_1$ in the previous example) are over- or under-represented in the source domain relative to the target domain.
For instance, demographic distributions can differ significantly between a rural clinic and a general hospital, or between hospitals located in southern states versus coastal regions.

Subpopulation shift may introduce a range of challenges in ML and beyond.
One of the most prominent issues is spurious correlation---a phenomenon where models trained on source data rely on features that happen to correlate with the label in the source domain but fail to generalize to the target domain.
For instance, consider a model trained to distinguish between images of cows and camels. If cows in the training data typically appear on green grass and camels in the desert, the model may learn to rely on the background as a predictive cue. While this strategy may perform well if the test data shares the same background-label association, it fails when encountering target datasets with many images of a cow on sand or a camel on grass. In such cases, the model may misclassify the image due to its reliance on a spurious background-label correlation rather than learning the true distinguishing features of the animals.
Beyond prediction, subpopulation shift also affects parameter estimation and statistical inference, yet these issues remain underexplored. 
% in the ML literature, where the focus is often narrowly placed on predictive performance.

\subsection{Group-label shift assumption} 
\label{sec:group-label-shift}
To address the subpopulation shift problem, we impose the following similarity assumption between the source and target domains:
\begin{equation}
\label{eq:dist-assumption}
P^{(0)}(X_2 | Y, X_1) = P^{(1)}(X_2 | Y, X_1),
\end{equation}
where the feature vector \( X = (X_1^\top, X_2^\top)^\top \) is partitioned into \textit{group features} \( X_1 \) and \textit{non-group features} \( X_2 \). We refer to this condition as the \textit{group-label shift} assumption, which implies that, within each subgroup defined by \( X_1 \), the conditional distribution of \( X_2 \) given the label \( Y \) remains invariant across domains. For example, if \( X_1 \) represents age groups—such as minors, adults, and seniors—we may reasonably assume that within each group, the distribution of observed symptoms \( X_2 \) given disease status \( Y \) remains stable between the source and target populations. Importantly, the group-label shift assumption accommodates heterogeneity across groups: minors and seniors, for instance, may exhibit more severe symptoms than adults, reflecting real differences in disease manifestation by age.

Beyond addressing subgroup variability, the group-label shift assumption also provides a principled approach to addressing spurious correlations. By allowing the joint distribution \( P^{(j)}(X_1, Y) \) to differ across domains, the assumption decouples domain-specific associations between the group feature (e.g., grass or sand background) and labels (e.g., cow or camel) that may not generalize.
As a result, any spurious correlation learned from the source domain—such as an over-reliance on age group \( X_1 \) when predicting \( Y \)—does not propagate to the target domain. This mitigates the risk of models learning superficial patterns and improves robustness under distributional shift.

\subsection{Our contributions and overviews}
While most existing studies on transfer learning concentrate on covariate shift \citep{hashemi2018weighted, shrikumar2019calibration} or label shift \citep{lipton2018detecting, azizzadenesheli2019regularized}, our work explores a broader form of distribution shift—subpopulation shift—which subsumes both and is particularly relevant in settings with spurious correlations.
We study binary classification problems for the target population under the group-label shift assumption and propose a semiparametric model that integrates this structure via an exponential tilting formulation.
Our contributions are outlined as follows:
\begin{itemize}
    \item  We introduce a novel distributional shift assumption, namely the group-label shift assumption
 presented in \eqref{eq:dist-assumption}, which encompasses the traditional distributional shift assumption as a special case. Relative to conventional assumptions like label shift, our modeling framework not only provides enhanced flexibility but also directly addresses the issue of spurious correlations, which often hinder classifier performance in the target domain.
    \item We establish a new and easily verifiable identification condition, which is much milder and more practical than those typically used in the literature (e.g., \citealt{maity2022understanding}).
    \item Based on our proposed model, we develop a computationally efficient two-step estimation procedure.
    % that enables various downstream tasks. 
    Notably, while our estimator is theoretically equivalent to that obtained via empirical likelihood, it overcomes the potential computational difficulties typically encountered when profiling the likelihood.
    \item We provide estimators for several target population quantities of interest, such as the mean outcome, and employ the Bayes classifier to predict class labels for unlabeled target data. To evaluate classification performance,   we consider both the Bayes classifier based on true source posterior probabilities and its plug-in counterpart based on estimated posteriors. We construct estimators for key performance metrics, including  {the receiver operating characteristic (ROC) curve at fixed thresholds and the area under the curve (AUC)}, along with their corresponding confidence intervals.
    \item 
    We establish the asymptotic properties of both the proposed parameter estimators and the estimators of the associated target functionals. 
    Despite the challenges posed by discontinuities in the estimated classifier, we overcome these difficulties by leveraging empirical process theory and show that both the estimated ROC curve values at fixed thresholds and AUC are asymptotically normal, with asymptotic variances of sandwich form under both fixed and estimated classifiers.
\end{itemize}
The rest of the paper is organized as follows.  Section~\ref{sec: method} introduces the model and corresponding estimation procedures.
Section~\ref{sec:est-functional} discusses the estimation and inference of general functionals of interest in the target population.
Section~\ref{sec: asymp} provides theoretical properties for the proposed estimators and target functionals introduced in Section~\ref{sec:est-functional}, including performance measures such as the ROC curve and AUC for the induced classifier. 
Section~\ref{sec: simulation} reports simulation results that illustrate the finite-sample performance of the proposed method. Section~\ref{sec: realdata} provides a real data application. We end the paper with concluding remarks given in Section~\ref{sec: disscussion}.
All technical proofs and additional simulation results are collected in the supplementary material.

\section{Methodology}
\label{sec: method}
This section presents the proposed statistical framework for binary classification {(i.e., with label space $\cY=\{0,1\}$)} under group-label shift. We begin by formalizing the problem setup (Section~\ref{sec:problem-setup}),  then introduce a semiparametric model, and investigate the identifiability of model parameters from an instrumental variables perspective (Section~\ref{sec: identification}). We then propose a likelihood-based two-step estimation procedure (Section~\ref{sec: estimation}).

\subsection{Problem setup}
\label{sec:problem-setup}

We formally introduce the transfer learning problem. We have a source 
dataset $\left\{x_i,y_i\right\}_{i=1}^{n_1}$,
consisting of independent and identically distributed (iid) samples drawn from the population $P^{(1)}$ and 
target dataset $\left\{x_j,y_j^*\right\}_{j= n_1 + 1}^{n_1+n_0}$ with   iid samples drawn from the population $P^{(0)}$,
{where $y_i\in \cY=\{0,1\}$ and $y_j^*\in\cY=\{0,1\}$.}
Accordingly, we define $s_i = 1$ if the $i$-th observation is from the source population, and $s_i = 0$ otherwise, for $i = 1, \dots, n_0 + n_1$. 
In the unsupervised transfer learning, the outcomes $y_j^*$ are unobserved for $j = n_1+1, \dots, n_1+n_0$.
% As mentioned in the introduction, we focus on the scenario where covariates and outcomes are observed in the source population, but only covariates are observed in the target population; that is, the outcomes $y_j^*$ are unobserved for $j = n_1+1, \dots, n_1+n_0$. 
We assume further that  
there exist probability density functions $p^{(1)}$ and $p^{(0)}$ corresponding to the measure $P^{(1)}$ and $P^{(0)}$.

As introduced in Equation \eqref{eq:dist-assumption}, we consider the group label shift assumption, which simplifies the ratio between the target and source densities as follows:
\[
\frac{p^{(0)}(x, y)}{p^{(1)}(x, y)} = \frac{p^{(0)}(x_1, y)}{p^{(1)}(x_1, y)}.
\] Our transfer learning methods will utilize this density ratio, so for a suitable transfer learning performance, it's crucial to estimate this density ratio well. %Inspired by \citet{maity2022understanding},
We use the following exponential tilting model \citep{maity2022understanding} for the density ratio and relate the joint distributions in the source and target domains as:
\begin{equation}
	\label{main-model}
	\frac{p^{(0)}(x_1, y)}{p^{(1)}(x_1, y)} = \exp(\alpha_y + \beta_y^\top x_1), ~~ p^{(0)}(x, y ) = \exp\big( \alpha_y + \beta_{y}^{\top}x_1 \big)p^{(1)}(x, y ),
\end{equation} where 
{$\alpha_y \in \mathbb{R}$  and $\beta_y \in \mathbb{R}^{d}$ are class-specific tilting parameters, and $x_1 \in \mathbb{R}^d$}. This tilting assumption is a generalization of the label shift, as fixing $\beta_y = 0$ leads to the label shift assumption. Furthermore, under this tilting model, the difference between the Bayes decision functions in the source and target domains permits arbitrary interactions between $y$ and $x_1$, given by  
\[
\textstyle \logit\{ P^{(0)}(Y = 1| X =x )\} - \logit\{ P^{(1)}(Y = 1| X = x )\} = \alpha_1 - \alpha_0 + (\beta_1 - \beta_0)^\top x_1, 
\] where $\logit(t) = \log\{t/(1 - t)\}$ for $ 0 < t< 1$ denotes the logistic link function. Note that the difference is zero under covariate shift and constant under label shift. In that sense, the exponential tilting framework provides a broader class of shifts across distributions.

 Applying Bayes' formula to model \eqref{main-model} yields the posterior probability that an observation $(x, y)$ originates from the source population as
\ba
\label{main-eq-iv}
p(s=1|x,y) = \frac{1}{1 + \exp(\alpha^*_{y} + \beta_{y}^\top {x_1})},
\ea
where $\alpha^*_{y} = \alpha_{y} + \log\{p(s=0)/p(s=1)\}$. Modeling this sampling probability $p(s=1 | x, y)$ arises naturally in retrospective sampling scenarios involving distinct populations, as in our setting. Furthermore, this design parallels classic case-control studies, where samples are drawn separately from cases and controls rather than from the overall population \citep{prentice1979logistic,fears1986logistic}, and modeling selection bias in non-random sampling designs \citep{byrne2014retrospective}.   

\subsection{Identification}
\label{sec: identification}
Denote $\theta = (\alpha_0, \beta_0^{\top}, \alpha_1, \beta_1^{\top})^{\top}$ and $p^{(1)}(x)$ and $p^{(0)}(x)$ as the marginal  probability density functions of $X$ in the source and target populations, respectively. Under model~\eqref{main-model}, we find that $p^{(0)}(x)$ and $p^{(1)}(x)$ satisfy 
\ba
\label{dr}
p^{(0)}(x) = \left[\exp(\alpha_0 + \beta_0^{^\top}x_1)\{1 - g(x)\} + 
\exp(\alpha_1 + \beta_1^{^\top}x_1)g(x)\right] p^{(1)}(x),
\ea
where $g(x) = P^{(1)}(Y = 1|X= x)$ denotes the conditional probability of $Y = 1$ given $x$ in the source population. 
The left-hand side $p^{(0)}(x)$ is identifiable from the unlabeled target sample $\{x_i: i = n_1 + 1, \ldots, n_1 + n_0\}$, and both $p^{(1)}(x)$ and $g(x)$ on the right-hand side are identifiable from the labeled source data.
Consequently, the key to  model identifiability lies in $\exp(\alpha_0 + \beta_0^{^\top}x_1)\{1 - g(x)\} + 
\exp(\alpha_1 + \beta_1^{^\top}x_1)g(x).$

The following proposition establishes sufficient conditions for identifying $\theta$. 
% In addition to model~\eqref{main-model}, we assume that the conditional outcome model $g(x)$ depends on $x_2$ and possibly also on $x_1$. Under this structure, $x_2$ serves as an instrumental variable \citep{wang2014instrumental}, influencing the outcome but not the selection mechanism, and thereby enabling identification under distribution shift.  
The proof is provided in Section~S1 of the supplementary material.

\begin{prop}
\label{identifiability}
Under the exponential tilting model~\eqref{main-model}, the parameter $\theta$ is identifiable if the following conditions hold.  (C1) For any point $(x_1, x_2)$ in the support of $x$, $g(x) = g(x_1,x_2) \in (0,1)$.
 (C2) There exist $d+1$ distinct points $t_1, \ldots, t_{d+1}$ in the support of $x_1$ such that: (a) for each $k = 1, \ldots, d+1$, the function $g(t_k, x_2)$ is not constant in $x_2$; and (b) the following matrix
	\[
	\begin{bmatrix}
		1 & \cdots & 1 \\
		t_1  & \cdots & t_{d+1}
	\end{bmatrix}
	\]
 has full rank.	
\end{prop}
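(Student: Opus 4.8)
The plan is to reduce the claim to a uniqueness statement about a single identifiable function, then to use the variation described in (C2)(a) as an instrument to extract pointwise information, and finally to invert a linear system via the rank condition (C2)(b).

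First I would note that in \eqref{dr} the left-hand side $p^{(0)}(x)$ is identifiable from the target covariates, while $p^{(1)}(x)$ and $g(x)$ are identifiable from the labeled source data, so the ratio $r(x) := p^{(0)}(x)/p^{(1)}(x)$ is an identifiable function satisfying
\[
r(x) = \exp(\alpha_0 + \beta_0^\top x_1)\{1 - g(x)\} + \exp(\alpha_1 + \beta_1^\top x_1)\,g(x).
\]
Identifiability of $\theta$ is thus equivalent to uniqueness of this representation: if some $\tilde\theta = (\tilde\alpha_0, \tilde\beta_0^\top, \tilde\alpha_1, \tilde\beta_1^\top)^\top$ yields the same $r(x)$ on the support of $x$, I must show $\tilde\theta = \theta$. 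Subtracting the two representations and writing $A(x_1) := \exp(\alpha_0 + \beta_0^\top x_1) - \exp(\tilde\alpha_0 + \tilde\beta_0^\top x_1)$ and $B(x_1) := \exp(\alpha_1 + \beta_1^\top x_1) - \exp(\tilde\alpha_1 + \tilde\beta_1^\top x_1)$, this equality reads
\[
A(x_1) + \{B(x_1) - A(x_1)\}\,g(x_1, x_2) = 0 \quad \text{on the support.}
\]

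Next I would fix $x_1 = t_k$ for each $k = 1, \dots, d+1$. There $A(t_k)$ and $B(t_k)$ are constants and the identity is affine in the scalar $g(t_k, x_2)$; condition (C2)(a) provides two values of $x_2$ at which $g(t_k, \cdot)$ differs, so evaluating and subtracting forces the coefficient $B(t_k) - A(t_k)$ to vanish, and the remaining term then gives $A(t_k) = 0$, hence $A(t_k) = B(t_k) = 0$. (Condition (C1), which keeps $g(x) \in (0,1)$, is the non-degeneracy guarantee that both class-specific factors genuinely enter the representation, so that this separation is meaningful.) Taking logarithms yields, for each $k$,
\[
(\alpha_0 - \tilde\alpha_0) + (\beta_0 - \tilde\beta_0)^\top t_k = 0, \qquad (\alpha_1 - \tilde\alpha_1) + (\beta_1 - \tilde\beta_1)^\top t_k = 0.
\]

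Finally, stacking the $d+1$ equations for the $y = 0$ parameters gives a homogeneous system $M^\top\big((\alpha_0 - \tilde\alpha_0),\,(\beta_0 - \tilde\beta_0)^\top\big)^\top = 0$, where $M$ is the $(d+1)\times(d+1)$ matrix displayed in the statement. By (C2)(b) this matrix has full rank and hence is invertible, forcing $\alpha_0 = \tilde\alpha_0$ and $\beta_0 = \tilde\beta_0$; the identical argument on the $y=1$ equations gives $\alpha_1 = \tilde\alpha_1$ and $\beta_1 = \tilde\beta_1$, so $\tilde\theta = \theta$. I expect the crux to be the middle step: recognizing that holding $x_1$ fixed while varying $x_2$ acts as an instrument that decouples the two exponential terms, which is exactly the role of (C2)(a). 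Without the non-constancy of $g(t_k, \cdot)$, a fixed $x_1$ would reveal only a single $g$-weighted combination of the two tilting factors rather than each separately; once the pointwise equalities are in hand, the rest is routine linear algebra.
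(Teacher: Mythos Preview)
Your proof is correct and follows the same overall strategy as the paper: reduce identifiability to uniqueness of the representation of $w(x;\theta)$, use the non-constancy of $g(t_k,\cdot)$ in $x_2$ to decouple the two exponential factors at each $t_k$, and then invoke the rank condition to recover the parameters. The algebraic route differs slightly. The paper divides through by $\exp(\alpha_0+\beta_0^\top x_1)\{1-g(x)\}$ and works with the odds $R(x)=g(x)/\{1-g(x)\}$, obtaining first that the \emph{difference} parameters $\gamma_0=\alpha_1-\alpha_0$, $\gamma_1=\beta_1-\beta_0$ agree at each $t_k$, and then backs out $(\alpha_0,\beta_0)$; this is where (C1) is genuinely used, since one must divide by $1-g(x)$. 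Your subtraction argument is more direct: the affine-in-$g$ identity at two distinct values of $g(t_k,\cdot)$ immediately kills both $A(t_k)$ and $B(t_k)$, yielding the linear equations for $(\alpha_0,\beta_0)$ and $(\alpha_1,\beta_1)$ in one stroke and without any division. In that sense your route is slightly cleaner and shows that (C1) is not actually needed for the algebra, only (C2); your parenthetical invocation of (C1) is harmless but inessential to your own argument.
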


The identification conditions above are relatively mild and practically reasonable. 
Condition (C1) ensures that $g(x)$ lies strictly between 0 and 1 for all $x$, which prevents degenerate cases and guarantees that both outcome categories are observable in the source population. Condition (C2)(a) characterizes $x_2$ as an instrumental variable \citep{wang2014instrumental} by requiring that $g(x)$ varies with $x_2$.  This guarantees that $x_2$ affects the conditional distribution of $Y$ in a way that cannot be explained by $x_1$ alone, thereby enabling identification. Condition (C2)(b) imposes a standard rank condition on the design matrix of $x_1$, which ensures the parameters in $\theta$ are uniquely recoverable. These conditions are substantially milder and more readily verifiable in practice compared to the strong support overlap or anchor variable assumptions commonly adopted in the distribution shift literature (e.g., \citealp{maity2022understanding}).

%%%Manli's paragraph
% These conditions are substantially milder and more readily verifiable in practice compared to the strong support overlap or anchor variable assumptions commonly adopted in the distribution shift literature (e.g., \citealp{maity2022understanding}). In our setting, $x_2$ functions as an instrumental variable \citep{wang2014instrumental}, influencing the outcome but not the sampling mechanism, thereby enabling identification under distributional shift. This idea extends the classical use of instrumental variables for addressing endogeneity in causal inference and handling nonignorable missingness (e.g., \citealp{angrist1996identification}; \citealp{wang2014instrumental}). This insight informs the practical partitioning of covariates into $(X_1, X_2)$ for identification purposes.

\subsection{Parameters estimation}
\label{sec: estimation}
% With the identification of model parameters established in Section~\ref{sec: identification}, we now turn to the estimation of $\theta$.

Recall that $g(x) = p^{(1)}(y=1|x)$ denotes the source posterior probability.
Based on the labeled source data $\{(x_i, y_i): i = 1, \dots, n_1\}$ and the unlabeled target data $\{x_i: i = n_1 + 1, \dots, n_1 + n_0\}$, the semiparametric log-likelihood under model~\eqref{dr} is 
\ba
\label{likhod-1}  
\ell^{\prime} &=& \sum_{i=1}^{n_1} [ \log\{ p^{(1)}(y_i|x_i)\} + \log\{ p^{(1)}(x_i)\} ] + \sum_{j=1}^{n_0} \log\{ p^{(0)}(x_{n_1+j})\} \notag \\  
&=&  \sum_{i=1}^{n_1}   [ y_i\log\{g(x_i)\} + (1-y_i)\log\{1-g(x_i)\} ]  \notag \\
&&+ \sum_{i=1}^{n_0 + n_1}\log\{ p^{(1)}(x_i)\}  + \sum_{j=1}^{n_0}  \log\{ w(x_{n_1+j};\theta) \} 
\ea
with  $w(x;\theta) = \exp(\alpha_0 + \beta_{0}^{^\top}x_1)\left\{1-g(x)\right\} + 
\exp(\alpha_1 + \beta_{1}^{^\top}x_1)g(x)$. 
The likelihood is semiparametric, as no parametric form is assumed for $p^{(1)}(x)$. To ensure that $p^{(0)}(x)$ is a valid density under model~\eqref{dr}, the normalizing condition $\int w(x; \theta)\, p^{(1)}(x) \, dx = 1$ must hold.
Although semiparametric models offer advantages such as robustness to model mis-specification, they also present significant challenges for parameter estimation. In our case,
direct maximization of the log-likelihood $\ell^{\prime}$ is not feasible due to the presence of infinite-dimensional nuisance functions $g(x)$ and $p^{(1)}(x)$.

To address this computational challenge, we propose a two-step estimation procedure. Leveraging the identities $p^{(1)}(y| x) = p(y|x, s = 1) $  and $  p^{(k)}(x) = p(x| s = k)$ for $ k = 0, 1$, we apply Bayes' rule to re-express the log-likelihood up to a constant independent of $\theta$ as
\ba
\label{loglik-2}
\ell = \sum_{i=1}^{n_1} [ \log\{ p^{(1)}(y_i|x_i)\} + \log\{ p(s=1|x_i)/p(s=1)\} ] + \sum_{j=1}^{n_0} \log\{ p(s=0|x_{n_1+j})/p(s=0)\}. 
\ea
It follows from model~\eqref{dr} that $p(s = 0| x) = \rho^* \cdot w(x; \theta) \cdot p(s = 1| x)$,
% \[
% p(s = 0| x) = \rho^* \cdot w(x; \theta) \cdot p(s = 1| x),
% \]
which in turn yields
\begin{align}
\label{prsco-x}
p(s = 1| x) = \frac{1}{1 + \rho^* \cdot w(x; \theta)},
\end{align}
where \( \rho^* = p(s = 0)/p(s = 1) \) denotes the sampling ratio between the target and source populations. Replacing \( \rho^* \) with its empirical estimator $n_0 / n_1$ and substituting~\eqref{prsco-x} into~\eqref{loglik-2}, we obtain the following equivalent log-likelihood:
\ba
\ell &=& \sum_{i=1}^{n_1} \left[y_i\log\{g(x_i)\} + (1-y_i)\log\{1-g(x_i)\}\right]  \notag \\
&& \hspace{1cm} + \sum_{j=1}^{n_0}\log\{w(x_{n_1+j};\theta)\}  -\sum_{i=1}^{n_1+n_0}\log\{n_1+n_0\cdot w(x_i;\theta)\}.
\ea
This log-likelihood above and the density ratio function $w(x; \theta)$ both involve the unknown conditional probability function $g(x)$. While $g(x)$ cannot be identified from the marginal covariate distribution alone, the labeled source data $\{(x_i, y_i)\}_{i=1}^{n_1}$ enable us to estimate $g(x)$ using any standard regression method. 
For simplicity and computational tractability, we specify a parametric logistic regression model:
\begin{equation}
\label{eq-g}
g(x; \xi) := \frac{1}{1 + \exp(\xi_0 + \xi_1^{^\top} x)},
\end{equation}
where $\xi = (\xi_0, \xi_1^{^\top})^{^\top}$ denotes the nuisance parameter vector. We henceforth replace $g(x)$ by $g(x; \xi)$, and correspondingly write $w(x; \theta)$ as $w(x; \theta, \xi)$. 

We then propose a two-step procedure to estimate the model parameters $\theta$ and $\xi$ using all available data from both the source and target populations. In the first step, we estimate $\xi$ based solely on the labeled source data by fitting a regression model for the conditional probability $g(x) = p^{(1)}(Y = 1|x)$. Specifically, we maximize the partial log-likelihood corresponding to the first term in $\ell$:
\begin{equation}
\label{est-xi}
\hat{\xi} = \arg\max\limits_{\xi} \sum_{i=1}^{n_1} \left[y_i\log g(x_i;\xi) + (1 - y_i)\log \{1 - g(x_i;\xi)\} \right].
\end{equation}
This step relies only on the source data, where labels  are available.
% , which is sufficient for identifying and consistently estimating the conditional probability function $g(x)$. 
The unlabeled target data, by contrast, provide no information for $g(x)$ and are thus excluded at this stage. This corresponds to fitting a standard logistic regression model for the source population. 
% , and the maximization can be conveniently implemented using software such as the \texttt{glm} function in R.

In the second step, with $\xi$ fixed at $\hat{\xi}$, we estimate $\theta$ by maximizing the part of the log-likelihood corresponding to the sampling indicators:
\ba
\label{est-theta}
\hat{\theta} = \arg\max\limits_{\theta} 
\left[
\sum_{j=1}^{n_0} \log\{w(x_{n_1 + j}; \theta, \hat{\xi})\} 
- \sum_{i=1}^{n_1 + n_0} \log\{n_1 + n_0 \cdot w(x_i; \theta, \hat{\xi})\}
\right].
\ea
This step focuses on estimating the distribution shift parameters by maximizing the conditional likelihood of the sampling indicators $\left\{s_i\right\}_{i=1}^{n_1 + n_0}$ given the covariates $\left\{x_i\right\}_{i=1}^{n_0 + n_1}$.

% \begin{remark}
% The empirical likelihood method is a classical and widely used approach for parameter estimation in semiparametric DRMs; see \citet{qin1994empirical} and \citet{owen2001empirical}. Compared to this approach, our proposed two-step estimation procedure avoids direct estimation of the nuisance density $p^{(1)}(x)$, leading to reduced computational complexity. Moreover, both methods ultimately rely on the knowledge or estimation of $g(x)$. Conditional on $g(x)$ or its estimate, our second-step objective function coincides with the profile likelihood in the empirical likelihood framework. Hence, the resulting estimator of $\theta$ is asymptotically equivalent to the empirical likelihood estimator.
% \end{remark}
\section{Functionals of interest in the target population}
\label{sec:est-functional}
Estimators of $(\xi, \theta)$ not only characterize distributional shift between the source and target populations, but also enable estimation and inference for target-specific functionals. We consider two main goals: (i) estimating $\eta=\Ex_0[h(X, Y)]$, the expectation under the target distribution $P^{(0)}$,  for a general function $h$, and (ii) evaluating classification performance.
For the latter, we focus on some standard classification metrics, including the ROC curve and AUC.

We introduce several notations to streamline the presentation of target population estimators. Define the joint weight function $w(x, y; \theta) = y \exp(\alpha_1 + \beta_1^{\top} x_1) + (1 - y) \exp(\alpha_0 + \beta_0^{\top} x_1)$, and the conditional components $w_1(x; \theta, \xi) = \exp(\alpha_1 + \beta_1^{\top} x_1) g(x; \xi)$ and $w_0(x; \theta, \xi) = \exp(\alpha_0 + \beta_0^{\top} x_1) \{1 - g(x; \xi)\}$, so that $w(x; \theta, \xi) = w_1(x; \theta, \xi) + w_0(x; \theta, \xi)$. 
We use $\Ex_0$ to denote expectation under the $P^{(0)}$, and $\Ex_1$ under the $P^{(1)}$.

% Under model~\eqref{main-model}, the conditional distribution of $Y$ given $x$ in the target population is estimated via Bayes’ rule as $\hat{p}^{(0)}(Y = 1 | x) = w_1(x; \hat{\theta}, \hat{\xi}) / w(x; \hat{\theta}, \hat{\xi})$ and $\hat{p}^{(0)}(Y = 0 | x) = w_0(x; \hat{\theta}, \hat{\xi}) / w(x; \hat{\theta}, \hat{\xi})$.

\subsection{Estimating expectations of general functions}
\label{sec:est-expectation}
% A common inferential goal is to estimate the expectation of a user-specified function $h(x, y)$ under the target distribution, defined as $\eta = \mathbb{E}_0\{h(X, Y)\}$. 

Under the exponential tilting model~\eqref{main-model}, the $\eta = \Ex_0\{h(X, Y)\}$ can be rewritten as $\eta = \Ex_1\{h(X, Y) w(X, Y; \theta)\}$, leading to the importance weighted (IW) estimator:
\begin{equation}
	\label{e-ipw}
	\hat{\eta}_{\mathrm{IW}} = \frac{1}{n_1} \sum_{i=1}^{n_1} h(x_i, y_i) w(x_i, y_i; \hat{\theta}).
\end{equation}
Alternatively, using the conditional distribution of $Y|X$ in the target population, $\eta$ admits the equivalent form
\[
\eta = \Ex_0\Big\{ h(X, 1) \cdot \frac{w_1(X; \theta, \xi)}{w(X; \theta, \xi)} + h(X, 0) \cdot \frac{w_0(X; \theta, \xi)}{w(X; \theta, \xi)}\Big\},
\]
which yields the regression-type (REG) estimator:
\begin{equation}
	\label{e-reg}
	\hat{\eta}_{\mathrm{REG}} = \frac{1}{n_0} \sum_{j=1}^{n_0}\Big\{ h(x_{n_1 +j}, 1) \cdot \frac{w_1(x_{n_1 +j}; \hat{\theta}, \hat{\xi})}{w(x_{n_1 +j}; \hat{\theta}, \hat{\xi})} + h(x_j, 0) \cdot \frac{w_0(x_{n_1 +j}; \hat{\theta}, \hat{\xi})}{w(x_{n_1 +j}; \hat{\theta}, \hat{\xi})} \Big\}.
\end{equation}
When $h(x, y) = y$, the target functional reduces to the mean $\mu = \Ex_0(Y)$, i.e., the prevalence of $Y=1$ in the target population, with corresponding estimators:
\begin{equation}
\label{ey-ipwreg}
	\hat{\mu}_{\mathrm{IW}} = \frac{1}{n_1} \sum_{i=1}^{n_1} y_i \cdot w(x_i, y_i; \hat{\theta}) \quad \text{and} \quad
	\hat{\mu}_{\mathrm{REG}} = \frac{1}{n_0} \sum_{j=1}^{n_0} \frac{w_1(x_{n_1 + j}; \hat{\theta}, \hat{\xi})}{w(x_{n_1 + j}; \hat{\theta}, \hat{\xi})}.
\end{equation}

\subsection{Estimating AUC and ROC curve}
\label{sec:auc-roc}
% For binary outcomes, a key inferential target is the classification performance of a given score function $c(x)$ under the target population, where $c(x)$ is expected to assign a higher value to an observation that is more likely to have $Y = 1$. 

{For binary outcomes, an important inferential target is the classification performance of a score function \( c(x) \) under the target population, where \( c(x) \) assigns higher values to observations more likely to have \( Y = 1 \). Such a score function may represent, for example, the posterior probability \( P^{(0)}(Y = 1 | X = x) \), or a continuous-valued biomarker commonly used in biomedical studies to differentiate diseased and non-diseased individuals \citep{hu2024receiver}. In practice, \( c(x) \) may either be pre-specified or estimated from labeled source data.}

We shall consider two commonly used diagnostic tools for evaluating such performance: (a) the ROC curve {as a function of the threshold $u$}:
\begin{equation}
\label{roc}
ROC(u) = 1 - F_1\{F_0^{-1}(1 - u)\},
\end{equation}
where $F_y$ denotes the cumulative distribution functions of $c(X)$ under $P^{(0)}(\cdot | Y= y)$, $y \in \{0, 1\}$, and (b) the AUC,
\begin{equation}
\label{auc}
AUC = \int_0^1 ROC(u) \, du = \int F_0(u) \, dF_1(u).
\end{equation}

% It is well known that the AUC represents the probability that a randomly selected positive instance is ranked higher than a randomly selected negative one by the classifier~\citep{hanley1982meaning,li2023estimating}. This interpretation makes AUC widely used in practice, as it offers a threshold-free measure of overall discriminative ability and remains informative even under class imbalance.

To estimate the ROC curve and AUC under the target distribution, we approximate the  \( F_1 \) and \( F_0 \) using Bayes' rule. Since \( Y \) is unobserved in the target data, we express the class-conditional densities as
\[
p_0(x | Y = 1) = \frac{p_0(Y = 1| x) \cdot p_0(x)}{\Ex_0(Y)}, \quad 
p_0(x | Y = 0) = \frac{p_0(Y = 0|x) \cdot p_0(x)}{1 - \Ex_0(Y)}.
\]
The \( p_0(x) \) is estimated empirically using the target covariates \( \{x_{n_1 +j}\}_{j=1}^{n_0} \), and the posterior probabilities \( p_0(Y = y | x) \) are estimated via Bayes’ rule under model~\eqref{main-model} as
\begin{equation}
    \label{estimate_ptx}
{\widehat{p}}^{(0)}(Y = 1|x) = \frac{w_1(x; \hat{\theta}, \hat{\xi})}{w(x; \hat{\theta}, \hat{\xi})}, \quad
{\widehat{p}}^{(0)}(Y = 0| x) = \frac{w_0(x; \hat{\theta}, \hat{\xi})}{w(x; \hat{\theta}, \hat{\xi})}.
\end{equation}

{If $c(x)$ is prespecified, substituting \eqref{estimate_ptx} into the empirical analogs of \( F_1 \) and \( F_0 \) yields
\begin{align}
\label{emhatF1}
\widehat{F}_1(u) &= \frac{1}{n_0 \hat{\mu}} \sum_{j=1}^{n_0} \widehat{p}^{(0)}(Y = 1 | x_{n_1+j}) \cdot I\{c(x_{n_1+j}) \leq u \}, \\
\label{emhatF0}
\widehat{F}_0(u) &= \frac{1}{n_0 (1 - \hat{\mu})} \sum_{j=1}^{n_0} \widehat{p}^{(0)}(Y = 0 | x_{n_1+j}) \cdot I\{c(x_{n_1+j}) \leq u \},
\end{align}
where $\hat{\mu}=n_0^{-1}\sum_{j=1}^{n_0} \widehat{p}^{(0)}(Y = 1 | x_{n_1+j}) $. 
% where \(\hat{\mu}\) may be taken as either the IW estimator $\hat{\mu}^{\mathrm{IW}}$ or regression estimator  $\hat{\mu}^{\mathrm{REG}}$. 
Then, the ROC curve and AUC are estimated in a plug-in manner via
\begin{align}
\label{eroc}
\widehat{ROC}(u) &= 1 - \widehat{F}_1\left( \widehat{F}_0^{-1}(1 - u) \right), ~~
\widehat{AUC} = \int \widehat{F}_0(u) \, d\widehat{F}_1(u).
\end{align}}

{If  $c(x)$ depends on $({\theta},\xi)$,   the estimated classifier $\widehat{c}(x)$ takes the form $c(x; \hat{\theta},\hat\xi)$.   A common example is the plug-in classifier based on the estimated posterior probability $\widehat{p}^{(0)}(Y = 1| x)$ in~\eqref{estimate_ptx}.
We can estimate $F_1$ and $F_0$  by replacing $c(x_{n_1+j})$ in \eqref{emhatF1} and \eqref{emhatF0} with $\widehat{c}(x_{n_1+j})=c(x_{n_1+j}; \hat{\theta},\hat\xi)$. 
Let $\widetilde F_1(u)$ and $\widetilde F_0(u)$
denote the corresponding estimators of $F_1$ and $F_0$, respectively. The ROC curve and AUC
are then estimated by 
\begin{align}
\label{eroc2}
\widetilde{ROC}(u) &= 1 - \widetilde{F}_1\left( \widetilde{F}_0^{-1}(1 - u) \right), ~~
\widetilde{AUC} = \int \widetilde{F}_0(u) \, d\widetilde{F}_1(u).
\end{align}
}
% The construction of $\widehat{\text{ROC}}(u)$ and $\widehat{\text{AUC}}$ depends on the choice of the score function $c(x)$, which can be either pre-specified or data-driven. In many applications (e.g., biomedical studies), $c(x)$ corresponds to a known function or biomarker \citep{hu2024receiver}. Alternatively, $c(x)$ may be estimated from data, often as $\widehat{p}^{(0)}(Y = 1 | x)$, which depends on $\hat{\theta}$ and $\hat{\xi}$ and reflects the fitted model for the target population.
\section{Asymptotic properties}
\label{sec: asymp}
Let $\xi_0$ and $\theta_0$ denote the true values of $\xi$ and $\theta$, respectively.  Assume the $n_0 / n_1 = \rho \in (0, \infty)$ is a constant in our asymptotic study. 
% To streamline the exposition, detailed definitions of auxiliary quantities  involved in the asymptotic variance expressions are deferred to the supplementary material.
The following result establishes the joint asymptotic normality of $(\hat{\xi}, \hat{\theta})$ obtained from the proposed two-step estimation procedure.

\begin{theorem}
\label{thm1}
Suppose that models~\eqref{main-model} and~\eqref{eq-g} are correctly specified. Under Conditions 1-6 in Section S2 of the supplementary material, as $N = n_0 + n_1 \to \infty$, we have
\[
\sqrt{N}(\hat{\xi}^{\top}-\xi_0^{\top}, \hat{\theta}^{\top} - \theta_0^{\top})^\top
\convergeto \mathcal{N}\big(0,(1+\rho)\Sigma\big)
\]
where $\Sigma$ is defined in (A.12) of the supplementary material.
\end{theorem}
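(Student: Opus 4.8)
The plan is to read the two-step procedure as a single joint Z-estimator and apply standard smooth estimating-equation asymptotics, paying attention to two features: the retrospective (conditional-likelihood) nature of the second stage, and the propagation of first-stage uncertainty into $\hat\theta$. Write $N=n_0+n_1$ and stack the two first-order conditions. From \eqref{est-xi} the first-stage score is the ordinary logistic-regression score evaluated on the source sample, $U_1(\xi)=\sum_{i:s_i=1}\nabla_\xi\log p^{(1)}(y_i\mid x_i;\xi)$, which does not involve $\theta$. Differentiating \eqref{est-theta} and simplifying via the representation \eqref{prsco-x} gives the second-stage score
\[
U_2(\theta,\xi) = -\sum_{i=1}^{N}\{s_i-\pi(x_i;\theta,\xi)\}\,\nabla_\theta\log w(x_i;\theta,\xi),
\qquad
\pi(x;\theta,\xi)=\frac{1}{1+\rho^* w(x;\theta,\xi)},
\]
with $\rho^*=n_0/n_1$. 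Because $U_1$ is free of $\theta$, the stacked map $\Psi_N=(U_1^\top,U_2^\top)^\top$ has a block-triangular Jacobian, which is exactly what encodes the sequential structure.

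The key structural fact is that model \eqref{dr} gives $p^{(0)}(x)=w(x;\theta_0,\xi_0)\,p^{(1)}(x)$, so under the pooled sampling scheme the fitted $\pi(\cdot;\theta_0,\xi_0)$ coincides with the true conditional probability of $s=1$ given $x$. Consequently $\Ex\{U_2(\theta_0,\xi_0)\}=0$: the expected source and target contributions cancel exactly, which is the retrospective adjustment making the conditional-likelihood score unbiased. Correct specification of \eqref{eq-g} likewise gives $\Ex\{U_1(\xi_0)\}=0$. To obtain the limiting law of the score, I would split each component into its source part (a sum over the $n_1$ i.i.d.\ draws from $P^{(1)}$) and its target part (a sum over the $n_0$ i.i.d.\ draws from $P^{(0)}$), which are mutually independent; applying a multivariate CLT to each and combining, with $n_1/N\to 1/(1+\rho)$ and $n_0/N\to\rho/(1+\rho)$, yields $\sqrt N\,\Psi_N(\xi_0,\theta_0)\convergeto\mathcal N(0,V)$ for an explicit $V$. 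Note that $U_1$ and the source part of $U_2$ share the same source observations, so they are correlated and this cross-covariance contributes to $V$.

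Consistency of $(\hat\xi,\hat\theta)$ follows from the identifiability established in Proposition~\ref{identifiability} together with the regularity and compactness conditions (Conditions~1--6), by standard M-estimation arguments; since every objective here is smooth in the parameters, no empirical-process control is needed at this stage. A mean-value expansion of $\Psi_N(\hat\xi,\hat\theta)=0$ about $(\xi_0,\theta_0)$, with a uniform law of large numbers for the Jacobian, gives
\[
\sqrt N\begin{pmatrix}\hat\xi-\xi_0\\ \hat\theta-\theta_0\end{pmatrix}
= -J^{-1}\sqrt N\,\Psi_N(\xi_0,\theta_0)+o_p(1),
\qquad
J=\plim \tfrac1N\,\nabla\Psi_N
=\begin{pmatrix}J_{\xi\xi}&0\\ J_{\theta\xi}&J_{\theta\theta}\end{pmatrix},
\]
the block-triangular form reflecting $\nabla_\theta U_1\equiv 0$. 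Slutsky's theorem then delivers asymptotic normality with sandwich covariance $J^{-1}V J^{-\top}$; factoring out the sampling-fraction constant identifies the $(1+\rho)\Sigma$ form claimed, with $\Sigma$ as in (A.12).

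The main obstacle is the interplay between the two features above. First, the off-diagonal block $J_{\theta\xi}$ is a generated-regressor correction: the uncertainty in $\hat\xi$ must be propagated into $\hat\theta$, and getting this block right is essential for the sandwich variance. Second, and more delicate, $U_2$ is a conditional-likelihood score over a \emph{retrospectively} pooled sample, so neither its mean-zero property nor its covariance is that of a standard prospective likelihood; establishing the cancellation at $\theta_0$ and correctly assembling $V$ from the correlated source contributions (shared between $U_1$ and $U_2$) and the independent target contributions, weighted by the right $(1+\rho)$ factor, is where the real work lies. The discontinuous-classifier issues that later require empirical-process machinery do not arise here, since Theorem~\ref{thm1} concerns only the smooth parameter estimators.
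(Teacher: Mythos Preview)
Your proposal is correct and matches the paper's proof (Section~S3): the paper likewise stacks the first- and second-stage scores into a joint estimating equation, obtains consistency via Theorem~5.9 of \cite{van2000asymptotic}, applies the CLT to the score at the truth, Taylor-expands, and reads off the sandwich variance from the block-triangular Jacobian (their $A^{-1}E(A^{-1})^\top$, then extracting the $(\xi,\theta)$ block). One small correction worth noting: the cross-covariance you anticipate between $U_1$ and the source part of $U_2$ in fact vanishes---the logistic score has conditional mean zero given $x$, while the source contribution to $U_2$ is a function of $x$ alone---so the paper's middle matrix $E$ has a zero off-diagonal block between the $\xi$- and $\theta$-scores.
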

% We put the proof process of Theorem~\ref{thm1} in the Supplementary materials. 
% This theorem ensures us to construct a Wald type confidence interval for $\xi$ and $\theta$, which needs a consistent estimator of $\Sigma$.
% Theorem~\ref{thm1} guarantees the consistency of $\hat{\xi}$ and $\hat{\theta}$,
% then we can construct a plug-in estimator for
% each element of $\Sigma$. For example, the estimator of $A_{1}$ is made like
% \begin{equation}
%     \label{example-est}
%     \hat{A_{1}} = \frac{1}{n}\sum_{i=1}^{n} \nabla_{\xi}^{\otimes 2}g(x_i;\hat\xi)/\left(g(x_i;\hat\xi)(1-g(x_i;\hat\xi))\right). 
% \end{equation}
% The estimators for $A_{i}, i =1,\ldots,6$ and $E_{k},k=1,2,3$, can be constructed in a similar way to \eqref{example-est}. Inserting $\hat{\xi},\hat{\theta}$ and $\hat{A}_{i},i =1,\ldots,6$, and $\hat{E}_{k},k=1,2,3$ into $\Sigma$, we give a consistent estimator $
% \hat{\Sigma}$.
% We consider two estimators for $\eta = \Ex_0(h(X, Y))$: $\hat{\eta}_{\mathrm{PW}}$ and $\hat{\eta}_{\mathrm{REG}}$, given in~\eqref{e-ipw} and~\eqref{e-reg}.  Their asymptotic distributions are given in Theorem~\ref{thm2}.
Based on the estimators defined in~\eqref{e-ipw} and~\eqref{e-reg}, we derive their asymptotic distributions in Theorem~\ref{thm2}.
\begin{theorem}
\label{thm2} 
Under the conditions of Theorem~\ref{thm1}, as $N = n_0 + n_1 \to \infty$,  we have
\[
\sqrt{N}(\hat{\eta}_{\mathrm{IW}}  - \eta_0) \convergeto \mathcal{N}\big(0, (1+\rho) \sigma_{\mathrm{IW}}^2 \big), ~~ \sqrt{N}(\hat{\eta}_{\mathrm{REG}}  - \eta_0) \convergeto \mathcal{N}\big(0, (1+\rho) \sigma_{\mathrm{REG}}^2\big)\,. 
\]
 where $\sigma_{\mathrm{IW}}^2 $ and 
 $\sigma_{\mathrm{REG}}^2$ are defined in (A.13) and (A.14) of the supplementary material.
\end{theorem}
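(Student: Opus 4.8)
The plan is to linearize each estimator about the true parameters and then invoke the asymptotically linear (influence-function) representation of $(\hat\xi,\hat\theta)$ that underlies Theorem~\ref{thm1}. Since both $\hat\eta_{\mathrm{IW}}$ and $\hat\eta_{\mathrm{REG}}$ are smooth functions of $(\hat\theta,\hat\xi)$ averaged over the data, a first-order Taylor expansion decomposes $\sqrt N(\hat\eta-\eta_0)$ into a ``direct'' sampling term that holds the parameters fixed at their truth and a ``plug-in'' correction term proportional to $\sqrt N(\hat\theta-\theta_0,\hat\xi-\xi_0)$. The crucial point is that the direct term and the correction term are built from overlapping samples, so they are correlated; tracking this correlation is exactly what yields the sandwich form of $\sigma_{\mathrm{IW}}^2$ and $\sigma_{\mathrm{REG}}^2$.

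For the IW estimator, note that $w(x,y;\theta)$ does not depend on $\xi$, so I would write
\[
\sqrt N(\hat\eta_{\mathrm{IW}}-\eta_0)
=\frac{\sqrt N}{n_1}\sum_{i=1}^{n_1}\bigl[h(x_i,y_i)\,w(x_i,y_i;\theta_0)-\eta_0\bigr]
+B_{\mathrm{IW}}^{\top}\sqrt N(\hat\theta-\theta_0)+o_p(1),
\]
where $B_{\mathrm{IW}}=\Ex_1[h(X,Y)\nabla_\theta w(X,Y;\theta_0)]$ is the probability limit of the averaged gradient. The direct term is a scaled iid average over the source sample with mean zero, because the importance-weighting identity gives $\eta_0=\Ex_1[h\,w(\cdot;\theta_0)]$. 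For the REG estimator the same expansion applies to the per-observation map $m(x;\theta,\xi)=h(x,1)\,w_1/w+h(x,0)\,w_0/w$ averaged over the target sample; here the direct term is $\frac{\sqrt N}{n_0}\sum_j[m(x_{n_1+j};\theta_0,\xi_0)-\eta_0]$ (mean zero, since $\Ex_0[m(X;\theta_0,\xi_0)]=\eta_0$ by the tower property and $w_1/w=p^{(0)}(Y=1\mid x)$ at the truth), and the correction now involves both $\hat\theta-\theta_0$ and $\hat\xi-\xi_0$ through their respective gradient limits.

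I would then substitute the expansion $\sqrt N(\hat\xi-\xi_0,\hat\theta-\theta_0)=N^{-1/2}\sum_{i=1}^N\psi_i+o_p(1)$ from the proof of Theorem~\ref{thm1}, in which $\psi_i$ takes different forms on source and target indices. Writing the direct sampling term also as $N^{-1/2}\sum_{i=1}^N\phi_i$ (with $\phi_i$ supported on the source indices for IW and on the target indices for REG, rescaled by $N/n_1$ or $N/n_0$), the whole quantity collapses to a single sum $N^{-1/2}\sum_{i=1}^N(\phi_i+B^{\top}\psi_i)$ of independent, mean-zero, triangular-array summands. A Lindeberg--Feller central limit theorem delivers asymptotic normality, and the limiting variance equals $\lim N^{-1}\sum_i\Var(\phi_i+B^{\top}\psi_i)$; splitting this over the source and target blocks and using $n_1/N\to(1+\rho)^{-1}$ and $n_0/N\to\rho/(1+\rho)$ produces both the common factor $(1+\rho)$ and the sandwich expressions recorded in (A.13)--(A.14).

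I expect the main obstacle to be the bookkeeping of the cross-covariance between the direct and estimation-error terms: for IW the source observations enter both $\phi_i$ and $\psi_i$, so the variance is not merely the sum of a sampling variance and an estimation variance but contains a cross-term that must be propagated through the explicit form of $\psi_i$ from Theorem~\ref{thm1}. A secondary technical point is justifying that the Taylor remainders vanish after multiplication by $\sqrt N$, which requires uniform control of the second derivatives of $w$ and $m$ in a neighborhood of $(\theta_0,\xi_0)$ together with the moment conditions on $h(X,Y)$; here I would use the boundedness of $g(x;\xi)$ away from $0$ and $1$ guaranteed by Condition~(C1) to keep $w$ bounded away from zero and hence the ratios $w_1/w,\,w_0/w$ smooth.
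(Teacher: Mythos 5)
Your proposal is correct and follows essentially the same route as the paper: the paper stacks the two $\eta$-estimating equations together with the scores for $(\xi,\theta)$ into a single augmented Z-estimator $S_{n_0+n_1}(\omega)=0$, Taylor-expands once, and inverts the block Jacobian $A$, which performs exactly the substitution of the influence function of $(\hat\xi,\hat\theta)$ into the linearized direct-plus-correction decomposition that you carry out by hand. Your identification of the averaged gradients (the paper's $A_4$ for IW, $A_5$ and $A_6$ for REG), of the mean-zero direct terms, and of the cross-covariance between the direct and plug-in terms (which enters $\sigma^2_{\mathrm{IW}}$ through $E_5$ via the $\xi$-score) matches the paper's variance bookkeeping, so the two arguments differ only in presentation.
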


% The results in Theorem~\ref{thm2} can be used to construct the confidence intervals for $\eta$. Similarly to \eqref{example-est}, we provide a plug-in estimator for the remaining $E_k, k = 4,\ldots,7$. 
% Hence, the plug-in estimator for $\sigma_{ipw}$ and $\sigma_{reg}$
% are constructed by inserting $\hat{A}_{i}$s and $\hat{E}_k$s, denoted as $\hat{\sigma}_{ipw}^2$ and $\hat{\sigma}_{reg}^2$ .
% Based on the results in Theorem~\ref{thm1}, it can be verified that $\hat{\sigma}_{ipw}^{2}$ and $\hat{\sigma}_{reg}^{2}$ are consistent.
% Then, a $100(1-a)$\% confidence interval of $\eta$ is $\left[\hat{\eta}_{ipw}-Z_{1-a/2}\sqrt{1+\rho}\hat{\sigma}_{ipw},\hat{\eta}_{ipw}+  Z_{1-a/2}\sqrt{1+\rho}\hat{\sigma}_{ipw}\right]$ or  $\left[\hat{\eta}_{reg}-Z_{1-a/2}\sqrt{1+\rho}\hat{\sigma}_{reg},\hat{\eta}_{reg}+
% \right.  \\
% \left.Z_{1-a/2}\sqrt{1+\rho}\hat{\sigma}_{reg}\right]$, where $Z_{1-a/2}$ is the $(1-a/2)$th quantile of $N(0,1)$.

Proofs of Theorems~\ref{thm1} and~\ref{thm2} are deferred to Section~S3 of the supplementary material. Motivated by Section~\ref{sec:auc-roc}, we next examine two representative types of classifiers: one is fully known and fixed, often based on a pre-specified function of covariates; the other is data-driven, typically constructed from estimated quantities such as posterior probabilities. Below, we establish the asymptotic properties of the ROC and AUC estimators under both scenarios.

\paragraph{Fixed classifier:} In the following theorem we study the asymptotic properties of the proposed $\widehat{\mathrm{AUC}}$ and $\widehat{\mathrm{ROC}}(u)$ in \eqref{eroc} for a classifier $c(x)$ that is fully specified and does not depend on estimated parameters.  The proof is provided in Section~S4 of the supplementary material.
\begin{theorem}
\label{thm3}
Under Conditions 1--7 in Section S2 of the supplementary material, as $N = n_0 + n_1 \to \infty$, we have
\[
\sqrt{N}(\widehat{AUC} - {AUC} ) \convergeto \mathcal{N}(0,\sigma_{\mathrm{AUC}}^2), ~~ \sqrt{N}( \widehat{ROC}(u) - ROC(u)) \convergeto \mathcal{N}\big(0,\sigma_{\mathrm{ROC}}^2(u)\big)\,,
\] 
where $\sigma_{\mathrm{AUC}}^2$ and $\sigma_{\mathrm{ROC}}^2(u)$ are defined in (A.36) and (A.44) of the supplementary material, respectively.
\end{theorem}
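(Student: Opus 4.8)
The plan is to express $\widehat{ROC}(u)$ and $\widehat{AUC}$ as smooth (Hadamard-differentiable) functionals of the pair of estimated weighted distribution functions $(\widehat{F}_1,\widehat{F}_0)$ and to apply the functional delta method, after first proving a joint, uniform-in-$u$ weak convergence result for $\sqrt{N}(\widehat{F}_1-F_1)$ and $\sqrt{N}(\widehat{F}_0-F_0)$. The essential simplification in the fixed-classifier case is that the indicator $I\{c(x)\le u\}$ does not depend on $(\theta,\xi)$, so all dependence of $\widehat{F}_y(u)$ on the estimated parameters enters only through the smooth posterior weights $\widehat{p}^{(0)}(Y=y|x)=w_y(x;\hat\theta,\hat\xi)/w(x;\hat\theta,\hat\xi)$. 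This decouples the discontinuity carried by the indicator from the parameter-estimation error and lets me handle the two by Donsker theory and by a Taylor expansion, respectively.

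First I would write each $\widehat{F}_y(u)$ as the ratio of a weighted empirical average over the target covariates to the normalizer $\hat\mu$ (respectively $1-\hat\mu$), and linearize. A Taylor expansion of $\widehat{p}^{(0)}(Y=y|x)$ about $(\theta_0,\xi_0)$ together with a standard ratio expansion for the $\hat\mu$ normalization gives, uniformly in $u$,
\[
\sqrt{N}\{\widehat{F}_y(u)-F_y(u)\}=\sqrt{N}\,\mathbb{G}_{y}(u)+D_y(u)^\top\sqrt{N}\big((\hat\theta^\top,\hat\xi^\top)^\top-(\theta_0^\top,\xi_0^\top)^\top\big)+o_p(1),
\]
where $\mathbb{G}_y(u)$ is the centered weighted empirical process of the target data evaluated at the true parameters and $D_y(u)$ is the deterministic gradient of the population functional $F_y(u;\theta,\xi)$. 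Substituting the asymptotically linear representation of $(\hat\theta,\hat\xi)$ supplied by Theorem~\ref{thm1} rewrites the second term as a pooled-sample average, producing a single influence-function representation for each $\widehat{F}_y$. Care is needed here because $\hat\theta$ is estimated from both the source and target samples, so its target-data contribution is correlated with $\mathbb{G}_y(u)$; this correlation must be tracked to obtain the correct covariance, and the two pieces must be placed on a common $\sqrt{N}$ scale using the fixed ratio $\rho$.

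Next I would establish joint weak convergence of $\big(\sqrt{N}(\widehat{F}_1-F_1),\sqrt{N}(\widehat{F}_0-F_0)\big)$ in $\ell^\infty(\bbR)^2$. The class $\{x\mapsto I\{c(x)\le u\}:u\in\bbR\}$ is VC and hence Donsker, and multiplication by the bounded smooth weights preserves this property, so the empirical-process part is tight; the estimation correction contributes only a deterministic linear functional of the jointly asymptotically normal score from Theorem~\ref{thm1}. Convergence of the finite-dimensional distributions then follows from the multivariate central limit theorem, and the limit is a mean-zero Gaussian process whose covariance has the sandwich form anticipated in the theorem, combining the intrinsic weighted-empirical-process variance with the variance propagated from $(\hat\theta,\hat\xi)$.

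Finally I would apply the functional delta method. For $\widehat{ROC}(u)=1-\widehat{F}_1(\widehat{F}_0^{-1}(1-u))$ at fixed $u$, the map $(F_1,F_0)\mapsto 1-F_1(F_0^{-1}(1-u))$ is Hadamard differentiable provided $F_0$ admits a positive continuous density at $F_0^{-1}(1-u)$ (supplied by the regularity conditions), and the chain rule through the quantile map $F_0^{-1}$ delivers asymptotic normality with variance $\sigma_{\mathrm{ROC}}^2(u)$. For $\widehat{AUC}=\int\widehat{F}_0\,d\widehat{F}_1$, the map $(F_1,F_0)\mapsto\int F_0\,dF_1$ is Hadamard differentiable, and applying its derivative to the joint Gaussian limit yields $\sigma_{\mathrm{AUC}}^2$. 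I expect the main obstacle to be the uniform-in-$u$ linearization of the second step: controlling the Taylor remainder of the ratio-form weighted functionals to order $o_p(N^{-1/2})$ uniformly over $u$, and verifying the Hadamard differentiability of the quantile map underlying the ROC, which is exactly where the density positivity and continuity conditions are needed.
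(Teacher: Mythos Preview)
Your proposal is correct and follows essentially the same route as the paper: the paper's Lemma~1(b) is exactly your joint weak-convergence step for $(\widehat F_0,\widehat F_1)$ via Donsker preservation and a Taylor expansion in the smooth weights, and for $\widehat{AUC}$ the paper likewise invokes the functional delta method for the Wilcoxon map. The only stylistic difference is that for $\widehat{ROC}(u)$ the paper unwinds the Hadamard derivative by hand---decomposing $\widehat{ROC}(u)-ROC(u)$, establishing a Bahadur representation for $\widehat F_0^{-1}(1-u)$, and showing the cross term $\{F_1-\widehat F_1\}(\hat\tau_{1-u})-\{F_1-\widehat F_1\}(\tau_{1-u})$ is $o_p(N^{-1/2})$ via stochastic equicontinuity---rather than citing the quantile-map differentiability directly; the two arguments are equivalent.
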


\paragraph{Estimated classifier:}  
%Under our model assumptions, the estimated classifier $\widehat{c}(x)$ typically takes the form $c(x; \hat{\nu})$, where $\hat{\nu} = (\hat{\xi}, \hat{\theta})$.   A common example is the plug-in classifier based on the estimated posterior probability $\widehat{p}^{(0)}(Y = 1| x)$ in~\eqref{estimate_ptx}.     
{Analogous to the case with a known classifier,  an asymptotic distribution can be derived for 
$\widetilde{\mathrm{AUC}}$ and $\widetilde{\mathrm{ROC}}(u)$
in \eqref{eroc2}
when the classifier $c(x)$ depends on $(\theta,\xi)$  and is estimated by  $\widehat{c}(x) = c(x; \hat\theta,\hat\xi)$, as stated in the following theorem.
}
\begin{theorem}
    \label{thm4}
    Suppose that Conditions 1–6 and 8 in the Section S2 of the supplementary material hold. As $N= n_0+n_1\convergeto \infty$, we have 
    \[
\sqrt{N}(\widetilde{AUC} - {AUC} ) \convergeto N(0,\widetilde\sigma_{\mathrm{AUC}}^2), ~~ \sqrt{N}( \widetilde{ROC}(u) - ROC(u)) \convergeto N(0,\widetilde \sigma_{\mathrm{ROC}}^2(u))\,,
\] where the $\widetilde\sigma_{\mathrm{AUC}}^2$ and $\widetilde\sigma_{\mathrm{ROC}}^2(u)$ are described in (A.84) and (A.92) of the supplementary material.
\end{theorem}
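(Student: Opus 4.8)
The plan is to establish Theorem~\ref{thm4} by reducing it to the already-established machinery of Theorem~\ref{thm3}, treating the estimated classifier $\widehat{c}(x) = c(x;\hat\theta,\hat\xi)$ as a perturbation of the fixed classifier $c(x;\theta_0,\xi_0)$. The central difficulty, and the reason a separate theorem is warranted, is that the estimated CDFs $\widetilde F_y(u)$ depend on $(\hat\theta,\hat\xi)$ in two distinct ways: through the estimated posterior weights $\widehat p^{(0)}(Y=y\mid x)$ appearing in the summands (exactly as in the fixed-classifier case), and \emph{additionally} through the argument of the indicator $I\{c(x_{n_1+j};\hat\theta,\hat\xi)\le u\}$. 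The indicator is a discontinuous function of the parameters, so a naive Taylor expansion is unavailable; this is the main obstacle and must be handled by empirical process arguments.

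First I would write $\widetilde F_y(u) - F_y(u)$ as a sum of three terms: (i) the fluctuation one would get from a fixed classifier, which is controlled by Theorem~\ref{thm3}; (ii) a term capturing the effect of replacing $c(x;\theta_0,\xi_0)$ by $c(x;\hat\theta,\hat\xi)$ inside the indicator while holding the weights fixed; and (iii) a joint higher-order remainder. The plan for term (ii) is to invoke stochastic equicontinuity of the relevant empirical process indexed by $(\theta,\xi,u)$. Concretely, I would verify that the class of functions $(x,y)\mapsto \widehat p^{(0)}(Y=y\mid x)\,I\{c(x;\theta,\xi)\le u\}$, indexed over a shrinking neighborhood of $(\theta_0,\xi_0)$ and over $u$, forms a Donsker class (this is where Condition~8 enters, presumably imposing smoothness of $c(\cdot;\theta,\xi)$ in its parameters and a density/anti-concentration condition on $c(X;\theta_0,\xi_0)$ so that level sets have small measure). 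Donsker plus the consistency $(\hat\theta,\hat\xi)\to(\theta_0,\xi_0)$ from Theorem~\ref{thm1} gives that the empirical process evaluated at $(\hat\theta,\hat\xi)$ and at $(\theta_0,\xi_0)$ differ by $o_P(N^{-1/2})$, allowing me to linearize the indicator contribution. The result is an extra drift term, the derivative of $E_0[I\{c(X;\theta,\xi)\le u\}\,p^{(0)}(Y=y\mid X)]$ with respect to $(\theta,\xi)$ evaluated at the truth, multiplied by $\sqrt N(\hat\theta-\theta_0,\hat\xi-\xi_0)$.

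With the indicator handled, the remaining steps mirror Theorem~\ref{thm3}. I would assemble a joint linear (i.e.\ influence-function) representation of $\sqrt N(\widetilde F_1(u)-F_1(u), \widetilde F_0(u)-F_0(u))$ that now includes both the weight-perturbation contribution and the new indicator-perturbation drift, each expressed through the known asymptotically linear expansion of $\sqrt N(\hat\xi-\xi_0,\hat\theta-\theta_0)$ supplied by Theorem~\ref{thm1}. Then $\widetilde{ROC}(u) = 1 - \widetilde F_1(\widetilde F_0^{-1}(1-u))$ and $\widetilde{AUC} = \int \widetilde F_0\,d\widetilde F_1$ are smooth functionals of $(\widetilde F_0,\widetilde F_1)$, so a functional delta method (using the Hadamard differentiability of the quantile map $F_0\mapsto F_0^{-1}$ at a point where $F_0$ has a positive density, which Condition~8 should guarantee) delivers the asymptotic normality, with the sandwich-form variances $\widetilde\sigma_{\mathrm{AUC}}^2$ and $\widetilde\sigma_{\mathrm{ROC}}^2(u)$ obtained by collecting the influence functions. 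The bulk of the work is in verifying the Donsker property and computing the extra derivative terms; once those are in hand, the delta-method conclusion is routine and structurally identical to the fixed-classifier proof.
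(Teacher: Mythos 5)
Your proposal is correct and follows essentially the same route as the paper: the paper likewise isolates the indicator's dependence on $\hat\nu$ into an empirical-process fluctuation term (shown to be $o_p(N^{-1/2})$ via a bracketing-entropy/Donsker argument for the class $\{I(c(x;\nu)\le t)\}$ together with the Lipschitz and anti-concentration parts of Condition~8) plus a population drift term $\nabla_\nu F_y(t;\nu_0)^\top(\hat\nu-\nu_0)$, assembles the resulting influence-function representation of $(\widetilde F_0,\widetilde F_1)$, and concludes by the functional delta method for the AUC and a Bahadur-type quantile expansion for the ROC. No substantive gaps.
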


The proof of Theorem~\ref{thm4} is provided in the supplementary material.
A key technical challenge arises when using an estimated $c(x; \hat\theta,\hat\xi)$, as it
induces non-smooth dependence on estimated parameters through indicator functions, complicating asymptotic analysis. %We address this issue by carefully analyzing the limiting behavior of the AUC and ROC estimators under this non-standard setting, establishing their asymptotic normality. 
{We address this issue by decomposing the empirical process induced by the estimated classifier and establishing uniform stochastic equicontinuity results despite the discontinuity introduced by indicator functions. This allows us to derive the asymptotic normality of the proposed estimators of the AUC and ROC values at fixed thresholds, even under plug-in classification rules \( c(x; \hat\theta,\hat\xi) \).
}

For practical inference, although the asymptotic variance formulas in Theorems~\ref{thm2}--\ref{thm4} can be used to construct plug-in estimators, their practical implementation is often challenging due to the involvement of numerous nuisance terms. For the parameter estimates $(\hat{\xi}, \hat{\theta})$, plug-in estimators are relatively straightforward and perform well. However, for complex functionals such as AUC  and ROC$(u)$, the accumulation of estimation error across intermediate steps often leads to instability in finite samples. 
To circumvent these difficulties, we adopt a nonparametric bootstrap procedure to construct confidence intervals for these quantities in both simulations and real data analysis. This avoids estimating complex variance expressions and offers improved empirical coverage.
\section{Simulation study}\label{sec: simulation}
In this section, we assess the finite-sample performance of the proposed methodology via simulation.
The objectives are threefold: (i) to evaluate the {relative bias (RB)} and mean squared error (MSE) of $\hat\theta$, and coverage of ${\theta}$;
{(ii) to evaluate the classification performance of the estimated posterior probabilities $\widehat{p}^{(0)}(Y = 1 | X)$; 
and (iii)} to examine the performance of estimates for $\Ex_0\{h(X, Y)\}$, AUC, and the ROC curve, as introduced in Section~\ref{sec:est-functional}.

Since both $X$ and $Y$ are observed in the source data, standard techniques (e.g., model selection) can be used to specify $p^{(1)}(Y = 1|X)$. We therefore assume the working model $g(X; \xi)$ is correctly specified.

\subsection{Simulation setup}
The binary variables $(Y, X_1)$ are drawn from multinomial distributions with different probabilities across the source and target domains.
% $X_2 = (X_{21},X_{22},X_{23},X_{24}) $
% is a vector of variables and
% we 
% let $X = (X_1,X_2)$.
% We generate the \textbf{Source Data} and \textbf{Target Data} in the following way.
% \paragraph{Source Data ($\mathcal{D}_S$)}
% \begin{itemize}
%     \item  The pairs $(Y, X_1)$ follows a multinomial distribution with probabilities $\pi^{(S)} = (0.1, 0.4, 0.4, 0.1)$ corresponding to $(Y, X_1) = \{(0,0), (0,1), (1,0), (1,1)\}$.    
%     \item Given $(Y=y, X_1 = x_1)$, the components of $X_2$ are independently generated as:
%     \begin{align*}
%         X_{21} &| (y, x_1) \sim \mathcal{N}( \gamma_{1}(x_1 - 1), \sigma_1^2), \quad
%     X_{22} | (y, x_1) \sim \mathcal{N}( \gamma_{2}(y - 1), \sigma_2^2),\\
%     X_{23} &| (y, x_1) \sim \text{Bernoulli}(1,0.5), \quad \;\,
%     X_{24} | (y, x_1) \sim \text{Exponential}(\lambda),
%     \end{align*}
%     where $(\gamma_1, \gamma_2) = (7, -3)$, $(\sigma_1, \sigma_2) = (2, 2)$, and $\lambda = 1$. 
% \end{itemize}

\paragraph{Source data:} In the source distribution, the pair $(Y, X_1)$ follows a multinomial distribution with probabilities $\pi^{(S)} = (0.1, 0.4, 0.4, 0.1)$ corresponding to the values $(Y, X_1) \in \{(0,0), (0,1), (1,0), (1,1)\}$. The $X_2$ consists of four coordinates. The distribution of $X_2 | (Y=y, X_1 = x_1)$ is described below: 
\begin{align*}
        X_{21} &| (y, x_1) \sim \mathcal{N}( \gamma_{1}(x_1 - 1), \sigma_1^2), \quad
    X_{22} | (y, x_1) \sim \mathcal{N}( \gamma_{2}(y - 1), \sigma_2^2),\\
    X_{23} &| (y, x_1) \sim \text{Bernoulli}(1,0.5), \quad \;\,
    X_{24} | (y, x_1) \sim \text{Exponential}(\lambda),
    \end{align*}
    where $(\gamma_1, \gamma_2) = (7, -3)$, $(\sigma_1, \sigma_2) = (2, 2)$, and $\lambda = 1$.

\paragraph{Target data:} The target data distribution is identical to the source one in every aspect, except for the multinomial distribution probabilities $\pi^{(T)} = (0.5, 0.1, 0.1, 0.3)$ for $(Y, X_1)$. 

%% Commented it our. 

% \paragraph{Target Data ($\mathcal{D}_T$)}
% \begin{itemize}
%     \item The distribution of $(Y, X_1)$ differs from the source and follows a multinomial distribution with probabilities $\pi^{(T)} = (0.5, 0.1, 0.1, 0.3)$.
% \item The conditional distribution of $X_2$ given $(Y, X_1)$ remains the same as in the source domain.
% \end{itemize}
% % Under the data-generating mechanism described above, the source conditional distribution $p^{(1)}(Y = 1 | X)$ follows a logistic model with a linear predictor. 
% % Accordingly, we specify the working model $g(X; \xi)$ as
% % $$
% % g(X; \xi) = \frac{1}{1 + \exp(\xi_0 + \xi_1^{\top} X)}.
% % $$

We consider four combinations of sample sizes with $n_1, n_0 \in \{500, 2000\}$, each replicated 500 times. For brevity, we present results for the balanced cases ($n_1 = n_0$) here, and defer those for unbalanced settings ($n_1 \neq n_0$) to Section~S6 of the supplementary material. In each simulation replication, we construct confidence intervals using 500 nonparametric bootstrap resamples and report the average coverage probability (CP) and average interval length (AL) across the 500 simulations. {For clearer comparison across methods, RB is expressed as a percentage ($\%$), and the mean squared error (MSE) is scaled by 1,000.}
To maintain focus on the primary functional estimands, we present the full evaluation of $\hat{\theta}$ in the supplementary material. These results support the accuracy and efficiency of our two-step estimation procedure under varying sample sizes.

\subsection{Prediction performance}
\label{simu.pred}
To evaluate the classification performance of our method in the target population, we construct binary classifiers based on the estimated posterior probabilities $\widehat{p}^{(0)}(Y = 1 | X)$ in \eqref{estimate_ptx}, using a threshold of 0.5. 
These classifiers are applied to the covariates in the target population, and their predictions are compared with the true responses. 
We consider four different approaches for estimating the posterior probability:
\begin{description}  
    \item[-] \emph{Proposed method:}
    Our method that estimates $p^{(0)}(Y = 1 | X)$ using the two-step estimators $\hat{\xi}$ and $\hat{\theta}$ described in Section~\ref{sec: estimation}.
    \item[-] \emph{Importance weighting method:} An approach widely used for correcting distribution shift in transfer learning \citep{shimodaira2000improving, byrd2019effect}. It reweights source samples to approximate the target distribution better. We compute weights as $w(x, y;\hat{\theta}) = \exp(\hat{\alpha}_y + \hat{\beta}_y^\top x_1)$ using $(\hat{\alpha}_y, \hat{\beta}_y)$ from the proposed method, and estimate $p^{(0)}(Y = 1|X)$ by minimizing a weighted loss over source data:
    \[
    \min_f \frac{1}{n_1} \sum_{i=1}^{n_1} w(x_i, y_i;\hat{\theta})\, \mathcal{L}(f(x_i), y_i) + \Omega(f),
    \]
    where $f$ approximates the target posterior probability and  $\Omega(f)$ denotes a possible regularization penalty \citep{huang2006correcting,zhuang2020comprehensive}. We refer to it as \emph{Reweight method}. 
    \item[-]\emph{Naive method:} A baseline method that neglects the distribution shift and directly uses the estimated conditional probability from the source population, i.e., using $\widehat{p}^{(1)}(Y = 1|X)$ to estimate ${p}^{(0)}(Y = 1|X)$. 
    \item[-]\emph{Oracle method:} An idealized benchmark that estimates $p^{(0)}(Y = 1 | X)$ using the observed responses in the target population. Although not implementable in practice, it serves as a performance upper bound for evaluating other methods.
\end{description}

We evaluate each method in terms of recall, accuracy, and precision, averaged over 500 simulation replications. For each metric, we report the empirical mean, {RB, and MSE } in Table~\ref{tab:summary-simu}.

% We first assess performance across several standard classification metrics, including recall, accuracy, precision, and AUC, computed over 500 simulation replications. For each metric, we report the empirical mean, bias, and mean squared error (MSE) in Table~\ref{tab:summary-simu}. 
% Recall, also referred to as sensitivity or the true positive rate, is defined as the proportion of correctly predicted positive outcomes among all actual positives. Accuracy reflects the overall proportion of correct predictions. Precision measures the proportion of correctly predicted positives among all observations predicted to be positive.
\begin{table}[!htt]
  \centering
  \setlength{\tabcolsep}{2pt} 
  \caption{Simulation results on classification performance of four methods. RB: relative bias ($\%$); MSE: mean squared error ($\times 1000$).}
    \begin{tabular}{lccccccccccccccc}
   		\toprule 
   		&  {mean} &  {RB} &  {MSE} &  {mean} &  {RB} &  {MSE} &  {mean} &  {RB} &  {MSE}  \\
   		\midrule 
   		 & \multicolumn{9}{c}{$n_1 = 500, n_0 = 500$} \\
         \cmidrule(lr){2-10}  
          & \multicolumn{3}{c}{Recall} & \multicolumn{3}{c}{Accuracy} & \multicolumn{3}{c}{Precision} \\
          \cmidrule(lr){2-4} \cmidrule(lr){5-7} \cmidrule(lr){8-10}
    Proposed & 0.785 & -0.791 & 2.856 & 0.842 & -0.919 & 0.394 & 0.817 & -1.189 & 1.711 \\
    Reweight & 0.785 & -0.779 & 2.960  & 0.841 & -1.045 & 0.426 & 0.815 & -1.411 & 1.949 \\
    Naive & 0.561 & -29.117 & 56.215 & 0.535 & -37.032 & 100.79 & 0.438 & -47.057 & 153.252 \\
    Oracle & 0.795 & 0.390  & 0.799 & 0.853 & 0.290  & 0.282 & 0.829 & 0.331 & 0.442 \\
    \midrule
          & \multicolumn{9}{c}{$n_1 = 2000, n_0 = 2000$} \\
          \cmidrule(lr){2-10}
          & \multicolumn{3}{c}{Recall} & \multicolumn{3}{c}{Accuracy} & \multicolumn{3}{c}{Precision} \\
          \cmidrule(lr){2-4} \cmidrule(lr){5-7} \cmidrule(lr){8-10}
    Proposed & 0.789 & -0.329 & 0.806 & 0.848 & -0.221 & 0.082 & 0.825 & -0.255 & 0.395 \\
    Reweight & 0.790  & -0.242 & 0.806 & 0.848 & -0.247 & 0.081 & 0.824 & -0.366 & 0.405 \\
    Naive & 0.562 & -29.038 & 53.592 & 0.535 & -37.103 & 99.93 & 0.436 & -47.236 & 152.969 \\
    Oracle & 0.792 & 0.071 & 0.213 & 0.851 & 0.104 & 0.072 & 0.828 & 0.107 & 0.122 \\
    \bottomrule
   	\end{tabular}%
  \label{tab:summary-simu}%
\end{table}%

{Across both sample sizes, the \emph{Proposed} method generally achieves the smallest RB among all practically feasible methods, except in Recall, where the \emph{Reweight} method performs slightly better. However, in terms of MSE, the \emph{Proposed} method is almost never outperformed by \emph{Reweight}, and its advantage is more pronounced in smaller samples. In contrast, the \emph{Naive}
method shows substantially larger RB and MSE across all metrics, highlighting how ignoring distributional shifts can severely degrade classification performance. Overall, these results underscore the practical value of our method for constructing accurate classifiers under distribution shift, even without labeled target data.
}

% {\color{magenta} Across both sample sizes, the \emph{Proposed method} consistently achieves a performance comparable to the \emph{Oracle benchmark }in terms of bias and MSE.
% %despite the latter leveraging target outcomes and being infeasible in practice. 
% Furthermore, it substantially outperforms the \emph{Naive method}, particularly in precision and AUC, where ignoring distribution shift leads to severe bias. Compared to the \emph{Reweight method}, which applies the same tilting estimates for importance weighting, our approach yields slightly better results across all metrics, especially under the smaller sample size, underscoring its enhanced stability and predictive accuracy.  These findings reinforce the practical utility of our method in constructing accurate classifiers under distribution shift, even in the absence of labeled target data.}

\subsection{Estimation and inference for the target mean}\label{sec: simu-estinf-mu}
Table~\ref{tab:ey-simu} reports simulation results for estimating the target mean $\mu=\Ex_0(Y)$ using both the IW and REG estimates described in Section~\ref{sec:est-expectation}. The IW estimate, which requires an estimate of $\theta$, is applicable only under the Proposed method.
The REG estimate, which averages the estimated posterior probabilities, is applied to all four methods in Section \ref{simu.pred} using their respective estimates. In this table, ``IW'' denotes the importance weighting estimate under the Proposed method, and the remaining rows correspond to REG estimates applied to each of the four methods.

{Table \ref{tab:ey-simu} shows that the \emph{Oracle} method provides the most accurate results, with negligible RB, coverage close to nominal, and the shortest intervals. Among implementable methods, the \emph{Proposed} method achieves the lowest MSE and shortest interval lengths across both sample sizes, while maintaining near-nominal coverage. Its RB is the smallest in the smaller sample and remains comparable in the larger sample. In contrast, the \emph{Naive} method performs poorly on all metrics, with substantial RB, high MSE, and severe undercoverage from ignoring distributional shift. As sample size increases, coverage for all feasible methods approaches the nominal level, while interval lengths decrease by about 50\% and MSE is further reduced. }

% {\color{magenta} From Table~\ref{tab:ey-simu}, the \emph{Oracle} estimator unsurprisingly yields the most accurate results, with negligible bias, CP close to the nominal level, and the shortest AL. 
% The \emph{Proposed} method attains bias comparable to the IW and Reweight estimators, while maintaining near-nominal coverage and yielding the shortest confidence intervals and lowest MSE among implementable methods.
% The \emph{Naive} approach performs poorly across all metrics, with substantial bias, high MSE, and severe undercoverage due to ignoring distributional shift. 
% As the sample size increases, the coverage of all methods (except Naive) improves, and the average interval lengths decrease by about 50\%.  }
\begin{table}[!htt]
   \centering
  \setlength{\tabcolsep}{4pt} 
  \caption{Simulation results for estimating $\mu = \Ex_0(Y)$ (true value 0.4). RB: relative bias ($\%$); MSE: mean square error ($\times 1000$).}
   	\begin{tabular}{lcccccccc}
   	\toprule
   	&  {RB} &  {MSE} &  {CP} &  {AL} &  {RB} &  {MSE} &  {CP} &  {AL} \\
   	\midrule 
   	& \multicolumn{4}{c}{$n_1=500,n_0=500$} & \multicolumn{4}{c}{$n_1=2000,n_0=2000$} \\
   	\cmidrule(lr){2-5} \cmidrule(lr){6-9}
   	IW    & 0.584  & 1.884  & 94.60\% & 0.169  & -0.061  & 0.435  & 96.00\% & 0.083  \\
    Proposed & 0.472  & 1.840  & 93.80\% & 0.164  & -0.074  & 0.424  & 95.20\% & 0.081  \\
    Reweight & 0.649  & 1.998  & 93.80\% & 0.170  & 0.122  & 0.440  & 95.00\% & 0.084  \\
    Naive & 27.173  & 12.413  & 1.20\% & 0.103  & 27.439  & 12.204  & 0.00\% & 0.051  \\
    Oracle & 0.022  & 0.476  & 94.40\% & 0.085  & -0.144  & 0.120  & 94.60\% & 0.042  \\
   	\bottomrule
   \end{tabular}%
  \label{tab:ey-simu}%
\end{table}%

\subsection{Estimation and inference for AUC and ROC}
\label{sec: simu-estinf-aucroc}
We evaluate the finite-sample performance of the proposed AUC and ROC estimates under two classifier settings: a fixed classifier $c(x) = p^{(1)}(Y = 1 |x)$, which is known in simulation, and an estimated classifier $\widehat{c}(x) = \widehat{p}^{(0)}(Y = 1|x)$. 
Tables~\ref{tab:estedAUC-simu} and~\ref{tab:estedROC-simu} summarize the results under both sample size configurations.

\begin{table}[!htt]
   \centering
  \setlength{\tabcolsep}{2pt} 
  \caption{Simulation results for AUC estimation of two classifiers.  RB: relative bias ($\%$); MSE: mean square error ($\times 1000$). } %
   	\begin{tabular}{clcccccccc}
   	\toprule
   	&       &  {RB} &  {MSE} &  { CP} &  {AL} &  {RB} &  {MSE} &  { CP} &  {AL} \\
   	\midrule
   	&       & \multicolumn{4}{c}{$n_1=500,n_0=500$} & \multicolumn{4}{c}{$n_1=2000,n_0=2000$} \\
   	\cmidrule(lr){3-6} \cmidrule(lr){7-10}
   	\multirow{4}[0]{*}{$c(x)$}
   &Proposed & 1.996  & 3.298  & 92.00\% & 0.227  & 0.875  & 0.841  & 93.60\% & 0.115  \\
    &Reweight & 2.643  & 3.453  & 91.20\% & 0.224  & 1.072  & 0.889  & 93.60\% & 0.117  \\
    &Naive & 46.775  & 73.407  & 0.00\% & 0.064  & 46.804  & 73.287  & 0.00\% & 0.031  \\
    &Oracle & -0.079  & 0.613  & 96.00\% & 0.101  & 0.182  & 0.157  & 95.20\% & 0.050  \\    
   	\midrule
   	&       & \multicolumn{4}{c}{$n_1=500,n_0=500$} & \multicolumn{4}{c}{$n_1=2000,n_0=2000$} \\
   	\cmidrule(lr){3-6} \cmidrule(lr){7-10}
   	\multirow{4}[0]{*}{$\widehat{c}(x)$} 
    &Proposed & -0.011  & 0.213  & 97.40\% & 0.064  & -0.090  & 0.071  & 95.60\% & 0.032  \\
    &Reweight & 0.204  & 0.250  & 97.20\% & 0.068  & -0.043  & 0.078  & 95.00\% & 0.034  \\
    &Naive & -7.372  & 4.883  & 1.80\% & 0.062  & -7.756  & 5.168  & 0.00\% & 0.031  \\
    &Oracle & 0.172  & 0.162  & 91.00\% & 0.046  & 0.078  & 0.037  & 94.00\% & 0.023  \\
   	\bottomrule
   \end{tabular}%
  \label{tab:estedAUC-simu}%
\end{table}

From Table~\ref{tab:estedAUC-simu}, the \emph{Proposed} method consistently achieves accurate AUC estimation under both classifier settings, exhibiting small RB and MSE. 
Under the estimated classifier $\widehat{c}(X)$, it attains near-zero relative bias and the shortest AL among all practically implementable methods. Coverage behavior varies across classifier types. 
Under the smaller sample size, with the estimated classifier $\widehat{c}(x)$, the  \emph{Proposed} and  \emph{Reweight} methods exhibit slight overcoverage (CP = 97.4\% and 97.2\%, respectively), while the  \emph{Oracle} method undercovers (CP = 91.0\%). In contrast, under the fixed classifier $c(x)$, the  \emph{Proposed} and  \emph{Reweight} methods exhibit mild undercoverage (CP = 92.0\% and 91.2\%, respectively).
The  \emph{Reweight} method performs comparably to the  \emph{Proposed} approach, with marginally larger RB and slightly wider intervals. In contrast, the  \emph{Naive} method yields substantial RB and severely under-covered intervals, highlighting the importance of correcting for distributional shift. As expected, increasing the sample size improves estimation accuracy across all metrics, with notable reductions in RB, MSE, and AL.

\begin{table}[!htt]
  \centering
  \setlength{\tabcolsep}{1.5pt} 
  \caption{Simulation results for ROC estimation at thresholds $u = 0.1$ and $0.2$ of two classifiers. RB: relative bias ($\%$); MSE: mean square error ($\times 1000$).}
   \begin{tabular}{cclcccccccc}
   	\toprule
   	& Threshold	&       &  {RB} &  {MSE} &  {CP} &  {AL} &  {RB} &  {MSE} &  {CP} &  {AL} \\
   	\midrule
   	&  & & \multicolumn{4}{c}{$n_1=500,n_0=500$} & \multicolumn{4}{c}{$n_1=2000,n_0=2000$} \\
     \cmidrule(lr){4-7}  \cmidrule(lr){8-11}
    \multirow{8}[0]{*}{$c(x)$} & 
    \multirow{4}[0]{*}{0.1} 
    &Proposed & 6.468  & 3.570  & 94.20\% & 0.242  & 2.494  & 0.878  & 95.00\% & 0.119  \\
    &&Reweight & 8.709  & 3.906  & 92.40\% & 0.246  & 3.272  & 0.950  & 94.40\% & 0.122  \\
    &&Naive & 177.296  & 133.026  & 0.00\% & 0.154  & 177.530  & 132.102  & 0.00\% & 0.075  \\
    &&Oracle & 0.205  & 0.763  & 95.00\% & 0.115  & 0.740  & 0.189  & 96.60\% & 0.057  \\
   	\cmidrule{2-11}    
   	&\multirow{4}[0]{*}{0.2} 
    &Proposed & 5.364  & 5.273  & 93.60\% & 0.292  & 2.222  & 1.327  & 94.60\% & 0.145  \\
    &&Reweight & 6.816  & 5.547  & 92.40\% & 0.292  & 2.730  & 1.405  & 94.00\% & 0.148  \\
    &&Naive & 137.707  & 177.618  & 0.00\% & 0.130  & 138.087  & 177.701  & 0.00\% & 0.063  \\
    &&Oracle & 0.259  & 1.175  & 96.20\% & 0.138  & 0.528  & 0.287  & 96.00\% & 0.069  \\
   	\midrule 
   	& &  & \multicolumn{4}{c}{$n_1=500,n_0=500$} & \multicolumn{4}{c}{$n_1=2000,n_0=2000$} \\
    \cmidrule(lr){4-7}  \cmidrule(lr){8-11}
    \multirow{8}[0]{*}{$\widehat{c}(x)$} & 
   	\multirow{4}[0]{*}{0.1} 
    &Proposed & -0.234  & 2.342  & 97.40\% & 0.208  & -0.397  & 0.757  & 95.40\% & 0.106  \\
    &&Reweight & 0.427  & 2.523  & 97.60\% & 0.214  & -0.255  & 0.790  & 95.60\% & 0.109  \\
    &&Naive & -25.348  & 40.185  & 0.20\% & 0.154  & -26.391  & 42.134  & 0.00\% & 0.075  \\
    &&Oracle & 0.422  & 1.360  & 91.80\% & 0.133  & 0.155  & 0.302  & 94.80\% & 0.067  \\
   	\cmidrule{2-11}
   	&\multirow{4}[0]{*}{0.2} 
   &Proposed & -0.192  & 0.894  & 97.00\% & 0.130  & -0.251  & 0.302  & 95.80\% & 0.066  \\
    &&Reweight & 0.183  & 0.986  & 97.80\% & 0.135  & -0.173  & 0.317  & 95.60\% & 0.069  \\
    &&Naive & -16.495  & 22.216  & 0.60\% & 0.127  & -17.242  & 23.297  & 0.00\% & 0.063  \\
    &&Oracle & 0.213  & 0.567  & 93.00\% & 0.087  & 0.097  & 0.132  & 95.00\% & 0.044  \\
   	\bottomrule
   \end{tabular}%
  \label{tab:estedROC-simu}%
\end{table}%

Estimation results for the ROC curve at thresholds $u = 0.1$ and $u = 0.2$ are summarized in Table~\ref{tab:estedROC-simu}. {Consistent with the AUC findings, for a known classifier ${c}(x)$,  the \emph{Proposed} method achieves the smallest RB, MSE, and interval length, with coverage close to the nominal level, second only to the \emph{Oracle} estimator. 
For the estimated classifier $\widehat{c}(x)$, the \emph{Reweight} method shows slightly smaller RB at certain points, but the \emph{Proposed} estimator retains clear advantages in MSE, interval length, and coverage.}
% Across both classifier settings and sample sizes, the \emph{Proposed} method consistently attains the shortest AL among practically implementable methods, while maintaining coverage probabilities near the nominal level. The \emph{Oracle} estimator continues to perform best overall, while 
{The \emph{Naive} method continues to show large RB and severe undercoverage across all settings.}
As expected, increasing the sample size improves estimation accuracy, narrows confidence intervals, and brings CP closer to the nominal level, while preserving the relative ranking of method performance.

Overall, the \emph{Proposed} method offers reliable estimation and inference under distributional shift. By effectively leveraging information from the source population, it consistently delivers accurate results in the target domain across various objectives and sample sizes.
\section{Real data application: waterbirds dataset}
\label{sec: realdata}
To demonstrate the practical utility of the proposed method, we apply it,
along with several competing approaches from the simulation study, to the waterbirds dataset.
This semi-synthetic dataset is constructed by superimposing bird images from the Caltech-UCSD Birds-200-2011 (CUB) dataset \citep{wah2011caltech} onto backgrounds from the Places dataset \citep{zhou2017places}, resulting in two bird categories—\textit{waterbirds} and \textit{landbirds}—crossed with two types of backgrounds: \textit{water} and \textit{land}. Thus, each bird category appears on both types of background.
As a result, true labels are available for all observations, including those in the target domain. This dataset has been widely used to assess model robustness under distributional shifts; see \citet{Sagawa2020Distributionally} for details.

The original training set exhibits a pronounced spurious association between bird type and background: 95\% of waterbirds appear on water backgrounds, and 95\% of landbirds appear on land backgrounds. In contrast, the test set is balanced across all bird-background combinations. This design induces a substantial shift in feature distribution due to altered background association and the proportion of landbirds and waterbirds between the source and target data.
Thus, this dataset likely exhibits both covariate shift and label shift. 
We use a preprocessed version of this dataset provided by \citet{maity2022understanding}, where each image is represented by a 512-dimensional feature vector extracted from the penultimate layer of a ResNet18 model pretrained on ImageNet. These feature vectors serve as covariates in our analysis.
We define the binary response variable $Y$ to indicate bird type, with values corresponding to \textit{waterbirds} and \textit{landbirds}. The background type, denoted as \textit{place}, takes values \textit{land} or \textit{water}, and is treated as a non-instrumental covariate ($X_1$). The remaining 512-dimensional image embeddings are used as instrumental variables ($X_2$). The source data consist of 4,795 observations from the original training split, while the target data include 6,993 observations by combining the validation and test splits.

\subsection{Prediction performances}
Since the waterbirds dataset is semi-synthetic, true labels are available in the target data, enabling direct evaluation of the classification performance for all methods considered  Section \ref{simu.pred}. 
To mitigate overfitting due to high-dimensional covariates, we use penalized logistic regression (implemented via the \texttt{cv.glmnet} function in R) to estimate the conditional probability of the binary outcome $Y$ given covariates $x$. Specifically, the \emph{Proposed} and \emph{Oracle} methods estimate $g(x) = p^{(1)}(Y = 1|x)$ and $p^{(0)}(Y = 1|x)$ using cross-validated logistic regression. In contrast, the reweighting method incorporates observation-specific weights obtained from the reweighting procedure.

Table~\ref{tab:summary-water} presents the classification results on the target data across different methods. As shown in Table~\ref{tab:summary-water}, the \emph{Proposed} method achieves strong performance, with recall (0.711), and accuracy (0.907) values that are comparable to those of the 
\emph{Oracle} method. Although its precision (0.846) is slightly lower than that of the \emph{Reweight} method (0.883), the proposed estimate consistently outperforms reweighting across other key metrics, highlighting its superior ability to mitigate spurious correlations and improve generalization to the target domain.
In contrast, the \emph{Naive} method, which ignores distributional shifts, performs substantially worse across all metrics, particularly in accuracy (0.738) and precision (0.431), illustrating the detrimental impact of neglecting distribution mismatch between the source and target domains.

\begin{table}[!htt]
  \centering
\caption{Classification performance on the waterbirds dataset.}
    	\begin{tabular}{lccc}
    		\toprule
    	& Recall & Accuracy & Precision  \\ %& AUC
    	\cmidrule{2-4}
    	Proposed & 0.711  & 0.907  & 0.846  \\ %& 0.930  \\
    	Reweight & 0.599  & 0.894  & 0.883  \\ %& 0.925  \\
    	Naive & 0.575  & 0.738  & 0.431  \\ %& 0.786  \\
    	Oracle & 0.792  & 0.936  & 0.906 \\% & 0.963  \\
    	\bottomrule
    \end{tabular}%
  \label{tab:summary-water}%
\end{table}%

These patterns are further reflected in the ROC curves, which provide a more visual comparison of classification performance on the target data. 
{Using the estimated classifier $\widehat{p}^{(0)}(Y=1|\bm{x})$ and the true labels in the target data, we compute the true positive rate and false positive rate across a sequence of thresholds $\{0.005, 0.01,  \ldots,  0.995\}$ and plot the corresponding empirical ROC curves.} As shown in Figure~\ref{fig:roc-tprfpr}, the \emph{Proposed} method again exhibits the best performance among all non-oracle methods. 
%These results are consistent with the AUC values reported in Table~\ref{tab:summary-water}, with the ROC curves offering a complementary perspective.

\begin{figure}[!http]
	\centering
\includegraphics[width=10cm, height=10cm ]{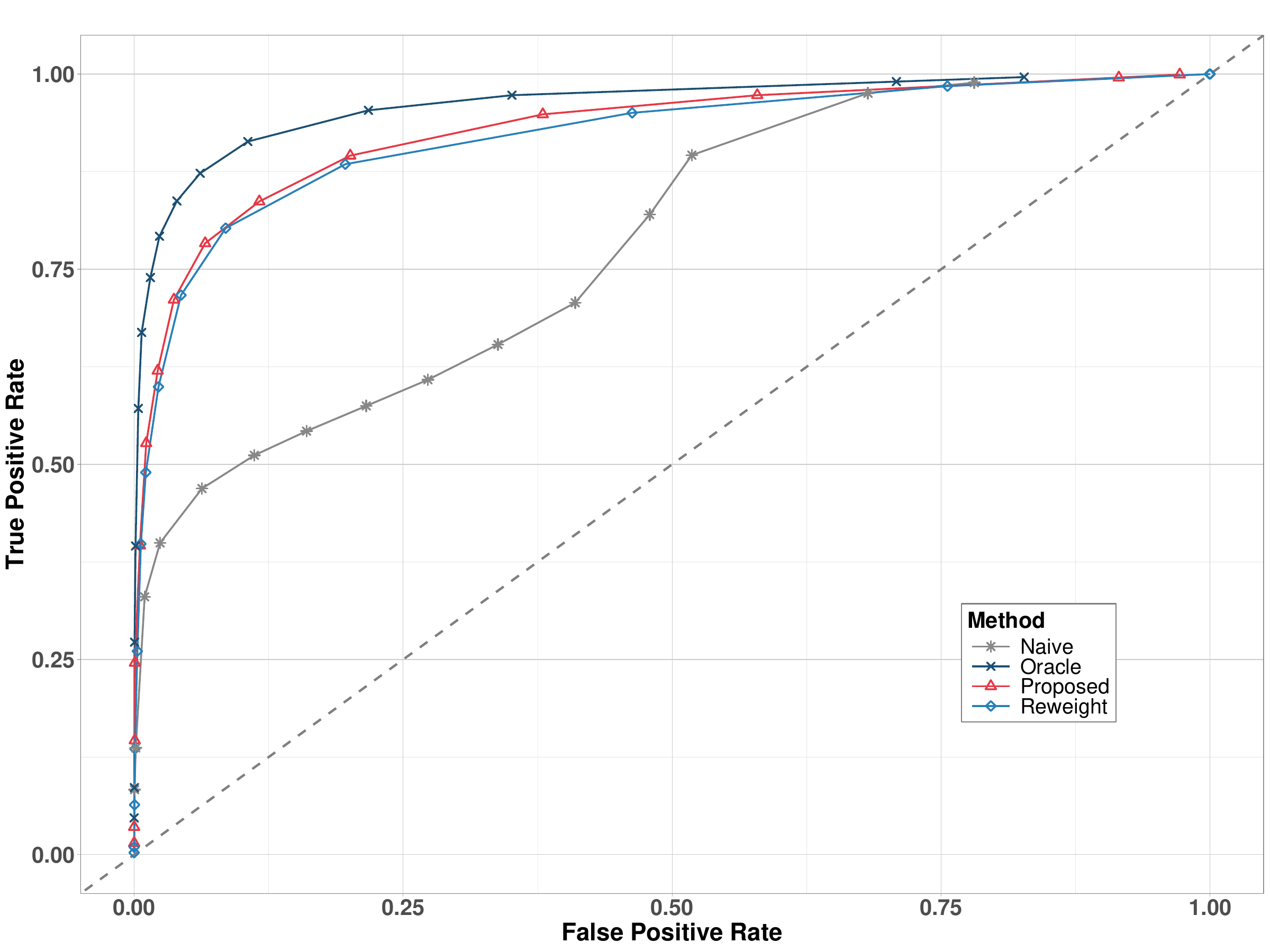} \\
	\caption{Empirical ROC curves of four classifiers for waterbirds dataset.}
	\label{fig:roc-tprfpr}
\end{figure}

\subsection{Estimation and inference for the target mean}
Table~\ref{tab:estey-water} reports point estimates and confidence intervals for $\mu = \Ex_0(Y)$, with 95\% confidence intervals constructed using 500 nonparametric bootstrap replications. The \emph{Oracle} method yields an estimate of 0.222, which exactly matches the sample mean of the target outcomes and has the shortest confidence interval (length: 0.019), serving as a performance benchmark.
Among the non-oracle methods, the IW estimate provides the estimate closest to the oracle (0.212), though it is associated with the widest confidence interval (length: 0.061). The regression-based \emph{Reweight} method also performs well, producing an estimate of 0.230 with the shortest interval (length: 0.034). The proposed regression-based estimate yields a slightly higher estimate (0.241) with a marginally wider interval (length: 0.037), showing reasonable performance but somewhat less accurate than the \emph{Reweight} and \emph{IW} methods.
All three methods—\emph{IW}, \emph{Reweight}, and \emph{Proposed}—produce intervals that successfully cover the sample mean. In contrast, the \emph{Naive} method exhibits substantial upward bias (estimate: 0.316), and its interval fails to include the sample mean, highlighting the adverse effects of ignoring distributional shifts.

\begin{table}[!htt]
	\centering
	\caption{Estimates and confidence intervals (CIs) for $\mu$ using the waterbirds dataset (sample mean 0.222).}
		\begin{tabular}{lccc}
		\toprule 
		& Est    & CI & CI length \\
		\cmidrule{2-4}
	IW   & 0.212  & [0.198,0.260] & 0.061  \\
	Proposed & 0.241  & [0.217,0.254] & 0.037  \\
	Reweight & 0.230  & [0.211,0.245] & 0.034  \\
	Naive & 0.316  & [0.285,0.330] & 0.045  \\
	Oracle & 0.222  & [0.213,0.231] & 0.019  \\
		\bottomrule
	\end{tabular}%
	\label{tab:estey-water}%
\end{table}%

\subsection{Estimation and inference for AUC and ROC}
Following the same procedure described in Section~\ref{sec: simu-estinf-aucroc}, we evaluate the estimated AUC and ROC curve based on the classifier $\widehat{p}^{(0)}(Y=1|x)$ using the waterbirds dataset. Table~\ref{tab:estauc-water} reports the estimated AUC values and the associated 95\% confidence intervals.
Given that all target labels are observed, we treat the empirical AUC as a finite-sample approximation of the true value, which is reported in the “True” column of Table~\ref{tab:estauc-water}.

 \begin{table}[!htt]
	\centering
	\caption{Estimates and confidence intervals (CIs) for AUC using the waterbirds dataset.}
	\begin{tabular}{lcccc}
		\toprule
		& True & Est    & CI & CI length \\
		\cmidrule{2-5}
	Proposed & 0.930  & 0.897  & [0.885,0.967] & 0.082  \\
	Reweight & 0.925  & 0.847  & [0.839,0.959] & 0.120  \\
	Naive & 0.786  & 0.932  & [0.929,0.972] & 0.043  \\
	Oracle & 0.963  & 0.945  & [0.956,0.974] & 0.017  \\
		\bottomrule
	\end{tabular}%
	\label{tab:estauc-water}%
\end{table}%

From Table~\ref{tab:estauc-water}, the \emph{Proposed} method yields an AUC estimate of 0.897, the closest to the empirical AUC among all non-oracle methods, with a moderately short confidence interval of $[0.885, 0.967]$ (length 0.082). In contrast, the \emph{Reweight} method produces a lower estimate of 0.847 and a wider interval of $[0.839, 0.959]$ (length 0.120), indicating greater uncertainty and reduced accuracy. This performance gap may reflect the proposed method’s ability to better mitigate spurious correlations in the waterbirds dataset and to capture features more relevant for target classification. The \emph{Naive} method performs the worst, overestimating the AUC (0.932) with a narrow interval of $[0.929, 0.972]$ (length 0.043) that entirely misses the empirical truth. These results highlight the superiority of the \emph{Proposed} method in estimating AUC under distributional shift.

% We evaluate the estimated ROC curves across a sequence of thresholds $\left\{0.005, 0.01, 0.05, \right.\\ \left.0.1, \ldots, 0.9, 0.95, 0.99, 0.995\right\}$. Using the estimated classifier $\hat{p}_t(Y=1|\bm{x})$, we compute the corresponding true positive rate (TPR) and false positive rate (FPR) on the target data, and plot the resulting ROC curves. The empirical ROC curve, treated as the finite-sample ground truth, is shown in Figure~\ref{fig:roc-tprfpr}. Consistent with Table~\ref{tab:estauc-water}, the proposed method achieves the best performance among all methods except the oracle.

Figure~\ref{fig:roc-pce-ci} presents the estimated ROC curves with their 95\% bootstrap confidence bands. The empirical ROC curve, treated as the finite-sample ground truth and shown in Figure~\ref{fig:roc-tprfpr}, is covered by all bands except that of the \emph{Naive} method. The \emph{Proposed} method achieves the narrowest confidence band among all non-oracle methods, indicating superior estimation stability. The \emph{Oracle} method’s band does not fully cover its curve, likely due to the tightness of percentile bootstrap intervals.

\begin{figure}[!htt]
	\centering
	\includegraphics[width=16cm, height=12cm ]{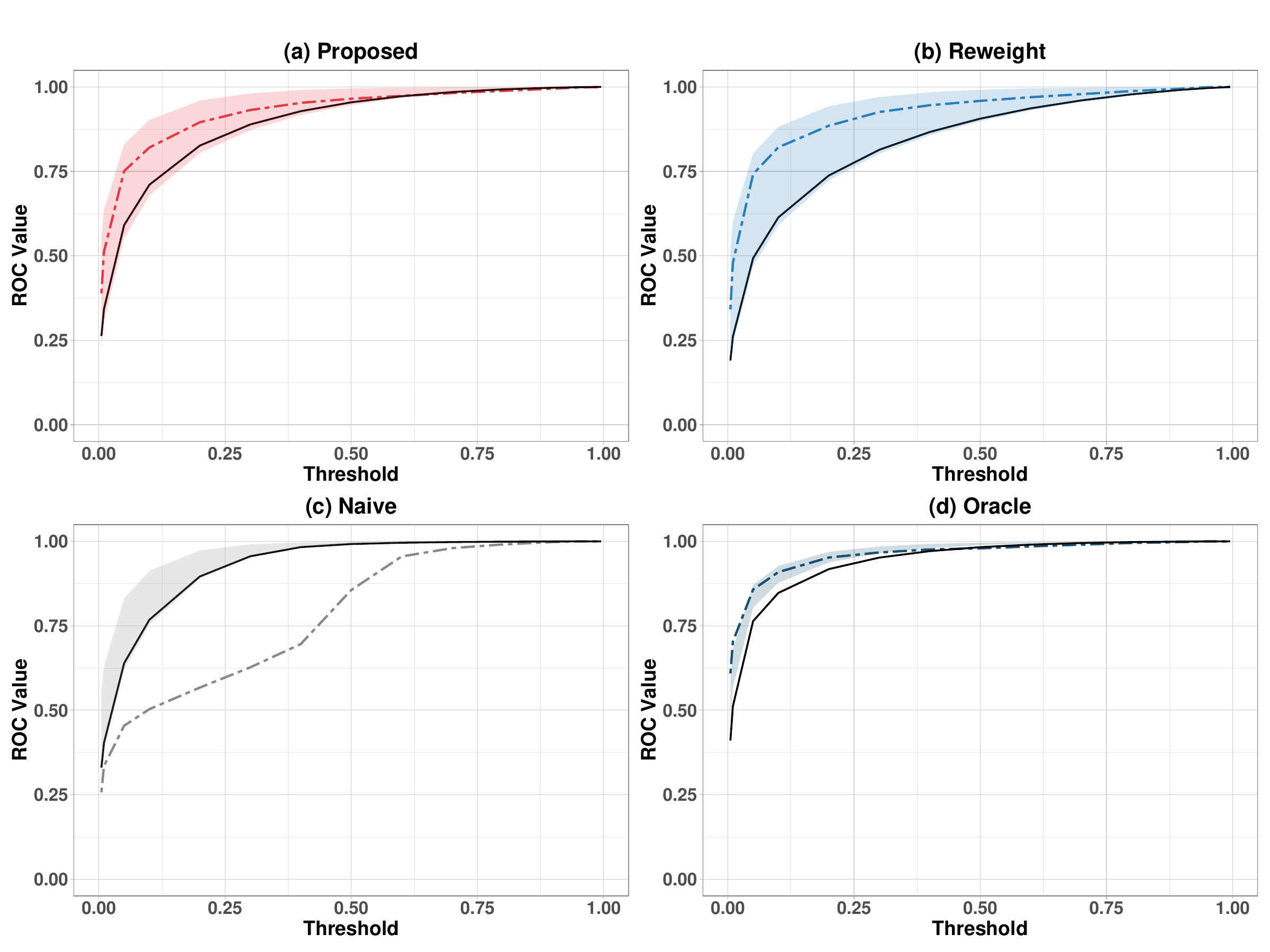} \\
	\caption{Estimated and empirical ROC curves with 95\% confidence bands.
		Solid lines show estimated ROC curves, dashed lines denote empirical ROC curves, and shaded areas represent the confidence bands.}
	\label{fig:roc-pce-ci}
\end{figure}

Overall, results from both the simulation and the real data application suggest that the \emph{Proposed} method offers an effective and reliable strategy for transferring information from the source population to improve classification performance in the target data.

%\begin{figure}[H]
%	\centering
%	% 左图
%	\begin{minipage}[t]{0.8\textwidth}
%		\centering
%		\includegraphics[width=\linewidth]{Plots_ThrevsROC_tg}
%		\par\vspace{0.5em}
%		(a)
%	\end{minipage}
%	\\
%	% 右图
%	\begin{minipage}[t]{0.8\textwidth}
%		\centering
%		\includegraphics[width=\linewidth]{Plots_ThrevsROC_ee}
%		\par\vspace{0.5em}
%		(b)
%	\end{minipage}
%	\caption{ROC Curves for the Waterbirds Dataset. (a) Empirical ROC curve. (b) Estimated ROC curve based on the analytical expression \eqref{eroc}.}
%	\label{fig:roc_waterbirds}
%\end{figure}

\section{Discussion}
\label{sec: disscussion}
% This paper investigates estimation and inference under a general subpopulation shift, leveraging labeled source data to improve estimation and prediction in an unlabeled target population. We propose a semiparametric exponential tilting model incorporating a novel and practically plausible group-label shift assumption, which separates predictive group-level features from non-group features, thereby mitigating spurious cross-domain correlations.
% Within this framework, we establish the asymptotic normality of estimators for key target-population functionals, including the mean outcome, ROC curve, and AUC. Crucially, the proposed modeling setup also resolves a central identification challenge: the group-label structure yields interpretable, verifiable identification conditions—conceptually akin to instrumental variables in missing data and causal inference, yet more readily applicable in practice. These conditions are notably milder and more transparent than those in recent work such as Maity (2022).
% Extensive simulations and a semi-synthetic dataset—designed to mimic spurious correlations between covariates and labels—demonstrate that our method outperforms leading importance-weighting approaches in classification accuracy and estimation stability. Additional results and implementation details are provided in the supplementary material to illustrate the method’s behavior across diverse settings.

This paper investigates estimation and inference under a general subpopulation shift, leveraging labeled source data to improve estimation in an unlabeled target population. We propose a semiparametric exponential tilting model that incorporates a novel and practically plausible group-label shift assumption, which separates predictive group-level features from non-group features and mitigates spurious cross-domain correlations.
Within this framework, we establish the asymptotic normality of estimators for key target-population functionals, including the target mean, ROC curve, and AUC. Crucially, the proposed modeling setup also addresses a fundamental identification challenge: the group-label structure yields interpretable and verifiable identification conditions—conceptually akin to instrumental variables in missing data and causal inference, yet more directly applicable in practice. These conditions are notably milder and more transparent than those required in recent work such as \cite{maity2022understanding}.
Extensive simulations and a semi-synthetic dataset—designed to induce spurious correlations between covariates and labels—demonstrate that our method outperforms leading importance-weighting approaches in both classification accuracy and estimation stability. Additional results and implementation details are provided in the supplementary material to illustrate the method’s behavior across diverse settings.

While the proposed framework is broadly applicable, it depends on the correct specification of the exponential tilting model and adequate overlap between the source and target populations. Future work may focus on relaxing these conditions or extending the framework to more complex settings.

% Together, these components offer a principled foundation for statistical inference under distributional shift, bridging classical identification strategies with modern transfer learning objectives in a theoretically grounded and practically relevant manner.

% The central contribution of this work lies in introducing a new group-label shift assumption within an exponential tilting model that not only mitigates spurious correlations but also enables a transparent identification strategy, drawing a natural parallel to instrumental variable methods and offering a statistically grounded foundation for transfer learning under distributional shift.

\section*{Data Availability Statement}

The authors confirm that the data supporting the findings of this study are available within the article and its supplementary materials.

\bibliographystyle{apalike}
\bibliography{reference}

\begin{thebibliography}{}

\bibitem[Azizzadenesheli et~al., 2019]{azizzadenesheli2019regularized}
Azizzadenesheli, K., Liu, A., Yang, F., and Anandkumar, A. (2019).
\newblock Regularized learning for domain adaptation under label shifts.
\newblock In {\em International Conference on Learning Representations}.

\bibitem[Byrd and Lipton, 2019]{byrd2019effect}
Byrd, J. and Lipton, Z. (2019).
\newblock What is the effect of importance weighting in deep learning?
\newblock In {\em International Conference on Machine Learning}.

\bibitem[Byrne and Dawid, 2014]{byrne2014retrospective}
Byrne, S.~P. and Dawid, A.~P. (2014).
\newblock Retrospective-prospective symmetry in the likelihood and bayesian
  analysis of case-control studies.
\newblock {\em Biometrika}, 101(1):189--204.

\bibitem[Chen et~al., 2021]{chen2021composite}
Chen, J., Li, P., Liu, Y., and Zidek, J.~V. (2021).
\newblock Composite empirical likelihood for multisample clustered data.
\newblock {\em Journal of Nonparametric Statistics}, 33(1):60--81.

\bibitem[Chen and Liu, 2013]{chen2013quantile}
Chen, J. and Liu, Y. (2013).
\newblock Quantile and quantile-function estimations under density ratio model.
\newblock {\em The Annals of Statistics}, 41(3):1669--1692.

\bibitem[Fears and Brown, 1986]{fears1986logistic}
Fears, T.~R. and Brown, C.~C. (1986).
\newblock Logistic regression methods for retrospective case-control studies
  using complex sampling procedures.
\newblock {\em Biometrics}, 42(4):955--960.

\bibitem[Hashemi and Karimi, 2018]{hashemi2018weighted}
Hashemi, M. and Karimi, H. (2018).
\newblock Weighted machine learning.
\newblock {\em Statistics, Optimization and Information Computing},
  6(4):497--525.

\bibitem[Hu et~al., 2025]{hu2024receiver}
Hu, D., Yu, T., and Li, P. (2025).
\newblock Receiver operating characteristic curve analysis with non-ignorable
  missing disease status.
\newblock {\em Canadian Journal of Statistics}.
\newblock In press.

\bibitem[Huang et~al., 2006]{huang2006correcting}
Huang, J., Gretton, A., Borgwardt, K., Sch\"{o}lkopf, B., and Smola, A. (2006).
\newblock Correcting sample selection bias by unlabeled data.
\newblock In {\em Advances in Neural Information Processing Systems}.

\bibitem[Lipton et~al., 2018]{lipton2018detecting}
Lipton, Z., Wang, Y.-X., and Smola, A. (2018).
\newblock Detecting and correcting for label shift with black box predictors.
\newblock In {\em International Conference on Machine Learning}.

\bibitem[Maity et~al., 2023]{maity2022understanding}
Maity, S., Yurochkin, M., Banerjee, M., and Sun, Y. (2023).
\newblock Understanding new tasks through the lens of training data via
  exponential tilting.
\newblock In {\em International Conference on Learning Representations}.

\bibitem[Prentice and Pyke, 1979]{prentice1979logistic}
Prentice, R.~L. and Pyke, R. (1979).
\newblock Logistic disease incidence models and case-control studies.
\newblock {\em Biometrika}, 66(3):403--411.

\bibitem[Sagawa et~al., 2020]{Sagawa2020Distributionally}
Sagawa, S., Koh, P.~W., Hashimoto, T.~B., and Liang, P. (2020).
\newblock Distributionally robust neural networks: On the importance of
  regularization for worst-case generalization.
\newblock In {\em International Conference on Learning Representations}.

\bibitem[Sen, 2018]{sen2018gentle}
Sen, B. (2018).
\newblock A gentle introduction to empirical process theory and applications.
\newblock {\em Lecture Notes, Columbia University}.

\bibitem[Shimodaira, 2000]{shimodaira2000improving}
Shimodaira, H. (2000).
\newblock Improving predictive inference under covariate shift by weighting the
  log-likelihood function.
\newblock {\em Journal of Statistical Planning and Inference}, 90(2):227--244.

\bibitem[Shrikumar and Kundaje, 2019]{shrikumar2019calibration}
Shrikumar, A. and Kundaje, A. (2019).
\newblock Calibration with bias-corrected temperature scaling improves domain
  adaptation under label shift in modern neural networks.
\newblock {\em arXiv:1901.06852}.

\bibitem[Van~der Vaart and Wellner, 1996]{van1996weak}
Van~der Vaart, A. and Wellner, J. (1996).
\newblock {\em Weak Convergence and Empirical Processes}.
\newblock Springer-Verlag, New York.

\bibitem[Van~der Vaart, 2000]{van2000asymptotic}
Van~der Vaart, A.~W. (2000).
\newblock {\em Asymptotic Statistics}.
\newblock Cambridge University Press, Cambridge.

\bibitem[Van~der Vaart and Wellner, 2007]{van2007empirical}
Van~der Vaart, A.~W. and Wellner, J.~A.~W. (2007).
\newblock Empirical processes indexed by estimated functions.
\newblock In {\em IMS Lecture Notes-Monograph Series. Asymptotics: Particles,
  Processes and Inverse Problems. Institute of Mathematical Statistics},
  volume~55, pages 234--252.

\bibitem[Wah et~al., 2011]{wah2011caltech}
Wah, C., Branson, S., Welinder, P., Perona, P., and Belongie, S. (2011).
\newblock The caltech-ucsd birds-200-2011 dataset.

\bibitem[Wang et~al., 2014]{wang2014instrumental}
Wang, S., Shao, J., and Kim, J.~K. (2014).
\newblock An instrumental variable approach for identification and estimation
  with nonignorable nonresponse.
\newblock {\em Statistica Sinica}, 24(3):1097--1116.

\bibitem[Yang et~al., 2023]{change2023yang}
Yang, Y., Zhang, H., Katabi, D., and Ghassemi, M. (2023).
\newblock Change is hard: A closer look at subpopulation shift.
\newblock In {\em International Conference on Machine Learning}.

\bibitem[Zhou et~al., 2017]{zhou2017places}
Zhou, B., Lapedriza, A., Khosla, A., Oliva, A., and Torralba, A. (2017).
\newblock Places: A 10 million image database for scene recognition.
\newblock {\em IEEE Transactions on Pattern Analysis and Machine Intelligence},
  40(6):1452--1464.

\bibitem[Zhuang et~al., 2020]{zhuang2020comprehensive}
Zhuang, F., Qi, Z., Duan, K., Xi, D., Zhu, Y., Zhu, H., Xiong, H., and He, Q.
  (2020).
\newblock A comprehensive survey on transfer learning.
\newblock {\em Proceedings of the IEEE}, 109(1):43--76.

\end{thebibliography}

\newpage

\appendix

\section{Proof of Proposition 1 in the main paper}
\label{proof-pro1}
\begin{proof}
Recall that 
$$
w(x;\theta) = \exp(\alpha_0 + \beta_{0}^{^\top}x_1)\left\{1-g(x)\right\} + 
\exp(\alpha_1 + \beta_{1}^{^\top}x_1)g(x),
$$
where $g(x) = P^{(1)}(Y = 1|X= x)$ denotes the conditional probability of $Y = 1$ given $x$ in the source population. 
Since the labeled source data contain both the covariates $x$ and the label $y$, $g(x)$ is identifiable. 
As discussed in the main paper, identifying the parameters  $(\alpha_0, \beta_0)$ and $(\alpha_1, \beta_1)$
in model (2) of the main paper
is equivalent to identifying 
them in $w(x;\theta)$. 

Consider two arbitrary sets of parameters $(\alpha_0,\beta_0, \alpha_1, \beta_1)$  and $(\alpha_0^*,\beta_0^*, \alpha_1^*, \beta_1^*)$ that satisfy $w(x;\theta) = w(x;\theta^*)$, i.e., 
\begin{align}
\label{eqq}
    &\exp(\alpha_0 + \beta_{0}^{\top}x_1)\{1-g(x)\} + 
\exp(\alpha_1 + \beta_{1}^{\top}x_1)g(x) \notag  \\[6pt]
= &\exp(\alpha_0^* + \beta_{0}^{*\top}x_1)\{1-g(x)\} 
+ \exp(\alpha_1^* + \beta_{1}^{*\top}x_1)g(x). 
\end{align}
We now argue that under Conditions (C1) and (C2), Equation \eqref{eqq} implies  
$$
(\alpha_0,\beta_0, \alpha_1, \beta_1) = (\alpha_0^*,\beta_0^*, \alpha_1^*, \beta_1^*).
$$

Let $\gamma_0 = \alpha_1 - \alpha_0$ and $\gamma_1 = \beta_1 - \beta_0$. 
 Similarly, define $\gamma_0^*$ and $\gamma_1^*$  in terms of $\alpha_k^*$ and $\beta_k^*$ for $k = 0, 1$.
Under Condition (C1), i.e., $g(x)\in (0,1)$,  Equation \eqref{eqq}
implies 
\bas
(\alpha_0^* - \alpha_0^*) + 
(\beta_{0} - \beta_{0}^*)^\top  x_1
= 
\log\left\{
\frac{1 + \exp(\gamma_0^* + \gamma_1^{*\top}x_1)\cdot R(x)}{1 + \exp(\gamma_0 + \gamma_1^{\top}x_1)\cdot R(x)}\right\},
\eas
where $R(x) = g(x)/\{1-g(x)\}$. 

For any  given $t_{k}$ in Condition (C2) of the main paper,  by Part (a) of Condition (C2),
there exists $x_2 \neq x_2^*$ in the support of $x_2$ such  that
$R(t_{k},x_2) \neq R(t_ {k},x_2^*)$. Then
\begin{align}
\label{eq-X2}
    (\alpha_0 - \alpha_0^*) + 
(\beta_{0} - \beta_{0}^*)^\top  
t_{k}
&=
\log\left\{
\frac{1 + \exp(\gamma_0^* + \gamma_1^{*\top}t_{k})\cdot R(t_{k},x_2)}{1 + \exp(\gamma_0 + \gamma_1^{\top}t_{k})\cdot R(t_{k},x_2)}\right\}  \notag \\[6pt]
&=
\log\left\{
\frac{1 + \exp(\gamma_0^* + \gamma_1^{*\top}t_{k})\cdot R(t_{k},x_2^*)}{1 + \exp(\gamma_0 + \gamma_1^{\top}t_{k})\cdot R(t_{k},x_2^*)}\right\},
\end{align}
which implies 
$$
\bigl\{\exp(\gamma_0^* + \gamma_1^{*\top}t_{k})
-\exp(\gamma_0 + \gamma_1^{\top}t_{k})\bigr\}\cdot\bigl\{R(t_{k},x_2) - R(t_{k},x_2^*)\bigr\} = 0.
$$
Since $R(t_{k},x_2) \neq R(t_{k},x_2^*)$, it follows that  
$$
\gamma_0^* + \gamma_1^{*\top}t_{k} = \gamma_0 + \gamma_1^{\top}t_{k}.
$$
Therefore, for any $t_{k}, k=1,\ldots,d+1,$
$$
(\gamma_0^* - \gamma_0) + (\gamma_1^* - \gamma_1)^{\top}t_{k} = 0.$$
By Part (b) of Condition (C2),  we conclude that    $\gamma_0^* = \gamma_0$  and $\gamma_1^* = \gamma_1$. 
Together with \eqref{eq-X2}, this implies
$$
 (\alpha_0 - \alpha_0^*) + 
(\beta_{0} - \beta_{0}^*)^\top  
t_{k} = 0. 
$$
Again, by Part (b) of Condition (C2), we obtain $\alpha_0 = \alpha_0^*$ and $\beta_{0} =\beta_{0}^*$.  Combining these with   $\gamma_0^* = \gamma_0$  and $\gamma_1^* = \gamma_1$, it follows that  $\alpha_1 = \alpha_1^*$ and $\beta_{1} =\beta_{1}^*$. This completes the proof. 
\end{proof}

%\section{Preliminaries for Main Theorems}
\section{Technical preliminaries for Theorems 1–4}
\label{prelim}
Corresponding to the two estimators \(\hat\eta_{\mathrm{IW}}\) and \(\hat\eta_{\mathrm{REG}}\) introduced in the main paper, we augment the parameter vector and define
$$\omega = \bigl(\xi^{\top},\theta^{\top},\eta ,\eta \bigr)^{\top},\quad \nu = (\xi^{\top}, \theta^{\top})^{\top}.$$

Let
\bas
\tilde{S}(x,y,s;\omega) = \bigl( \tilde{S}_{1}^{\top}(x,y,s;\xi), \tilde{S}_{2}^{\top}(x,s;\theta,\xi),
\tilde{S}_{3}(x,y,s;\theta,\eta ),\tilde{S}_{4}(x,s;\theta,\xi,\eta) \bigr)^{\top},
\eas 
where 
\begin{align*}
&\tilde{S}_{1}(x,y,s;\xi) =\frac{s}{1+\rho} \Bigl\{\frac{y}{g(x;\xi)} - \frac{1-y}{1-g(x;\xi)}\Bigr\} \cdot \nabla_{\xi} g(x;\xi),\\[6pt]
&\tilde{S}_{2}(x,s;\theta,\xi) =
\frac{1}{1+\rho}
\Bigl\{ \frac{(1-s)\rho}{ w(x;\theta,\xi)} - \frac{(1-s)\rho^2}{1+\rho w(x;\theta,\xi)} -  \frac{s\rho}{1+\rho w(x;\theta,\xi)}\Bigr\} \cdot \nabla_{\theta} w(x;\theta,\xi), \\[6pt]
&\tilde{S}_{3}(x,y,s;\theta,\eta) = \frac{s}{1+\rho}\Bigl\{h(x,y)w(x,y;\theta) - \eta\Bigr\},  \\[6pt]
&\tilde{S}_{4}(x,s;\theta,\xi,\eta) = \frac{(1-s)\rho}{1+\rho}\Bigl\{h(x,1)\frac{w_1(x;\theta,\xi)}{w(x;\theta,\xi)}+h(x,0)\frac{w_0(x;\theta,\xi)}{w(x;\theta,\xi)} - \eta \Bigr\}.
\end{align*}
Here,  $\nabla_{\xi}$ and $\nabla_{\theta}$ denote partial derivatives with respect to  $\xi$ and $\theta$, respectively.
{Additionally, let $\nabla_{\xi\xi^{\top}}$ denote the Hessian matrix, i.e., 
the matrix of second-order partial derivatives with respect to $\xi$.
 }

For the transfer‐learning setup considered in the main paper, the source and target populations differ; that is, $P^{(0)}(x,y) \neq P^{(1)}(x,y)$. 
For convenience, We use $\Ex_0$ and $\Ex_1$ to denote expectations under $P^{(0)}$ and $P^{(1)}$, respectively.
For any functions $K(x,y,s;\theta,\xi)$ and $K(x,s;\theta,\xi)$, define
\begin{align*}
\Ex\left\{K(X,Y,S;\theta,\xi)\right\} &= \Ex_{1}\left\{K(X,Y,S=1;\theta,\xi)\right\} + \Ex_{0}\left\{K(X,Y,S=0;\theta,\xi)\right\}\\[6pt]
&= \Ex_{1}\left\{K(X,Y,S=1;\theta,\xi) + K(X,Y,S=0;\theta,\xi)w(X,Y;\theta)\right\}, \\[6pt]
\Ex\left\{K(X,S;\theta,\xi)\right\} &= \Ex_{1}\left\{K(X,S=1;\theta,\xi)\right\} + \Ex_{0}\left\{K(X,S=0;\theta,\xi)\right\},\\[6pt]
&= \Ex_{1}\left\{K(X,S=1;\theta,\xi) + K(X,S=0;\theta,\xi)w(X;\theta,\xi)\right\}. 
\end{align*}
% Let $\omega_0 = (\xi_0^{\top},\theta_0^{\top},\eta_{0},\eta_{0})^{\top}$ denote the true value of $\omega$. 
{Let $\nu_0 = (\xi_0^{\top}, \theta_0^{\top})^{\top}$ and $\omega_0 = (\nu_0^{\top}, \eta_0, \eta_0)^{\top}$ be the true values for $\nu$ and $\omega$.}
It is straightforward to verify that $\Ex\bigl\{\tilde{S}(X,Y,S;\omega_0)\bigr\} = 0$.  

For any  vector or matrix $B$, we write $B^{\otimes 2} = BB^{\top}$. 
We use $\|B\|$ to denote the sum of the absolute values of all its elements. Specifically,
{
for a vector $B \in \mathbb{R}^p$, 
\[
\|B\| = \sum_{i=1}^p |B_i|,
\]
and for a matrix $B \in \mathbb{R}^{p \times q}$, 
\[
\|B\| = \sum_{i=1}^p \sum_{j=1}^q |B_{ij}|.
\]
}

The asymptotic results in the main paper rely on the following regularity conditions.

\begin{description}
\item[Condition 1.] The true parameter vector $\omega_0$ lies in the interior of a compact parameter space. Moreover, $\omega_0$ is the unique solution to  
\(
\Ex\bigl\{\tilde{S}(X,Y,S;\omega)\bigr\} = 0.
\)

\item[Condition 2.]
The functions $g(x;\xi)$, $w(x;\theta,\xi)$, and $w(x,y;\theta)$ are twice continuously differentiable with respect to $\xi$ and $\theta$, respectively. Moreover, for any $\xi$, the parametric model $g(x;\xi)$ is continuous in $x$ and strictly bounded between zero and one, i.e.,  
\(
g(x;\xi) \in (0,1) 
\)
for all $x$.

\item[Condition 3.] %\leavevmode\newline\hspace{1.8em}
There exists a measurable function $U(x,y,s)$ with $\Ex\!\left\{U(X,Y,S)\right\} < \infty$ such that, for all $\omega$ in a small neighborhood of $\omega_0$,  
\[
|\tilde{S}(x,y,s;\omega)| \leq U(x,y,s).
\]

\item[Condition 4.] %\leavevmode\newline\hspace{1.8em}
Both matrices 
\[
A_1 = \Ex_1\!\left[\frac{\nabla_{\xi}^{\otimes 2} g(X;\xi_0)}{g(X;\xi_0)\{1-g(X;\xi_0)\}} \right]
\]
and 
\[
A_3 = \Ex_1\!\left[\Bigl\{ \frac{\rho}{1+ \rho w(X;\theta_0,\xi_0)} - \frac{1}{w(X;\theta_0,\xi_0)} \Bigr\}
\nabla_{\theta}^{\otimes 2} w(X;\theta_0,\xi_0)\right]
\]
are invertible.

    \item[Condition 5.] %\leavevmode\newline\hspace{1.8em}
    The sample sizes of the source and target datasets, denoted $n_1$ and $n_0$, satisfy  
\(
{n_0}/{n_1} = \rho,
\)
where $\rho$ is a fixed constant.

\item[Condition 6.]
Let $H(x;\theta,\xi) = w_1(x;\theta,\xi)/w(x;\theta,\xi)$.  
{Assume that each element of $\nabla_{\nu} H(x;\nu_0)$, 
$\nabla_{\xi} g(x;\xi_0)$, and $\nabla_{\theta} w(x;\theta_0,\xi_0)$ 
has a finite $(2+\delta)$-th moment for some $\delta > 0$ under both the source and  target populations.}

\item[Condition 7.] 
The $c(X)$ is a continuous random variable.

{
\item[Condition 8.] 
Let $\mathbb{R}_{\nu_0}$ be a small  neighborhood of $\nu_0$, and let $\mathbb{R}_{c}$ denote the range of $c(x;\nu)$ for $\nu \in \mathbb{R}_{\nu_0}$ and $x \in \mathcal{X}$. 
Assume that $c(X;\nu)$ is a continuous random variable when $\nu\in \mathbb{R}_{\nu_0}$. For any $\nu$ and each $y\in \{0,1\}$, let $F_{y}(\cdot;\nu)$ denote the cumulative distribution function of $c(X;\nu)$ conditional on $Y=y$ in the target population, with corresponding density function $f_{y}(\cdot;\nu)$.  
\begin{itemize}
    \item[(a)] 
Assume that $\mathbb{R}_{c}$ is bounded, and that
\[
    \sup_{\nu \in \mathbb{R}_{\nu_0}} \sup_{t \in \mathbb{R}_{c}} f_{y}(t;\nu) \leq M < \infty.
\]

    \item[(b)] For any $\nu_1,\nu_2 \in \mathbb{R}_{\nu_0}$, there exists a measurable function $L(x)$ and some $\delta > 0$ such that 
\[
    |c(x;\nu_1) - c(x;\nu_2)| \leq L(x)\,\|\nu_1 - \nu_2\|,
\]
and 
\[
    \Ex_{0}\!\left\{L^{2+\delta}(X)\right\} < \infty.
\]
Let $z_{cL}(\cdot,\cdot)$ denotes the joint density of  $(c(X;\nu_0),L(X))$ under the target population. Assume 
$$
\int z^*(t) t dt < \infty,
$$
where 
$
z^*(t) = \sup_{k}z_{cL}(k,t).
$

 \item[(c)] Suppose $F_{y}(\cdot;\nu)$ is twice differentiable with respect to $\nu$, and  
$$
\sup_{t \in \mathbb{R}_{c}}\|\nabla_{\nu}F_{y}(t;\nu_0)\| < \infty. 
$$ 
Furthermore,   
$$
\sup_{\nu \in \mathbb{R}_{\nu_0}} \sup_{t\in \mathbb{R}_{c}} \|\nabla_{\nu\nu^{\top}}F_{y}(t;\nu)\|  < \infty.
$$

% \item[(d)] Define
% $$
% m_{y}(t;\nu) = \Ex_{0}\left\{\nabla_{\nu}H(X;\nu_0)|c(X;\nu) =t, Y=y\right\}. 
% $$
% Assume that 
% $$
% \sup_{t}\int_{-\infty}^{t} \nabla_{\nu}(m_{y}(u;\nu_0)f_{y}(u;\nu_0)) du < \infty. 
% $$
% Furthermore,  for any $\nu$ in a small neighborhood of $\nu_0$,  
% $$
%  \sup_{t}\int_{-\infty}^{t} \nabla_{\nu\nu^{\top}}(m_{y}(u;\nu)f_{y}(u;\nu)) du  < \infty. 
% $$
\end{itemize}

}
\end{description}

\section{Proof of  Theorems 1 and 2 in the main paper}
\label{proof-thm12}
Recall that the target functional is defined as $\eta = \Ex_0\{h(X, Y)\}$ in the main paper.
Let
\[
 S_{n_0+n_1}(\omega) = (S_{n_0+n_1,1}^{\top}(\xi), S_{n_0+n_1,2}^{\top}(\theta,\xi), S_{n_0+n_1,3}(\theta,\eta),S_{n_0+n_1,4}(\theta,\xi,\eta))^{\top},
\]
where 
\begin{align*}
&S_{n_0+n_1,1}(\xi)  = \sum_{i=1}^{n_0+n_1} s_{i}\Bigl\{\frac{y_i}{g(x_i;\xi)} 
 - \frac{1-y_i}{1-g(x_i;\xi)} 
 \Bigr\}\nabla_{\xi}g(x_i;\xi ), \\[6pt]
&  S_{n_0+n_1,2}(\theta,\xi) = \sum_{i=1}^{n_0+n_1} \Bigl\{ \frac{(1-s_i)}{ w(x_i;\theta,\xi)} - \frac{(1-s_i)\rho}{1+\rho w(x_i;\theta,\xi)} -\frac{  s_i \rho}{1+\rho w(x_i;\theta,\xi)}      \Bigr\} \nabla_{\theta} w(x_i;\theta,\xi) ,\\[6pt]
&  S_{n_0+n_1,3}(\theta,\eta ) = \sum_{i=1}^{n_0+n_1} s_{i} \Bigl\{h(x_i,y_i)w(x_i,y_i;\theta) - \eta \Bigr\}, \\[6pt]
&  S_{n_0+n_1,4}(\theta,\xi,\eta)= 
  \sum_{i=1}^{n_0+n_1}(1-s_i) \Bigl\{h(x_i,1)\frac{w_1(x_i;\theta,\xi)}{w(x_i;\theta,\xi)}+h(x_i,0)\frac{w_0(x_i;\theta,\xi)}{w(x_i;\theta,\xi)} - \eta \Bigr\}.  
\end{align*}
By the weak law of large numbers, it follows that for any fixed $\omega$, 
\begin{equation}
\label{S3.eq1}
S_{n_0+n_1}(\omega)/(n_0+n_1) \convergepto \Ex\bigl\{\tilde{S}(X, Y, S;\omega)\bigr\}.
\end{equation}

Recall that $\hat\xi$ and $\hat\theta$ are defined in Equations (10) and (11), respectively. 
It can be verified that $\hat{\omega} = (\hat\xi^{\top},\hat\theta^{\top},\hat\eta_{\mathrm{IW}},\hat\eta_{\mathrm{REG}})^{\top} $  satisfies 
$$
S_{n_0+n_1}(\hat\omega) = 0.
$$
Using Equation \eqref{S3.eq1}, and  under Condition 1--Condition 5, it follows directly from Theorem 5.9 of \cite{van2000asymptotic} that 
$$
\hat{\omega} \convergepto \omega_0 .$$

Next, we discuss the asymptotic normality of $\hat{\omega}$. Recall that for any matrix or vector $B$, $B^{\otimes 2} = BB^{\top}$, and define
\begin{align*}
&A_1 = \Ex_1\bigl[\nabla_{\xi}^{\otimes 2}g(X;\xi_0)/\left\{g(X;\xi_0)(1-g(X;\xi_0))\right\} \bigr], \\[6pt]
&A_2 = \Ex_1\left[\left\{
\rho/(1+\rho w(X;\theta_0,\xi_0) )-1/ w(X;\theta_0,\xi_0)\right\}
\nabla_{\xi}w(X;\theta_0,\xi_0)\nabla^{\top}_{\theta} w(X;\theta_0,\xi_0) \right]
, \\[6pt]
&A_3 = \Ex_1\left[\left\{
\rho/(1+\rho w(X;\theta_0,\xi_0) )-1/ w(X;\theta_0,\xi_0)\right\}
\nabla_{\theta}^{\otimes 2} w(X;\theta_0,\xi_0)\right],
\\[6pt]
&A_4 = \Ex_1\left[
h(X,Y)
\nabla_{\theta}w(X,Y;\theta_0)
\right], \\[6pt]
&A_5 = \Ex_1\left[\left\{
h(X,1)-h(X,0)\right\}\cdot \left\{ 1-H(X;\theta_0,\xi_0) \right\}\cdot  \nabla_{\xi}w_1(X;\theta_0,\xi_0)
\right] \\[6pt]
&\hspace{1cm}  - \Ex_1\left[\left\{
h(X,1)-h(X,0)\right\}\cdot H(X;\theta_0,\xi_0)\cdot  \nabla_{\xi}w_0(X;\theta_0,\xi_0)\right]
,\\[6pt]
&A_6 = \Ex_1\left[\left\{
h(X,1)-h(X,0)\right\}\cdot \left\{ 1-H(X;\theta_0,\xi_0) \right\} \cdot  \nabla_{\theta}w_1(X;\theta_0,\xi_0)\right]
 \\[6pt]
&\hspace{1cm}  - \Ex_1
\left[ \left\{h(X,1)-h(X,0)\right\}\cdot H(X;\theta_0,\xi_0)\cdot  \nabla_{\theta}w_0(X;\theta_0,\xi_0)\right].
\end{align*}
In addition, we define 
\begin{align*}
&E_1 = \Ex_1\left[\rho^2\nabla_{\theta}^{\otimes 2}w(X;\theta_0,\xi_0)/\left\{1+ 
\rho w(X;\theta_0,\xi_0)\right\}^2 \right] \\[6pt]
& \hspace{1cm} -
\left\{\Ex_1\left[\rho \nabla_{\theta} 
w(X;\theta_0,\xi_0)/\left\{1+\rho w(X;\theta_0,\xi_0)\right\}\right]\right\}^{\otimes 2},\\[6pt]
&E_2=\Ex_1\left[\nabla_{\theta}^{\otimes 2}w(X;\theta_0,\xi_0)/\left\{w(X;\theta_0,\xi_0)(1+ 
\rho w(X;\theta_0,\xi_0))^2\right\} \right]  \\[6pt]
& \hspace{1cm} -    
\left\{\Ex_1\left[\nabla_{\theta}w(X;\theta_0,\xi_0)/
 \left\{1+\rho w(X;\theta_0,\xi_0)\right\}\right]\right\}^{\otimes 2}, \\[6pt]
&E_3 = \Ex_1\left[h^2(X,Y)W^{2}(X,Y;\theta_0) \right], \\[6pt]
&E_4 =  \Ex_1 \left[ \left\{h(X,1)w_1(X;\theta_0,\xi_0)+ h(X,0)w_0(X;\theta_0,\xi_0)\right\}^2/w(X;\theta_0,\xi_0)\right], \\[6pt]
&E_5 = \Ex_1 \left[ \left\{h(X,1)w(X,1;\theta_0)- h(X,0)w(X,0;\theta_0)\right\}\nabla_{\xi}g(X;\xi_0) \right],\\[6pt]
&E_6 
= \Ex_1 \Bigl[ 
   \nabla_{\theta} w(X;\theta_0,\xi_0)\,
   \frac{ h(X,1)H(X;\theta_0,\xi_0) + h(X,0)\left\{1-H(X;\theta_0,\xi_0)\right\} }
        {1 + \rho w(X;\theta_0,\xi_0)}
   \Bigr] \\[6pt]
&\hspace{1cm} - \Ex_1 \Bigl\{ h(X,1)\,w_1(X;\theta_0,\xi_0) + h(X,0)\,w_0(X;\theta_0,\xi_0) \Bigr\}\cdot
          \Ex_1 \Bigl\{ \frac{\nabla_{\theta} w(X;\theta_0,\xi_0)}{1 + \rho\,w(X;\theta_0,\xi_0)} \Bigr\}.
\end{align*}

It is straightforward to verify that 
$$
\Ex\left\{  S_{n_0+n_1} (\omega_0)  \right\} = 0,$$ 
and it is also easy to check that  
\begin{align*}
    \Var\left\{ S_{n_0+n_1} (\omega_0) \right\} &= 
\frac{n_0+n_1}{1+\rho}E, 
\end{align*}
where 
$$
E=
\begin{pmatrix}
A_1 & 0 & E_5 & 0 \\
0 & E_1 + \rho E_2 & 0 & \rho E_6\\
E_5^{\top} &0 & E_3 - \eta_0^2 & 0 \\
0 &\rho E_6^{\top} &0 & \rho( E_4 - \eta_0^2) 
\end{pmatrix}. 
$$
By the central limit theorem,
\begin{equation}
\label{S3.clt}
\frac{1}{\sqrt{n_0+n_1}} S_{n_0+n_1}(\omega_0) 
\convergeto {\cal N}\Bigl(0, \frac{1}{1+\rho} E\Bigr).
\end{equation}

Applying a first-order Taylor expansion and the weak law of large numbers, under Condition 6, we  have
\begin{equation}
\label{S3.expansion1}
0 = S_{n_0+n_1}(\hat{\omega}) = S_{n_0+n_1}({\omega_0}) + \nabla^{\top}_{\omega} S_{n_0+n_1}({\omega_0}) (\hat{\omega} - {\omega_0}) + o_P(\sqrt{n_0+n_1}). 
\end{equation}
By the weak law of large numbers,   
\begin{equation}
\label{S3.WLL1}
 \frac{1}{n_0+n_1} \nabla_{\omega^{\top}} S_{n_0+n_1}({\omega_0}) = \frac{1}{1+\rho} A+o_p(1), 
\end{equation}
where 
$$A=
 \begin{pmatrix}
-A_1 &  0 & 0 &0 \\
\rho A_2^{\top} & \rho A_3  &0& 0\\
0 & A_4^{\top} & -1  & 0\\
\rho A_5^{\top} & \rho A_6^{\top} & 0 & -\rho
\end{pmatrix}.
$$
Under Condition 4, the matrix $A$ is invertible, and 
\begin{equation}
    \label{S3.Ainv}
A^{-1} = \begin{pmatrix}
-A_1^{-1} &0 &0 &0 \\
A_3^{-1}A_2^{\top}A_{1}^{-1} & \frac{1}{\rho}A_3^{-1} & 0&0 \\
A_4^{\top}A_3^{-1}A_2^{\top}A_1^{-1} & \frac{1}{\rho}A_4^{\top}A_3^{-1} & -1 &0 \\
-A_5^{\top}A_1^{-1} + A_6^{\top}A_3^{-1}A_2^{\top}A_1^{-1} & \frac{1}{
\rho
}A_6^{\top}A_3^{-1} & 0 & -\frac{1}{\rho}
\end{pmatrix}. 
\end{equation}

Plugging \eqref{S3.WLL1} to \eqref{S3.expansion1}, and after some algebra, we obtain
\begin{equation}
\label{expan-omega}
    \sqrt{n_0+n_1}(\hat{\omega} - {\omega_0}) = - (1+\rho)A^{-1} \frac{1}{\sqrt{n_0+n_1}}S_{n_0+n_1}(\omega_0) + o_p(1),
\end{equation}
which, together with \eqref{S3.clt} and Slutsky's theorem, implies  
\begin{equation}
\label{S3.normality}
\sqrt{n_0+n_1} (\hat{\omega} - \omega_0) \ \convergeto \ \mathcal{N}\Bigl(0, (1+\rho) \tilde \Sigma \Bigr),
\end{equation}
where $\tilde{\Sigma} = A^{-1} E (A^{-1})^\top$. 
After some algebra, we have 
\ba
\label{S3.tilde.sigma}
\tilde{\Sigma}  
=A^{-1} \begin{pmatrix}
A_1 & 0 & E_5 & 0 \\
0 & E_1 + \rho E_2 & 0 & \rho E_6\\
E_5^{\top} &0 & E_3 - \eta_0^2 & 0 \\
0 &\rho E_6^{\top} &0 & \rho (E_4 - \eta_0^2) 
\end{pmatrix} (A^{-1})^{\top}
\ea
with 
\begin{align*}
&\tilde{\Sigma}_{11} = A_1^{-1},  \\[6pt]
&\tilde{\Sigma}_{12} = \tilde{\Sigma}_{21}^{\top} = -A_1^{-1}A_2 A_3^{-1},  \\[6pt]
&\tilde{\Sigma}_{13} = \tilde{\Sigma}_{31}^{\top} = A_1^{-1}(E_5 - A_2 A_3^{-1}A_4), \\[6pt]
&\tilde{\Sigma}_{14} = \tilde{\Sigma}_{41}^{\top} = A_1^{-1}(A_5- A_2 A_3^{-1}A_6),\\[6pt] 
&\tilde{\Sigma}_{22} = A_3^{-1}\left\{ A_2^{\top}A_1^{-1}A_2 + (E_1 + \rho E_2)/\rho^2\right\}A_3^{-1},\\[6pt]
&\tilde{\Sigma}_{23} = \tilde{\Sigma}_{32}^{\top} =
A_3^{-1}A_2^{\top}A_1^{-1}(A_2A_3^{-1}A_4 -E_5) + A_3^{-1}(E_1 +\rho E_2)A_3^{-1}A_4/\rho^2, \\[6pt]
&\tilde{\Sigma}_{24} = \tilde{\Sigma}_{42}^{\top} =
A_3^{-1}A_2^{\top}A_1^{-1}(A_2A_3^{-1}A_6 -A_5) + A_3^{-1}(E_1 +\rho E_2)A_3^{-1}A_6/\rho^2+A_3^{-1}E_6/\rho, \\[6pt]
&\tilde{\Sigma}_{33}= A_4^{\top}A_3^{-1 }\left\{ A_2^{\top}A_1^{-1}A_2 + (E_1 + \rho E_2)/\rho^2\right\}A_3^{-1 }A_4 - 2E_5^{\top}A_1^{-1}A_2A_3^{-1}A_4 + E_3 -\eta_0^2,\\[6pt]
&\tilde{\Sigma}_{34} = \tilde{\Sigma}_{43}^{\top} =
A_4^{\top}A_3^{-1}\left\{ A_2^{\top}A_1^{-1}A_2 + (E_1 + \rho E_2)/\rho^2\right\}A_3^{-1}A_6  
   \\[6pt]
&\hspace{2cm}+ A_4^{\top}A_3^{-1}\left\{ E_6/\rho - A_2^{\top}A_{1}^{-1}A_5\right\} +
E_5^{\top}A_{1}^{-1}\left\{A_5 - A_2A_3^{-1}A_6\right\}, \\[6pt]
&\tilde{\Sigma}_{44} =  A_6^{\top}A_3^{-1}\left\{A_2^{\top}A_1^{-1}A_2 + (E_1 + \rho E_2)/\rho^2\right\}A_3^{-1}A_6  
-2A_5^{\top}A_1^{-1}A_2 A_3^{-1}A_6      \\[6pt]
&\hspace{1cm} + A_5^{\top}A_1^{-1}A_5 -  (E_4 -\eta_0^2)/\rho. 
\end{align*}

Let $d_{\xi}$ and $d_{\theta}$ denote the dimensions of $\xi$ and $\theta$, respectively, and let $I_d$ denote the  $d\times d$ identity matrix. Define $D \in \mathbb{R}^{(d_{\xi}+d_{\theta}) \times (d_{\xi}+d_{\theta}+2)}$ as
\begin{equation*}
    D = \begin{pmatrix}
    I_{d_{\xi}} & 0 & {0}  & 0\\
    0&  I_{d_{\theta}} & {0} & 0
\end{pmatrix}.
\end{equation*}

By \eqref{S3.normality}, we directly obtain
\begin{equation}
\label{asy-nu}
    \sqrt{n_0+n_1} 
\begin{pmatrix}
\hat{\xi} - \xi_0 \\
\hat{\theta} - \theta_0
\end{pmatrix} 
= \sqrt{n_0+n_1} \, D (\hat{\omega} - \omega_0) 
\ \convergeto \ \mathcal{N}\Bigl(0, (1+\rho) \Sigma\Bigr),
\end{equation}
where 
$$
\Sigma=D  \tilde\Sigma D^{\top}.
$$
Together with \eqref{S3.tilde.sigma}, this implies that 
\ba
\Sigma= \begin{pmatrix}
A_{1}^{-1} & -A_{1}^{-1}A_{2}A_{3}^{-1} \\
-A_{3}^{-1}A_{2}^{\top}A_{1}^{-1} & A_{3}^{-1}\left\{
A_{2}^{\top}A_{1}^{-1}A_{2} + (E_{1} + \rho E_{2})/\rho^2 \right\} A_{3}^{-1}
\end{pmatrix}.
\ea 

Let ${e}_{k}$ be the $(d_{\xi}+d_{\theta} +2)$-dimensional vector whose $k$-th element is 1 and the rest are 0. 
Then, for estimators $\hat{\eta}_{\mathrm{IW}}$ and $\hat{\eta}_{\mathrm{REG}}$, we have
\bas
\sqrt{n_0+n_1}(\hat{\eta}_{\mathrm{IW}} - \eta_0) = \sqrt{n_0+n_1}  {e}^{\top}_{d_{\xi}+d_{\theta} +1} (\hat{\omega}-\omega_0)  
& \convergeto &
\mathcal{N}\bigl(0, (1+\rho) \sigma_{\mathrm{IW}}^2\bigr)
\eas 
and 
\bas
\sqrt{n_0+n_1}(\hat{\eta}_{\mathrm{REG}} - \eta_0) =\sqrt{n_0+n_1} {e}^{\top}_{d_{\xi}+d_{\theta} +2} (\hat{\omega}-\omega_0)  
& \convergeto &
 \mathcal{N}\bigl(0, (1+\rho) \sigma_{\mathrm{REG}}^2\bigr),
\eas
where
\begin{equation}
\sigma_{\mathrm{IW}}^2
=A_4^{\top}A_3^{-1 }\left\{ A_2^{\top}A_1^{-1}A_2 + 
(E_1 + \rho E_2)/\rho^2\right\}A_3^{-1 }A_4 - 2E_5^{\top}A_1^{-1}A_2A_3^{-1}A_4 +E_3 -\eta_0^2
\end{equation}
and 
\begin{align}
    \sigma_{\mathrm{REG}}^2
&=A_6^{\top}A_3^{-1}\left\{ A_2^{\top}A_1^{-1}A_2 + (E_1 + \rho E_2)/\rho^2\right\}A_3^{-1}A_6 
-2A_5^{\top}A_1^{-1}A_2 A_3^{-1}A_6  \notag\\
&\hspace{0.2cm} + A_5^{\top}A_1^{-1}A_5-  (E_4 -\eta_0^2)/\rho.
\end{align}
This completes the proof of Theorem 2 in the main paper.

\section{Proof of Theorem 3 in the main paper}
\label{proof-thm3}
\subsection{Some preparation\label{S4.section1}}
In this section, we introduce notation and present preliminary results for the proof of Theorem 3 in the main paper. Recall that Theorem 3 concerns a fixed, known classifier 
$c(x)$, which is assumed to be independent of the estimated parameters.

Define $$
H(x;\nu_0) = w_{1}(x;\nu_0)/w(x;\nu_0).
$$
For any $u \in (0,1)$, let
$$
\tau_{1-u} = F_{0}^{-1}(1-u). 
$$
Further, define 
\[Z_{n_0+n_1} = (Z_{n_0+n_1,1}^{\top},Z_{n_0+n_1,2},Z_{n_0+n_1,3},Z_{n_0+n_1,4},Z_{n_0+n_1,5})^{\top},\]
where 
\begin{align*}
        Z_{n_0+n_1,1} &= {(n_0+n_1)}^{-1}  \begin{pmatrix}
    S_{n_0+n_1,1}(\xi_0) \\[6pt]
    S_{n_0+n_1,2}(\theta_0,\xi_0)
\end{pmatrix}, \\[6pt]
Z_{n_0+n_1,2} &= {(n_0+n_1)}^{-1} \sum_{i=1}^{n_0+n_1} (1-s_i)H(x_i;\nu_0)\left\{F_0(c(x_i)) - AUC\right\}, \\[6pt]
Z_{n_0+n_1,3} &=  {(n_0+n_1)}^{-1} \sum_{i=1}^{n_0+n_1} (1-s_i)\left\{1-H(x_i;\nu_0)\right\} \left\{1 - F_1(c(x_i)) - AUC\right\}, \\[6pt]
Z_{n_0+n_1,4} &=  {(n_0+n_1)}^{-1} \sum_{i=1}^{n_0+n_1} (1-s_i)\left\{1-H(x_i;\nu_0)\right\}  \left\{I(c(x_i)\leq \tau_{1-u}) - F_0(\tau_{1-u})\right\}, \\[6pt]
Z_{n_0+n_1,5} &= {(n_0+n_1)}^{-1} \sum_{i=1}^{n_0+n_1} (1-s_i) H(x_i;\nu_0) \left\{I(c(x_i)\leq \tau_{1-u}) - F_1(\tau_{1-u})\right\}.
\end{align*}

It can be verified that $$\Ex(Z_{n_0+n_1}) = 0.$$
Note that $Z_{n_0+n_1}$ is the average of independent random vectors. 
By the central limit theorem,  we have  
\begin{equation}
\label{asy-Z}
    \sqrt{n_0+n_1} Z_{n_0+n_1} \convergeto {\cal N}(0, \Sigma_{Z}),
\end{equation}
where 
$\Sigma_Z$ is the variance and covariance matrix of $\sqrt{n_0+n_1}Z_{n_0+n_1}$.

%Let $J$ be the Fisher-information matrix defined as $\e\left\{\nabla_{\nu} Z_{n_0+n_1,1}\right\}^{-1}$.
Define
\begin{align}
\bm{E}_{1} &= \Ex_{0} \left[ \nabla_{\nu}H(X;\nu_0)\left\{ F_0(c(X)) -AUC\right\}\right],\label{matrix-e1}\\
\bm{E}_{2} &= \Ex_{0}\left[\nabla_{\nu}H(X;\nu_0)  \left\{ 1- F_1(c(X)) - AUC\right\}\right],\label{matrix-e2}\\
\bm{E}_{3} &= \Ex_{0}\left[ \nabla_{\nu}H(X;\nu_0) \left\{ I(c(X)\leq\tau_{1-u}) -F_0(\tau_{1-u})\right\}\right],\label{matrix-e3}\\
\bm{E}_{4} &= \Ex_{0}  \left[ \nabla_{\nu}H(X;\nu_0)  \left\{ I(c(X)\leq\tau_{1-u}) -F_1(\tau_{1-u})\right\} \right]. \label{matrix-e4}
\end{align}
Let $$
\bar{H} = \sum_{j=1}^{n_0} H(x_{n_1 + j};\hat{\nu}) / n_0.
$$
Given a known classifier $c(x)$, the estimators for the distribution functions $F_0$ and $F_1$ are defined as
\begin{equation}
    \label{e-F0}
    \widehat{F}_{0}(u) = \frac{1}{n_0(1-\bar{H})}\sum_{j=1}^{n_0}(1-H(x_{n_1 + j};\hat{\nu}))\cdot I(c(x_{n_1 + j})\leq u)
\end{equation}
and 
\begin{equation}
    \label{e-F1}
    \widehat{F}_{1}(u) = \frac{1}{n_0\bar{H}}\sum_{j=1}^{n_0}H(x_{n_1 + j};\hat{\nu})\cdot I(c(x_{n_1 + j}) \leq u).
\end{equation}

Let 
\begin{equation*}
%\label{V-matrix}
    V =  \begin{pmatrix}
    -A_1^{-1} & 0 \\
    A_3^{-1}A_2^{\top}A_1{-1} & A_3^{-1}/\rho 
\end{pmatrix},
\end{equation*}
which corresponds to the upper-left
$(d_{\xi} + d_{\theta})\times (d_{\xi} + d_{\theta})$ submatrix of $A^{-1}$ in \eqref{S3.Ainv}. 
 \begin{lemma}
   \label{lem-toprovethm3}
   Suppose Conditions 1--7  in Section~\ref{prelim} are satisfied. We have
   \begin{description}
       \item[(a)] $\hat{\nu} - {\nu}_0 = - (1+\rho)VZ_{n_0+n_1,1} + o_p((n_0+n_1)^{-1/2}) = O_p((n_0+n_1)^{-1/2})$;   
       \item[(b)] the processes $\sqrt{n_0+n_1}\bigl\{\widehat{F}_0(u) - F_0(u)\bigr\}$ and $\sqrt{n_0+n_1}\bigl\{\widehat{F}_1(t) - F_1(t)\bigr\}$ converge jointly to a bivariate, tight, mean-zero Gaussian process.
   \end{description}
 \end{lemma}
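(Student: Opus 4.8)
The plan is to prove the two parts in sequence, treating part (a) as essentially a restriction of the Taylor expansion already carried out for Theorem 1, and part (b) as a two-source empirical-process argument built on top of (a). Throughout I write $N = n_0 + n_1$.

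\textbf{Part (a).} The first two blocks of the estimating system, $S_{n_0+n_1,1}(\xi) = 0$ and $S_{n_0+n_1,2}(\theta,\xi) = 0$, form a self-contained subsystem that does not involve $\eta$. I would therefore extract the first $(d_{\xi} + d_{\theta})$ coordinates of the expansion \eqref{expan-omega}, or equivalently re-derive the expansion for this subsystem alone. Because the limiting Jacobian $A$ in \eqref{S3.WLL1} is block lower-triangular with the $(\xi,\theta)$ block in its top-left corner, the upper-left $(d_{\xi}+d_{\theta})\times(d_{\xi}+d_{\theta})$ submatrix of $A^{-1}$ equals the inverse of the corresponding block of $A$, which is exactly $V$. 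Combining the weak law of large numbers for the subsystem Jacobian with the identity that $Z_{n_0+n_1,1}$ is precisely $N^{-1}$ times the stacked first two score blocks yields the stated linear representation $\hat\nu - \nu_0 = -(1+\rho)VZ_{n_0+n_1,1} + o_p(N^{-1/2})$ and the $O_p(N^{-1/2})$ rate simultaneously.

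\textbf{Part (b).} For each $y\in\{0,1\}$ I would decompose $\sqrt{N}\{\widehat F_y(u) - F_y(u)\}$ into an empirical-process term evaluated at the true $\nu_0$ plus a plug-in term capturing the estimation error $\hat\nu - \nu_0$. Writing the numerators of \eqref{e-F0} and \eqref{e-F1} in terms of $H(x;\hat\nu)$ and Taylor-expanding $H(x;\hat\nu) = H(x;\nu_0) + \nabla_{\nu}H(x;\nu_0)^{\top}(\hat\nu-\nu_0) + o_p$, the leading stochastic piece is the target empirical process of the fixed-$\nu_0$ summands $\{1-H(x;\nu_0)\}I(c(x)\le u)$ and $H(x;\nu_0)I(c(x)\le u)$, while the plug-in piece equals $-(\hat\nu-\nu_0)^{\top}\Ex_0[\nabla_{\nu}H(X;\nu_0)\,I(c(X)\le u)] + o_p(N^{-1/2})$ uniformly in $u$, by the weak law of large numbers. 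The denominators $\bar H$ and $1-\bar H$ converge to $\mu = \Ex_0[H(X;\nu_0)]$ and $1-\mu$, so they can be pulled outside via Slutsky and continuous mapping after the expansion.

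The two analytic facts to verify are: (i) the indicator family $\{I(c(x)\le u): u\in\mathbb{R}\}$ is a VC-subgraph class, being indexed by a single real threshold (here Condition 7 ensures $c(X)$ is continuous), so multiplying by the bounded weights $H(x;\nu_0)$ or $1-H(x;\nu_0)$ preserves the $P_0$-Donsker property, with an integrable envelope supplied by the $(2+\delta)$-th moment condition in Condition 6; and (ii) the plug-in term is asymptotically linear in the score through part (a). To assemble the joint limit I would stack the two target empirical processes together with the finite-dimensional score representation of $\hat\nu - \nu_0$ into one vector-valued object—precisely the role of $Z_{n_0+n_1}$, whose multivariate central limit theorem \eqref{asy-Z} is already in hand—and express $\sqrt{N}(\widehat F_0 - F_0,\ \widehat F_1 - F_1)$ as asymptotically linear functionals of a single Donsker process indexed by (score coordinates, $u$). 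Tightness then follows from the Donsker property of the indicator class together with the finite-dimensionality of the $\hat\nu$ contribution, and the mean-zero property holds by construction since every term is centered.

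The main obstacle will be the careful bookkeeping of the \emph{joint} bivariate convergence rather than either marginal limit: because $\widehat F_0$ and $\widehat F_1$ share both the same target sample and the same plug-in estimator $\hat\nu$, their limits are cross-correlated through two distinct channels, and one must track both the common empirical-process fluctuations and the common $\hat\nu$-linearization to pin down the covariance structure of the limiting Gaussian process. The Donsker verification itself is routine here precisely because the classifier $c$ is fixed; this is exactly the simplification relative to Theorem 4, where $I(c(x;\hat\nu)\le u)$ introduces non-smooth dependence on $\hat\nu$ and forces a genuine stochastic-equicontinuity argument.
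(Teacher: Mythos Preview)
Your proposal is correct and follows the same architecture as the paper: part (a) extracts the $(\xi,\theta)$ coordinates from the Theorem~1 expansion via the block-triangular structure of $A$, and part (b) Taylor-expands $H(x;\hat\nu)$ around $\nu_0$, verifies that the indicator-weighted classes are Donsker, and derives uniform asymptotically linear representations whose influence functions $L_0(\cdot;u)$ and $L_1(\cdot;t)$ again form Donsker classes, giving the joint weak convergence. One small caveat: the uniform-in-$u$ replacement of the sample average of $\nabla_\nu H(x_{n_1+j};\nu_0)\,I(c(x_{n_1+j})\le u)$ by its population expectation is not justified ``by the weak law of large numbers'' alone---the paper invokes the Donsker property of the class $\{x\mapsto \nabla_\nu H(x;\nu_0)[I(c(x)\le t)-F_1(t)]\}$ (Example~2.10.10 in van der Vaart--Wellner) to obtain the uniform $O_p(N^{-1/2})$ rate on that residual, a tool you correctly cite elsewhere in your plan but mislabel at this particular step.
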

\begin{proof} 
(a)
By \eqref{expan-omega} and \eqref{asy-nu}, 
we have
$$
\hat{\nu} - \nu_0 = - D A^{-1}S_{n_0+n_1}/(n_0+n_1) + o_p((n_0+n_1)^{-1/2})  = - (1+\rho)V Z_{n_0+n_1,1} +  o_p((n_0+n_1)^{-1/2}).$$ 
From the asymptotic normality of $\hat{\nu} - \nu_0$ established in Theorem~1 of the main paper, it follows that $\hat{\nu} - \nu_0 = O_p((n_0 + n_1)^{-1/2})$.

(b) We can verify that 
\begin{align}
\label{eF0-dif0}
        \widehat{F}_1(t) - F_1(t) &=\frac{D_{n_0 1}(t)}{\bar{H}}, 
  \end{align}
  where 
  \begin{align*}
        D_{n_0 1}(t)=
        \frac{1}{n_0}
\sum_{j=1}^{n_0}H(x_{n_1 + j};\hat{\nu}) \left\{I(c(x_{n_1 + j}) \leq t) - F_1(t)\right\}.
    \end{align*}

Next, we derive an approximation for $D_{n_01}(t)$. Applying a first-order Taylor expansion, we obtain
\begin{align}
\label{expan-D0}
        D_{n_0 1}(t) =& \frac{1}{n_0}\sum_{j=1}^{n_0} H(x_{n_1 + j};\nu_0) \left\{I(c(x_{n_1 + j}) \leq t) - F_1(t)\right\} \notag  \\
        &+ \frac{1}{n_0}\sum_{j=1}^{n_0}
        \nabla_{\nu}^{\top} H(x_{n_1 + j};\nu_0)\left\{I(c(x_{n_1 + j}) \leq t) - F_1(t)\right\}(\hat{\nu} - \nu_0) + e_{n_0 1}(t) \notag \\
            =&\frac{1}{n_0}\sum_{j=1}^{n_0}H(x_{n_1 + j};\nu_0)\left\{I(c(x_{n_1 + j}) \leq t) - F_1(t)\right\}+\bm{K}_{1}^{\top}(t) (\hat{\nu} - \nu_0) + e_{n_0}(t),
    \end{align}
where $e_{n_0 1}(t)$ denotes the remainder term from the first-order expansion, and
\bas
e_{n_0}(t) = e_{n_0 1}(t) - 
\bm{e}_{n_0 2}^{\top}(t)(\hat{\nu} - \nu_0)
\eas 
with 
\bas
\bm{e}_{n_0 2}(t) = \frac{1}{n_0}\sum_{j=1}^{n_0}
        \nabla_{\nu} H(x_{n_1 + j};\nu_0)\left\{I(c(x_{n_1 + j}) \leq t) - F_1(t)\right\}-\bm{K}_{1}(t)
\eas
and 
\begin{equation*}
\bm{K}_{1}(t) = \Ex_0 \left[ \nabla_{\nu} H(X;\nu_0) \left\{ I(c(X) \leq t) - F_1(t)\right\}\right].
\end{equation*}

Let $\mathcal{G}_{1}$ be a function class indexed by $t$, defined as 
\bas
\mathcal{G}_{1} = \left\{x \mathrel{\mapsto}
\nabla_{\nu}H(x;\nu_0) \left\{I(c(x) \leq t) - F_1(t)\right\}: t \in \mathbb{R}
\right\}. 
\eas 
Since $\nabla_{\nu} H(x; \nu_0)$ is square-integrable under Condition 6, it follows from Theorem 2.10.6 and Example 2.10.10 in \cite{van1996weak} that the function class $\mathcal{G}_1$ is Donsker. Moreover, the condition $n_0 / n_1 = \rho$ implies $n_0 = O(n_1)$, and hence
\ba
\label{errtem2}
\sup_{t}\|\bm{e}_{n_02}(t)\| = O_p(n_0^{-1/2})=  O_p((n_0+n_1)^{-1/2}).
\ea

Since $I(c(x) \leq t) - F_1(t)$ is uniformly bounded in $u$ and $\nabla_{\nu}H(x;\nu_0)$ has finite second  moment, it follows from Part (a) of Lemma~\ref{lem-toprovethm3} that
    \begin{align}
    \label{errtem1}
\sup_{t}| e_{n_0 1}(t)|  &= o_p(1)\sup_{t}| \frac{1}{n_0}\sum_{j=1}^{n_0}
        \nabla_{\nu}^{\top} H(x_{n_1 + j};\nu_0))\left\{I(c(x_{n_1 + j}) \leq t) - F_1(t)\right\}(\hat{\nu} - \nu_0) | \notag  \\
        &= o_p((n_0+n_1)^{-1/2}).
    \end{align}
Combining \eqref{errtem1} and \eqref{errtem2}, 
and using the consistency of $\hat{\nu}$, we conclude that
\ba
\label{errtem}
\sup_{t}\|e_{n_0}(t)\| = o_p((n_0+n_1)^{-1/2}).
\ea 
Substituting \eqref{errtem} into 
\eqref{expan-D0} gives
\begin{align}
        \label{re-dm0}
D_{n_0 1}(t) &= \frac{1}{n_0}\sum_{j=1}^{n_0}H(x_{n_1 + j};\nu_0) \left\{I(c(x_{n_1 + j}) \leq t) - F_1(t)\right\} \notag \\
&\hspace{0.5cm}+(\hat{\nu} - \nu_0)^{\top}\bm{K}_{1}(t)  +  o_p((n_0+n_1)^{-1/2}). 
\end{align}

Note that, uniformly,
\bas
0 \leq H(x;\nu_0) \leq 1.
\eas 
By Example 2.10.10 of \cite{van1996weak}, the function class 
\bas
\mathcal{G}_{2} = \left\{x \mathrel{\mapsto}
H(x;\nu_0)  \left\{I(c(x) \leq t) - F_1(t)\right\}: t \in \mathbb{R}
\right\} 
\eas 
is also Donsker. Hence, we have
\begin{equation*}
\sup_{t}|
\frac{1}{n_0}\sum_{j=1}^{n_0}H(x_{n_1 + j};\nu_0) \left\{I(c(x_{n_1 + j}) \leq t) - F_1(t)\right\}
| = O_p((n_0+n_1)^{-1/2}). 
\end{equation*}

Under Condition 7, $c(X)$ is a continuous random variable, and by Condition 6, $\nabla_{\nu} H(x; \nu_0)$ has finite second moment. It then follows that
\bas
\sup_{t}\|\bm{K}_1(t)
\| < \infty,
\eas 
and therefore,
\ba
\label{ev}
\sup_{t}|(\hat{\nu} - \nu_0)^{\top}\bm{K}_{1}(t)
| = O_p((n_0+n_1)^{-1/2}).
\ea 
Combing  \eqref{re-dm0}-\eqref{ev}, we obtain
\bas
\sup_{t}|D_{n_0 1}(t)| = O_p((n_0+n_1)^{-1/2}).
\eas

Let $\mu_0$ denote the true value of the target population mean. By Theorem~2 with $h(x, y) = y$,  $\bar{H} = \hat{\mu}_{\mathrm{REG}}$ is consistent for $\mu_0$. Combining this with \eqref{eF0-dif0} and \eqref{re-dm0},  we obtain the expansion:
    \begin{align*}
        &\widehat{F}_1(t) - F_1(t) \notag \\
 &=\frac{1}{\mu_0} \left[
\frac{1}{n_0}\sum_{j=1}^{n_0}H(x_{n_1 + j};\nu_0) \left\{I(c(x_{n_1 + j}) \leq t) - F_1(t)\right\} +(\hat{\nu} - \nu_0)^{\top}\bm{K}_{1}(t) 
\right]  \notag  \\
&\hspace{0.5cm}+ o_p((n_0+n_1)^{-1/2}).
\end{align*}

Part (a) of Lemma 1 implies 
\bas
\hat{\nu} -\nu_0 = \frac{1}{n_0+n_1}\sum_{i=1}^{n_0+n_1} -(1+\rho)V T(x_i,y_i,s_i;\nu_0) + o_p((n_0+n_1)^{-1/2}),
\eas 
where $-(1+\rho) VT(x,y,s;\nu_0)$ can be regarded as the influence function of $\hat{\nu}$.
Then, we have
 \begin{align}
    \label{final-f1}
\widehat{F}_1(t) - F_1(t) =
\frac{1}{n_0+n_1}
\sum_{i=1}^{n_0+n_1} L_{1}(x_i,y_i,s_i;t) + o_p((n_0+n_1)^{-1/2}), 
 \end{align}
where 
\begin{align*}
L_{1}(x,y,s;t) =& 
\left[
\frac{(1-s) (1+\rho)}{\mu_0 \rho}
H(x;\nu_0)\left\{I(c(x) \leq t) - F_1(t)\right\} \right. \notag  \\
&\left.
- \frac{1+\rho}{\mu_0}\bm{K}^{\top}_{1}(t)V T(x,y,s;\nu_0) \right].
    \end{align*}

Following a similar procedure, we have
    \begin{align}
    \label{final-f0}
\widehat{F}_0(u) - F_0(u) =  \frac{1}{n_0+n_1}
\sum_{i=1}^{n_0+n_1} L_{0}(x_i,y_i,s_i;u) + o_p((n_0+n_1)^{-1/2}),
\end{align}
where 
 \begin{align*}
 L_{0}(x,y,s;u)
= & \left[
\frac{(1- s)(1+\rho )}{(1-\mu_0) \rho}
\{ 1-H(x;\nu_0)\} \left\{I(c(x)) \leq u) - F_0(u)\right\}
\right. \notag \\
& \left.
+ \frac{1+\rho}{1-\mu_0} \bm{K}^{\top}_{0}(u) V T(x,y,s;\nu_0) \right]   
\end{align*}
and
\begin{align*}
  \bm{K}_{0}(u) = \Ex_0\left[ \nabla_{\nu} H_{0}(X;\nu_0) \left\{I(c(X) \leq u) - F_0(u)\right\}\right]
\end{align*}  
is continuous in $u$ and uniformly bounded.

The approximations in \eqref{final-f1} and \eqref{final-f0} imply that both $\widehat{F}_{1}(t)$
and $\widehat{F}_{0}(u)$ are asymptotically
linear, with influence functions $L_1(x, y,s;t)$ and $L_0(x,y,s;u)$, respectively.
Based on the above discussion, the class of $ L_0(x,y,s;u) $ (indexed by $u$) and the class of $L_1(x,y,s;t)$ (indexed by $t$) are both Donsker classes. Hence, the processes $\sqrt{n_0+n_1} \{ \widehat{F}_0(u) - F_0(u) \}$ and 
and  $\sqrt{n_0+n_1} \{ \widehat{F}_1(t) - F_1(t) \}$
converge jointly to a bivariate, mean-zero, Gaussian process.
\end{proof}

\subsection{Proof of the first part of Theorem 3}
Recall that 
\[
\widehat{AUC} = \int \widehat{F}_0(u) d\widehat{F}_{1}(u). 
\]
By Part (b) of Lemma~\ref{lem-toprovethm3} and the functional delta method (see Theorem 20.8, Lemma 20.10 in \cite{van2000asymptotic}), we have 
\begin{align}
\label{expan-auc0}
     \widehat{AUC} - {AUC} &= - \int \bigl\{\widehat{F}_{1}(u) - F_1(u)\bigr\}dF_0(u) \notag  \\
     &\hspace{0.5cm} + \int \bigl\{\widehat{F}_{0}(u) - F_0(u)\bigr\}dF_1(u) + o_p((n_0+n_1)^{-1/2}).
\end{align}
Using \eqref{e-F0}-\eqref{e-F1} and after straightforward calculations, we further obtain 
\begin{align}
    \label{expan-auc}
     \widehat{AUC} - {AUC} = & \frac{1}{n_0 \bar{H}}\sum_{j=1}^{n_0}H(x_{n_1 + j};\hat{\nu})\left\{F_0(c(x_{n_1 + j})) - AUC\right\}   \notag \\
     & +\frac{1}{n_0 (1-\bar{H})}\sum_{j=1}^{n_0}\bigl\{1-H(x_{n_1 + j};\hat{\nu})\bigr\}\bigl\{1-F_1(c(x_{n_1 + j})) - AUC\bigr\} \notag \\
      & + o_p((n_0+n_1)^{-1/2}).
\end{align}

Using the first-order Taylor expansion and { Condition 6}, we get
\ba
\label{taylor1}
 &&\frac{1}{n_0}\sum_{j=1}^{n_0}H(x_{n_1 + j};\hat{\nu})\left\{F_0(c(x_{n_1 + j})) - AUC\right\}  \notag\\
    &&=\frac{1}{n_0}\sum_{j=1}^{n_0}H(x_{n_1 + j};\nu_0)\left\{F_0(c(x_{n_1 + j})) - AUC\right\} \notag\\
    &&\hspace{0.5cm} + (\hat{\nu}-\nu_0)^{\top}\cdot   \frac{1}{n_0}\sum_{j=1}^{n_0} \nabla_{\nu}
    {H}(x_{n_1 + j};\nu_0)\left\{F_0(c(x_{n_1 + j}))  - AUC\right\}+ o_p((n_0+n_1)^{-1/2}) \notag \\
    &&=\frac{(1+\rho)}{\rho} \cdot  Z_{n_0+n_1,2} + (\hat{\nu}-\nu_0)^{\top} \bm{E}_1 + o_p((n_0+n_1)^{-1/2}) \notag \\
    &&=\frac{(1+\rho)}{\rho} \cdot  Z_{n_0+n_1,2} - (1+\rho) \bm{E}_1^{\top}V Z_{n_0+n_1,1} + o_p((n_0+n_1)^{-1/2}),
\ea
where the second equality follows from the law of large numbers, and $\bm{E}_1$ is defined in \eqref{matrix-e1}, with the expansion of $\hat{\nu} - \nu_0$ provided in Part (a) of Lemma~\ref{lem-toprovethm3}. 

Similarly, with $\bm{E}_2$ defined in \eqref{matrix-e2}, we have
    \begin{align}
    \label{taylor2}
        &\frac{1}{n_0}\sum_{j=1}^{n_0}(1-H(x_{n_1 + j};\hat{\nu}))\left[1-F_1(c(x_{n_1 + j})) - AUC\right] \notag  \\
        &= 
    \frac{(1+\rho)}{\rho} \cdot  Z_{n_0+n_1,3} - (\hat{\nu}-\nu_0)^{\top} \bm{E}_2 + o_p((n_0+n_1)^{-1/2}) \notag \\
    &=\frac{(1+\rho)}{\rho} \cdot  Z_{n_0+n_1,3} + (1+\rho) \bm{E}_2^{\top}VZ_{n_0+n_1,1} + o_p((n_0+n_1)^{-1/2}). 
    \end{align}
    
Plugging \eqref{taylor1}-\eqref{taylor2} into \eqref{expan-auc} yields
\begin{align}
\label{final-auc-expan}
    \widehat{AUC} - {AUC} =& \frac{1}{\bar{H}}\left\{\frac{1+\rho}{\rho}Z_{n_0+n_1,2} - (1+\rho) \bm{E}_1^{\top}V Z_{n_0+n_1,1}\right\} \notag \\
&  +\frac{1}{1-\bar{H}}\left\{\frac{1+\rho}{\rho} Z_{n_0+n_1,3} + (1+\rho)\bm{E}_2^{\top}V Z_{n_0+n_1,1}\right\}  + o_p((n_0+n_1)^{-1/2}) \notag \\ 
=&\frac{1}{\mu_0}\left\{\frac{1+\rho}{\rho}Z_{n_0+n_1,2} - (1+\rho) \bm{E}_1^{\top}V Z_{n_0+n_1,1}\right\} \notag \\
&  + \frac{1}{1-\mu_0}\left\{\frac{1+\rho}{\rho} Z_{n_0+n_1,3} + (1+\rho)\bm{E}_2^{\top}V Z_{n_0+n_1,1}\right\}  + o_p((n_0+n_1)^{-1/2}) \notag \\
=& \bm{M}_1^{\top}Z_{n_0+n_1} + o_p((n_0+n_1)^{-1/2}),
\end{align}
where
\bas
\bm{M}_1 = \begin{pmatrix}
     V^{\top}\left[\bm{E}_2/(1-\mu_0) - \bm{E}_1/\mu_0\right] \cdot (1+\rho)\\
    1/\mu_0\cdot (1+\rho)/\rho\\
    1/(1-\mu_0)\cdot (1+\rho)/\rho\\
    0\\
    0
\end{pmatrix}.
\eas
Here, we have applied the asymptotic normality of $\bar{H} = \hat{\mu}_{\mathrm{REG}}$ from Theorem~2 with $h(x, y) = y$, together with Slutsky’s theorem, to replace $\bar{H}$ by $\mu_0$. 

Applying Slutsky's theorem and \eqref{asy-Z}, we obtain
\begin{equation*}
    %\label{asym-AUC}
    \sqrt{n_0+n_1}(\widehat{AUC} - {AUC} ) \convergeto \mathcal{N}(0,\sigma_{AUC}^2),
\end{equation*}
with
\begin{equation}
    \sigma_{AUC}^2 = \bm{M}_1^{\top}\Sigma_Z \bm{M}_1.
\end{equation}

\subsection{Proof of the second part of Theorem 3}
\label{proof-bpartthm3}
Throughout this document,  let $f_0$ and $f_1$ denote the probability density functions of $F_0$ and $F_1$, respectively. 

For any fixed $u\in (0,1)$, let $\hat{\tau}_{u} = \widehat{F}_{0}^{-1}(u)$. Then,  
    \begin{align}
    \label{expan-roc}
        \widehat{ROC}(u) - ROC(u) &= \bigl\{1 - \widehat{F}_1(\hat{\tau}_{1-u})\bigr\} 
        - \bigl\{1 - {F}_1({\tau}_{1-u})\bigr\} \notag  \\
        &= \bigl\{F_1(\tau_{1-u}) - {F}_1(\hat{\tau}_{1-u})\bigr\}
        + \bigl\{F_1(\tau_{1-u}) - \widehat{F}_1({\tau}_{1-u})\bigr\} + \varepsilon_{n_0+n_1},
    \end{align}
where 
$$
\varepsilon_{n_0+n_1} = \bigl\{F_1(\hat\tau_{1-u}) - \widehat{F}_1(\hat{\tau}_{1-u})\bigr\}  -\bigl\{F_1(\tau_{1-u}) - \widehat{F}_1({\tau}_{1-u})\bigr\}.
$$

First, we show that $\varepsilon_{n_0 + n_1} = o_p((n_0 + n_1)^{-1/2})$. By Part (b) of Lemma~1,  
\bas
\sup_{t}| \widehat{F}_{0}(t)  -{F}_{0}(t) | = O_p((n_0+n_1)^{-1/2}),
\eas 
which implies  
\begin{equation*}
    \hat\tau_{1-u} - \tau_{1-u} = O_p((n_0+n_1)^{-1/2}).
\end{equation*}
Following arguments similar to  Lemma A.2 of \cite{chen2013quantile} and Theorem~4 in the supplementary material of \cite{chen2021composite}, we obtain
 \begin{equation}
         \label{errroc}
| \varepsilon_{n_0+n_1} | = | \bigl\{F_1(\hat\tau_{1-u}) - \widehat{F}_1(\hat{\tau}_{1-u})\bigr\} - \bigl\{F_1(\tau_{1-u}) - \widehat{F}_1({\tau}_{1-u})\bigr\} |  =  o_p((n_0+n_1)^{-1/2}).
 \end{equation}
Plugging \eqref{errroc} into \eqref{expan-roc} gives
\begin{align}
     \label{reexpan-roc}
        \widehat{ROC}(u) - ROC(u) &= \bigl\{F_1(\tau_{1-u}) - {F}_1(\hat{\tau}_{1-u})\bigr\}  \notag \\
        &\hspace{0.5cm}
        + \bigl\{F_1(\tau_{1-u}) - \widehat{F}_1({\tau}_{1-u})\bigr\}  + o_p((n_0+n_1)^{-1/2}).
\end{align}

Following the similar arguments to Theorem 3.1 in \cite{chen2013quantile}, we have the following Bahadur representation:
\bas
\hat\tau_{1-u} - \tau_{1-u} = 
\frac{1-u - \widehat{F}_0(\tau_{1-u})}{f_0(\tau_{1-u})} + o_p((n_0+n_1)^{-1/2}).
\eas
Applying the Delta method gives
\begin{align*}
    F_1(\tau_{1-u}) -F_1(\hat\tau_{1-u}) &= f_1(\tau_{1-u})(\tau_{1-u} - \hat\tau_{1-u} ) o_p((n_0+n_1)^{-1/2}) \\
&=f_1(\tau_{1-u}) \frac{\widehat{F}_0(\tau_{1-u}) - (1-u)}{f_0(\tau_{1-u})} +  o_p((n_0+n_1)^{-1/2}) \\
&=\frac{f_1(\tau_{1-u})}{f_0(\tau_{1-u})} \bigl\{\widehat{F}_0(\tau_{1-u}) - (1-u)\bigr\} +  o_p((n_0+n_1)^{-1/2}). 
\end{align*}
Combining this with \eqref{reexpan-roc}, we have
    \begin{align}
    \label{reexpan1-roc}
        \widehat{ROC}(u) - ROC(u) =&\frac{f_1(\tau_{1-u})}{f_0(\tau_{1-u})} \bigl\{\widehat{F}_0(\tau_{1-u}) - (1-u)\bigr\} \notag \\
        & +  \bigl\{F_1(\tau_{1-u}) - \widehat{F}_1({\tau}_{1-u})\bigr\} + o_p((n_0+n_1)^{-1/2}).
    \end{align}
    
Plugging  \eqref{e-F0} and   \eqref{e-F1}  into \eqref{reexpan1-roc} gives
\begin{align}
\label{reexpan2-roc}
    &\widehat{ROC}(u) - ROC(u) \notag\\
    &= \frac{1}{1-\bar{H}} \cdot \frac{f_1(\tau_{1-u})}{f_0(\tau_{1-u})} \cdot \frac{1}{n_0}\sum_{j=1}^{n_0}\bigl\{1-H(x_{n_1 + j};\hat{\nu})\bigr\}\bigl\{I(c(x_{n_1 + j}) \leq \tau_{1-u}) - F_0(\tau_{1-u})\bigr\} \notag  \\
    &\hspace{0.2cm}- \frac{1}{\bar{H}} \cdot \frac{1}{n_0}\sum_{j=1}^{n_0}H(x_{n_1 + j};\hat{\nu})\bigl\{I(c(x_{n_1 + j}) \leq \tau_{1-u}) - F_1(\tau_{1-u})\bigr\} + o_p((n_0+n_1)^{-1/2}). 
\end{align}
Recall $\bm{E}_{3}$ and $\bm{E}_4$ defined in \eqref{matrix-e3} and \eqref{matrix-e4}, respectively.
Analogous to the expansion in \eqref{taylor1}, and using the linear approximation of $\hat{\nu} - \nu_0$ in Part (a) of Lemma~\ref{lem-toprovethm3}, 
we have 
\begin{align}
\label{taylor3}
    &\frac{1}{n_0}\sum_{j=1}^{n_0}\bigl\{1-H(x_{n_1 + j};\hat{\nu})\bigr\}\bigl\{I(c_{n_1 + j} \leq \tau_{1-u}) - F_0(\tau_{1-u})\bigr\} \notag \\
    &\hspace{0.2cm}= \frac{1+\rho}{\rho} Z_{n_0+n_1,4} + (1+\rho)\bm{E}_{3}^{\top}V Z_{n_0+n_1,1} + o_p((n_0+n_1)^{-1/2})
\end{align}
and 
\begin{align}
\label{taylor4}
    &\frac{1}{n_0}\sum_{j=1}^{n_0}H(x_{n_1 + j};\hat{\nu})\bigl\{I(c_{n_1 + j} \leq \tau_{1-u}) - F_1(\tau_{1-u})\bigr\} \notag \\
    &\hspace{0.2cm}= \frac{1+\rho}{\rho} Z_{n_0+n_1,5} - (1+\rho)\bm{E}_{4}^{\top} V Z_{n_0+n_1,1}  + o_p((n_0+n_1)^{-1/2}).
\end{align}

Substituting \eqref{taylor3} and \eqref{taylor4} into \eqref{reexpan2-roc}, and following the same procedure as in the derivation of \eqref{final-auc-expan},
we obtain
    \begin{align*}
    %\label{final-roc-expan}
        \widehat{ROC}(u) - ROC(u) =& \frac{1}{1-\bar{H}} \cdot \frac{f_1(\tau_{1-u})}{f_0(\tau_{1-u})} \left\{\frac{1+\rho}{\rho} Z_{n_0+n_1,4} + (1+\rho)\bm{E}_{3}^{\top}V Z_{n_0+n_1,1}\right\} \notag \\
&- \frac{1}{\bar{H}}  \left\{\frac{1+\rho}{\rho} Z_{n_0+n_1,5} - (1+\rho)\bm{E}_{4}^{\top} V Z_{n_0+n_1,1}\right\} + o_p((n_0+n_1)^{-1/2}) \notag\\
=& \frac{1}{1-\mu_0} \cdot \frac{f_1(\tau_{1-u})}{f_0(\tau_{1-u})} \left\{\frac{1+\rho}{\rho} Z_{n_0+n_1,4} + (1+\rho)\bm{E}_{3}^{\top}V Z_{n_0+n_1,1}\right\} \notag\\
&- \frac{1}{\mu_0} \left\{\frac{1+\rho}{\rho} Z_{n_0+n_1,5} - (1+\rho)\bm{E}_{4}^{\top} V Z_{n_0+n_1,1}\right\} + o_p((n_0+n_1)^{-1/2}) \notag \\
=& \bm{M}_2^{\top} Z_{n_0+n_1} + o_p((n_0+n_1)^{-1/2}),
    \end{align*}
where
\bas
\bm{M}_2 = 
\begin{pmatrix}
     V^{\top}\left\{\frac{f_1(\tau_{1-u})}{f_0(\tau_{1-u})} \bm{E}_3 /(1-\mu_0) + \bm{E}_4/\mu_0\right\} \cdot (1+\rho)\\
     0\\
     0\\
     1/(1-\mu_0)\cdot \frac{f_1(\tau_{1-u})}{f_0(\tau_{1-u})} \cdot (1+\rho)/\rho\\
   - 1/\mu_0\cdot (1+\rho)/\rho
\end{pmatrix}.
\eas 
Applying Slutsky's theorem and \eqref{asy-Z}, we have 
\bas
\sqrt{n_0+n_1}\left\{ \widehat{ROC}(u) - ROC(u)\right\} =
\bm{M}_2^{\top}\cdot \sqrt{n_0+n_1} Z_{n_0+n_1} + o_p(1) \convergeto \mathcal{N}(0,\sigma_{ROC}^2(u)),
\eas
where 
\begin{equation}
    \sigma_{ROC}^2(u) = \bm{M}_2^{\top}\Sigma_{Z}\bm{M}_2.
\end{equation}
This completes the proof of Theorem 3 in the main paper.

\section{Proof of Theorem 4 in the main paper}
\label{proof-thm4}

\subsection{Some preparation}
In this section, we consider a classifier parameterized by an unknown parameter $\nu$, denoted by $c(x;\nu)$. Analogous to the definition of $Z_{n_0+n_1}$ in Section \ref{S4.section1}, define 
\[
\widetilde{Z}_{n_0+n_1} = (\widetilde{Z}_{n_0+n_1,1}^{\top},\allowbreak\ \widetilde{Z}_{n_0+n_1,2},\allowbreak\ \widetilde{Z}_{n_0+n_1,3},\allowbreak\ \widetilde{Z}_{n_0+n_1,4},\allowbreak\ \widetilde{Z}_{n_0+n_1,5})^{\top}, 
\]
where all occurrences of $c(x)$ in the original definition of $Z_{n_0+n_1}$ are replaced by $c(x;\nu_0)$. The components are given by
\begin{align*}
\widetilde{Z}_{n_0+n_1,1} &= \frac{1}{n_0+n_1}  \begin{pmatrix}
    S_{n_0+n_1,1}(\xi_0) \\
    S_{n_0+n_1,2}(\theta_0,\xi_0)
\end{pmatrix}, \\[6pt]
\widetilde{Z}_{n_0+n_1,2} &=  \frac{1}{n_0+n_1}\sum_{i=1}^{n_0+n_1} (1-s_i)H(x_i;\nu_0)\left\{F_0(c(x_i;\nu_0)) - AUC\right\}, \\[6pt]
\widetilde{Z}_{n_0+n_1,3} &=  \frac{1}{n_0+n_1} \sum_{i=1}^{n_0+n_1} (1-s_i)\left\{1-H(x_i;\nu_0)\right\} \left\{1 - F_1(c(x_i;\nu_0)) - AUC\right\}, \\[6pt]
\widetilde{Z}_{n_0+n_1,4} &=  \frac{1}{n_0+n_1} \sum_{i=1}^{n_0+n_1} (1-s_i)\left\{1-H(x_i;\nu_0)\right\} \left\{I(c(x_i;\nu_0)\leq \tau_{1-u}) - F_0(\tau_{1-u})\right\} \\[6pt]
\widetilde{Z}_{n_0+n_1,5} &=  \frac{1}{n_0+n_1} \sum_{i=1}^{n_0+n_1} (1-s_i) H(x_i;\nu_0) \left\{I(c(x_i;\nu_0)\leq \tau_{1-u}) - F_1(\tau_{1-u})\right\}.
\end{align*}
It can also be verified that $$\Ex(\widetilde{Z}_{n_0+n_1}) = 0$$ and 
\begin{equation}
\label{asy-tilZ}
    \sqrt{n_0+n_1} \widetilde{Z}_{n_0+n_1} \convergeto \mathcal{N}(0, \widetilde{\Sigma}_{Z}),
\end{equation}
where $\widetilde{\Sigma}_Z$ denotes the variance and covariance matrix of $\sqrt{n_0+n_1}\widetilde{Z}_{n_0+n_1}$. 
In particular, $\widetilde{\Sigma}_Z$ coincides with $\Sigma_Z$ if all instances of $c(x)$ in 
$\Sigma_Z$ are replaced by $c(x;\nu_0)$.

{
Recall that $f_0$ and $f_1$ denote the probability density functions of $F_0$ and $F_1$ in the main paper, respectively.  
Since the classifier $c(x;\nu)$ depends on the unknown parameter $\nu$, both the distribution and density functions of $c(X;\nu)$ generally depend on $\nu$. Under Condition~8, we therefore write them as $F_y(\cdot;\nu)$ and $f_y(\cdot;\nu)$ for clarity. When $\nu=\nu_0$, however, we use the shorthand notation $F_y$ and $f_y$ without explicitly indicating the dependence on $\nu_0$.
} 
Furthermore, define 
\begin{align}
    &\widetilde{\bm{E}}_{1} = \Ex_{0}\left[\nabla_{\nu}H(X;\nu_0) \bigl\{F_0(c(X;\nu_0)) -AUC\bigr\} \right]   \notag \\
    &\hspace{1cm}  + \Ex_{0}\left[ f_0(c(X;\nu_0))H(X;\nu_0) \nabla_{\nu}c(X;\nu_0) \right], \label{matrix-tile1}\\[6pt]
&\widetilde{\bm{E}}_{2} = \Ex_{0}\left[\nabla_{\nu}H(X;\nu_0) \left\{ 1- F_1(c(X;\nu_0)) - AUC\right\} \right]  \notag \\
    &\hspace{1cm} +  \Ex_{0}\left[
f_1(c(X;\nu_0))\left\{1-H(X;\nu_0)\right\} \nabla_{\nu}c(X;\nu_0)\right],\label{matrix-tile2}\\[6pt]
&\widetilde{\bm{E}}_{3} = \Ex_{0}\left[\nabla_{\nu}H(X;\nu_0) \left\{ I(c(X;\nu_0)\leq\tau_{1-u}) -F_0(\tau_{1-u})\right\} + f_0(\tau_{1-u})\nabla_{\nu}c(X;\nu_0)\right],
\label{matrix-tile3}\\[6pt]
&\widetilde{\bm{E}}_{4} = \Ex_{0}\left[\nabla_{\nu}H(X;\nu_0) \left\{ I(c(X;\nu_0)\leq\tau_{1-u}) -F_1(\tau_{1-u})\right\}- f_1(\tau_{1-u})\nabla_{\nu}c(X;\nu_0) \right]. \label{matrix-tile4}
\end{align}

Given the estimator $\hat{\nu}$ and the resulting estimated classifier $c(x;\hat{\nu})$, 
the target distribution functions $F_0(u)$ and $F_1(u)$ can be estimated by
\begin{equation}
    \label{til-F0}
    \widetilde{F}_{0}(u) = \frac{1}{n_0(1-\bar{H})}\sum_{j=1}^{n_0}\bigl\{1-H(x_{n_1 + j};\hat{\nu})\bigr\}\cdot I(c(x_{n_1 + j};\hat{\nu})\leq u)
\end{equation}
and 
\begin{equation}
    \label{til-F1}
     \widetilde{F}_{1}(u) = \frac{1}{n_0 \bar{H}}\sum_{j=1}^{n_0}H(x_{n_1 + j};\hat{\nu})\cdot I(c(x_{n_1 + j};\hat{\nu}) \leq u),
\end{equation}
respectively.
Compared with $\widehat{F}_0$ and $\widehat{F}_1$ in Theorem 3, $\hat\nu$ appears inside the indicator function, which is nonsmooth. This poses substantial challenges in deriving the asymptotic properties of $\widetilde{F}_0$ and $\widetilde{F}_1$. We address these challenges using advanced empirical process theory. 

%{\color{red} 
% To streamline the proof, we introduce some notation. 
% Let $F$ denote the distribution of the classifier in the target population, with density $f$. Specifically,  
% \begin{equation*}
%     F = p_0(Y=1) F_1 + p_0(Y=0) F_0,
% \end{equation*}
% and  
% \begin{equation*}
%     f = p_0(Y=1) f_1 + p_0(Y=0) f_0.
% \end{equation*}
% } 

{ 
\begin{lemma}
\label{lem-toprovethm4}

Suppose that Conditions~1--6 and 8 in Section~\ref{prelim} hold. 
%Assume that $c(x;\nu)$ is smooth in $\nu$ and $\nabla_{\nu}c(x;\nu_0)$ has finite second moments. 
% Further, suppose that 
% \begin{itemize}
%     \item[(a)] the density $f_1$, $f_0$ and the deratives $f_1^{\prime}$ and $f_0^{\prime}$ are uniformly bounded.
%     \item[(b)] For any $\nu$ falling within the small region around $\nu_0$, we assume:
%     \begin{itemize}
%         \item[(b1)] $\Ex_{0}\left\{ |\nabla_{\nu}c(X;\nu)|\right\}$,  $\Ex_{0} \left\{\nabla_{\nu \nu^{\top}}c(x;{\nu}) \right\}$ ,  and  $\Ex_{0} \left\{\nabla_{\nu}^{\otimes 2}c(x;{\nu}) \right\}$are finite. 
%         \item[(b2)] $\Ex_{0}\left\{ | \nabla_{\nu}^{k}H(x;\nu_0) \nabla_{\nu}^{\top}c(X;\nu)|\right\}$, $\Ex_{0} \left\{\nabla_{\nu}H(x;\nu_0) \nabla_{\nu \nu^{\top}}c(x;{\nu}) \right\}$, and $ \Ex_{0} \left\{\nabla_{\nu}H(x;\nu_0)  \right. \\\left.\nabla_{\nu}^{\otimes 2} c(x;{\nu}) \right\}$ are finite, where ,$(k=1,2)$.
%     \end{itemize}
% \end{itemize}
Then the processes $\sqrt{n_0+n_1}\bigl\{\widetilde{F}_0(u) - F_0(u)\bigr\}$
and
$\sqrt{n_0+n_1}\bigl\{\widetilde{F}_1(t) - F_1(t)\bigr\}$
converge jointly to a tight, mean-zero, bivariate Gaussian process.
\end{lemma}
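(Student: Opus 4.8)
Write $N=n_0+n_1$. The plan is to follow the architecture of the proof of Lemma~\ref{lem-toprovethm3}, the key new difficulty being that the estimator $\hat\nu$ now enters $\widetilde F_0$ and $\widetilde F_1$ through the \emph{non-smooth} indicator $I\{c(x;\hat\nu)\le t\}$, so the termwise first-order Taylor expansion used there is unavailable. As a starting point I would write
\[
\widetilde F_1(t)-F_1(t)=\frac{1}{\bar H}\cdot\frac{1}{n_0}\sum_{j=1}^{n_0} H(x_{n_1+j};\hat\nu)\bigl\{I(c(x_{n_1+j};\hat\nu)\le t)-F_1(t)\bigr\},
\]
with the analogous identity for $\widetilde F_0$, and recall from Part~(a) of Lemma~\ref{lem-toprovethm3} that $\hat\nu-\nu_0=-(1+\rho)V\widetilde Z_{n_0+n_1,1}+o_p(N^{-1/2})=O_p(N^{-1/2})$, which requires only Conditions~1--6. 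Writing $\phi_{\nu,t}(x)=H(x;\nu)\,I(c(x;\nu)\le t)$, I would split the numerator into a centered empirical average $n_0^{-1}\sum_j\{\phi_{\hat\nu,t}(x_{n_1+j})-\Ex_0\phi_{\hat\nu,t}(X)\}$ and the deterministic mean $\Ex_0\phi_{\hat\nu,t}(X)$, and treat the two pieces by separate arguments.

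For the centered empirical part, I would introduce the class
\[
\mathcal F=\bigl\{\,x\mapsto H(x;\nu)\,I(c(x;\nu)\le t):\ \nu\in\mathbb{R}_{\nu_0},\ t\in\mathbb{R}_{c}\,\bigr\}
\]
and show it is Donsker. The crux is the indicator subclass indexed by $(\nu,t)$: for fixed $\nu$ it is monotone in $t$, and the Lipschitz bound of Condition~8(b), $|c(x;\nu_1)-c(x;\nu_2)|\le L(x)\|\nu_1-\nu_2\|$, together with the uniform density bound of Condition~8(a) controls its $L_2(P_0)$ modulus: a disagreement between $I(c(x;\nu_1)\le t)$ and $I(c(x;\nu_2)\le t)$ forces $c(x;\nu_0)$ into a band of width $O(L(x)\|\nu_1-\nu_2\|)$ about $t$, and integrating this band against the joint density $z_{cL}$ of $(c(X;\nu_0),L(X))$ yields a bound of order $\bigl\{\int z^*(t)\,t\,dt\bigr\}\|\nu_1-\nu_2\|$, finite precisely by the integrability assumption in Condition~8(b). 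This gives a finite uniform-entropy bound, and multiplying by the bounded, smooth factor $H$ with the envelope supplied by Condition~6 keeps $\mathcal F$ Donsker by the usual preservation theorems of \cite{van1996weak}. Since $\mathcal F$ is Donsker and the $L_2(P_0)$ distance between $\phi_{\hat\nu,t}$ and $\phi_{\nu_0,t}$ tends to zero uniformly in $t$ (by the same density and Lipschitz bounds and $\hat\nu\to\nu_0$), asymptotic equicontinuity gives $\sqrt N\,n_0^{-1}\sum_j\bigl\{(\phi_{\hat\nu,t}-\phi_{\nu_0,t})(x_{n_1+j})-\Ex_0(\phi_{\hat\nu,t}-\phi_{\nu_0,t})\bigr\}=o_p(1)$ uniformly in $t$. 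The estimated threshold can therefore be frozen at $c(\cdot;\nu_0)$ in the stochastic part, reducing it to the $\widetilde Z_{n_0+n_1}$ terms shown asymptotically normal in \eqref{asy-tilZ}.

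For the deterministic part, the map $\Phi_1(\nu,t)=\Ex_0\{H(X;\nu)\,I(c(X;\nu)\le t)\}$ is smooth in $\nu$: Condition~8(c) guarantees that $F_1(\cdot;\nu)$ is twice differentiable in $\nu$ with uniformly bounded first and second derivatives, so a second-order Taylor expansion about $\nu_0$ applies and its $O_p(N^{-1})$ quadratic remainder is negligible. Differentiating $\Phi_1$ by the chain rule produces a $\nabla_\nu H$ contribution together with a density-correction contribution, involving $f_1$ and $\nabla_\nu c$, that arises from differentiating through the indicator and is absent in Lemma~\ref{lem-toprovethm3}; the same density corrections appear in the matrices $\widetilde{\bm E}_1$--$\widetilde{\bm E}_4$ of \eqref{matrix-tile1}--\eqref{matrix-tile4} used subsequently. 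Substituting the linear representation of $\hat\nu-\nu_0$ and replacing $\bar H$ by $\mu_0$ using the consistency of $\hat\mu_{\mathrm{REG}}$ from Theorem~2 with $h(x,y)=y$, I obtain a uniform asymptotic linear expansion $\widetilde F_1(t)-F_1(t)=N^{-1}\sum_i\widetilde L_1(x_i,y_i,s_i;t)+o_p(N^{-1/2})$, and symmetrically $\widetilde F_0(u)-F_0(u)=N^{-1}\sum_i\widetilde L_0(x_i,y_i,s_i;u)+o_p(N^{-1/2})$. As the influence-function classes $\{\widetilde L_1(\cdot\,;t)\}$ and $\{\widetilde L_0(\cdot\,;u)\}$ are Donsker, the two processes converge jointly to a tight, mean-zero, bivariate Gaussian process.

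The main obstacle is the second step: proving that the class of indicators evaluated at the \emph{estimated} threshold $c(\cdot;\hat\nu)$ is Donsker and satisfies stochastic equicontinuity despite the discontinuity of the indicator in $\nu$. This is where Conditions~8(a)--(b) do the real work; in particular the integrability requirement $\int z^*(t)\,t\,dt<\infty$ is calibrated exactly to bound the $L_2(P_0)$ modulus of continuity of the indicator class, which is what licenses replacing $\hat\nu$ by $\nu_0$ inside the indicator. Once equicontinuity is secured, the remainder is a smooth drift expansion paralleling Lemma~\ref{lem-toprovethm3}, with the only new feature being the density-correction terms.
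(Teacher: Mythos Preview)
Your proposal is correct and follows the same architecture as the paper: a Donsker/equicontinuity argument to freeze $\hat\nu$ at $\nu_0$ in the stochastic part, a smooth drift expansion for the deterministic part, and assembly into an asymptotically linear representation with Donsker influence classes. Two places where your sketch is looser than the paper's execution are worth flagging. First, the paper Taylor-expands the \emph{smooth} factor $H(x;\hat\nu)$ about $\nu_0$ \emph{before} isolating the indicator, so its deterministic indicator term is exactly $\mu_0\{F_1(t;\hat\nu)-F_1(t)\}$ and Condition~8(c) applies verbatim; your bundled map $\Phi_1(\nu,t)=\Ex_0\{H(X;\nu)\,I(c(X;\nu)\le t)\}$ is not $\mu_0 F_1(t;\nu)$ because $H$ also varies with $\nu$, so its second-order smoothness is not a consequence of Condition~8(c) alone and requires the differentiability and moment conditions on $H$ from Conditions~2 and~6 as well. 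Second, the paper establishes the Donsker property of the indicator class $\{I(c(x;\nu)\le t)\}$ via an explicit bracketing construction driven by the density bound in Condition~8(a) and the moment bound $\Ex_0\{L^{2+\delta}(X)\}<\infty$ in Condition~8(b), which yields polynomial bracketing numbers; the integrability hypothesis $\int z^*(t)\,t\,dt<\infty$ is invoked only at the equicontinuity step to show $\sup_t\|\phi_{\hat\nu,t}-\phi_{\nu_0,t}\|_{L_2(P_0)}\to_p 0$, not for the entropy bound as your sketch suggests.
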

}

\begin{proof}
{
Let 
\begin{equation*}
    \widetilde{D}_{n_0 1}(t) = \frac{1}{n_0}\sum_{j=1}^{n_0}H(x_{n_1 + j};\hat{\nu})\left\{ I(c(x_{n_1 + j};\hat{\nu}) \leq t) - F_1(t)\right\}.
\end{equation*}
Then it can be verified that
$$
\widetilde{F}_{1}(t) - {F}_{1}(t) = \widetilde{D}_{n_0 1}(t)/\bar{H}.
$$

First, we find an approximation to $\widetilde{D}_{n_0 1}(t)$. 
Note that 
\begin{align*}
\widetilde{D}_{n_0 1}(t) =& \frac{1}{n_0}\sum_{j=1}^{n_0} \left\{ H(x_{n_1 + j};\nu_0) + \nabla_{\nu}^{\top} H(x_{n_1 + j};\nu_0) (\hat{\nu} - \nu_0) \right\}  \left\{I(c(x_{n_1 + j};\hat{\nu}) \leq t) - F_1(t)\right\} \\
&+\tilde{e}_{n_0 1}(t) \\
=&\frac{1}{n_0}\sum_{j=1}^{n_0} H(x_{n_1 + j};\nu_0) \left\{I(c(x_{n_1 + j};\hat{\nu}) \leq t) - F_1(t)\right\}\\
&+\frac{1}{n_0}\sum_{j=1}^{n_0}   \left\{I(c(x_{n_1 + j};\hat{\nu}) \leq t) - I(c(x_{n_1 + j};{\nu_0}) \leq t)\right\} \nabla_{\nu}^{\top} H(x_{n_1 + j};\nu_0) (\hat{\nu} - \nu_0)  \\
&
+ \frac{1}{n_0}\sum_{j=1}^{n_0}   \left\{I(c(x_{n_1 + j};{\nu_0}) \leq t) - F_1(t)\right\} \nabla_{\nu}^{\top} H(x_{n_1 + j};\nu_0) (\hat{\nu} - \nu_0) +\tilde{e}_{n_0 1}(t) \\
=&\frac{1}{n_0}\sum_{j=1}^{n_0} H(x_{n_1 + j};\nu_0) \left\{I(c(x_{n_1 + j};\hat{\nu}) \leq t) - F_1(t)\right\} \\
&+\frac{1}{n_0}\sum_{j=1}^{n_0}   \left\{I(c(x_{n_1 + j};\hat{\nu}) \leq t) - I(c(x_{n_1 + j};{\nu_0}) \leq t)\right\} \nabla_{\nu}^{\top} H(x_{n_1 + j};\nu_0) (\hat{\nu} - \nu_0)  \\
&
+{\bm{K}}^{\top}_{1}(t)(\hat{\nu} - \nu_0) +\tilde{e}_{n_0}(t),
\end{align*} 
where 
\begin{align*}
\tilde{e}_{n_0 1}(u)
&= \frac{1}{n_0}\sum_{j=1}^{n_0} \bigl\{ H(x_{n_1 + j};\hat{\nu}) - H(x_{n_1 + j};\nu_0) - \nabla_{\nu}^{\top} H(x_{n_1 + j};\nu_0) (\hat{\nu} - \nu_0) \bigr\}  \notag  \\ 
&\hspace{1.7cm} \times  \left\{I(c(x_{n_1 + j};\hat{\nu}) \leq t)  -F_1(t)\right\}
\end{align*}
and 
$$
\tilde{e}_{n_0}(t) = \tilde{e}_{n_0 1}(t) - 
\tilde{\bm{e}}_{n_0 2}^{\top}(t)(\hat{\nu} - \nu_0)
$$
with 
$$
\tilde{\bm{e}}_{n_0 2}(t) = \frac{1}{n_0}\sum_{j=1}^{n_0}
        \nabla_{\nu} H(x_{n_1 + j};\nu_0)\left\{I(c(x_{n_1 + j}) \leq t) - F_1(t)\right\} -\bm{K}_{1}(t). 
$$

Similar to the analysis in the proof procedure of Part (b) of Lemma~\ref{lem-toprovethm3}, we obtain
$$
\sup_{u}\|\tilde{\bm{e}}_{n_0 2}(u)\| = O_p((n_0+n_1)^{-1/2})~~\mbox{ and }
~~\sup_{u}|\tilde{{e}}_{n_01}(u)| = o_p((n_0+n_1)^{-1/2}).$$
Therefore, 
\begin{equation}
    \sup_{u}|\tilde{{e}}_{n_0}(u)| = \sup_{u}|\tilde{e}_{n_0 1}(u) - 
\tilde{\bm{e}}_{n_0 2}^{\top}(u)(\hat{\nu} - \nu_0)| =  o_p((n_0+n_1)^{-1/2}).
\label{tilerr}
\end{equation}

To facilitate the proof, we introduce some notation.
For any function $a(x;\nu)$, 
we define 
$$
\mathbb{P}_{n_0} a(X;\nu)
=\frac{1}{n_0} \sum_{j=1}^{n_0}
a(x_{n_1+j};\nu)~~
\mbox{ and }~~
\mathbb{P}_{n_0} a(X;\hat\nu)
=\frac{1}{n_0} \sum_{j=1}^{n_0}
a(x_{n_1+j};\hat\nu).
$$
Further, 
$$
\mathbb{P}_{0} a(X;\nu)
=\int
a(x;\nu) d P^{(0)}(x)
~~
\mbox{ and }
~~
\mathbb{P}_{0} a(X;\hat\nu)
=\int
a(x;\hat\nu) d P^{(0)}(x).
$$
Define the empirical process $\mathbb{G}_{n_0} = \sqrt{n_0}(\mathbb{P}_{n_0} - \mathbb{P}_0)$.  
Then, under these notations and combining \eqref{tilerr}, we can rewrite
    \begin{align}
    \label{keyexp-til}
        \widetilde{D}_{n_0 1}(t) =& 
\frac{1}{n_0}\sum_{j=1}^{n_0} H(x_{n_1 + j};\nu_0) \left\{I(c(x_{n_1 + j};{\nu_0}) \leq t) - F_1(t)\right\} + {\bm{K}}^{\top}_{1}(t)(\hat{\nu} -\nu_0) \notag  \\
& + I_{n_0 1}(t) + I_{n_0 2}(t)+ \bm{I}^{\top}_{n_0 3}(t)(\hat{\nu} -\nu_0) + o_P((n_0+n_1)^{-1/2}),
    \end{align}
where 
\begin{align*}
    I_{n_0 1}(t)&= (\mathbb{P}_{n_0} - \mathbb{P}_0) \left[ \left\{I(c(X;\hat{\nu}) \leq t) - I(c(X;{\nu_0}) \leq t)\right\} H(X;\nu_0)  \right] ,\\
I_{n_0 2}(t)&=  \mathbb{P}_0 \left[\left\{I(c(X;\hat{\nu}) \leq t) - I(c(X;{\nu_0}) \leq t)\right\}  H(X;\nu_0)  \right]  ,\\
\bm{I}_{n_0 3}(t) &=  \frac{1}{n_0}\sum_{j=1}^{n_0}   \left\{I(c(x_{n_1 + j};\hat{\nu}) \leq t) - I(c(x_{n_1 + j};{\nu_0}) \leq t)\right\} \nabla_{\nu} H(x_{n_1 + j};\nu_0)
\end{align*}
and the remainder term is uniform with respect to $t$. Similarly, we can also divide $\bm{I}_{n_0 3}(t)$ into 
$$
\bm{I}_{n_0 3}(t) = \bm{I}^1_{n_0 3}(t) + \bm{I}^2_{n_0 3}(t), 
$$
where 
\begin{align*}
    \bm{I}^{1}_{n_0 3}(t)&= (\mathbb{P}_{n_0} - \mathbb{P}_0) \left[  \left\{I(c(X;\hat{\nu}) \leq t) - I(c(X;{\nu_0}) \leq t)\right\} \nabla_{\nu} H(X;\nu_0)  \right] ,\\
\bm{I}^2_{n_0 3}(t)&=  \mathbb{P}_0\left[ \left\{I(c(X;\hat{\nu}) \leq t) - I(c(X;{\nu_0}) \leq t)\right\} \nabla_{\nu} H(X;\nu_0)  \right].
\end{align*}

We begin by analyzing each term separately, starting with $I_{n_01}(t)$.

In Part~(a) of Condition~8, $\mathbb{R}_{\nu_0} \subset \mathbb{R}^{d}$ denotes a bounded neighborhood of $\nu_{0}$, where $d = d_{\xi}+d_{\theta}$ is the dimension of $\nu$. Moreover, $\mathbb{R}_{c}$ is assumed to be bounded, i.e., contained in a bounded interval of $\mathbb{R}$. Since Section~\ref{proof-thm12} has established the consistency of $\hat{\omega}$, it follows that $\hat{\nu}\to_p\nu_0$, and hence $\hat{\nu}\in\mathbb{R}_{\nu_0}$ with probability approaching one. Throughout the remainder of the proofs, we therefore restrict $\nu$ and $t$ to $\mathbb{R}_{\nu_0}$ and $\mathbb{R}_{c}$, respectively, so that any expression like $\sup_{t}$ should be interpreted as $\sup_{t\in \mathbb{R}_c}$.

% By the continuity of $c(x;\nu)$, the set
% \[
% \mathbb{R}_{c} = \{c(x;\nu): x \in \mathcal{X},\nu \in \mathbb{R}_{\nu_0}\}
% \]
% is a subset of $\mathbb{R}$.  We further assume that $\mathbb{R}_{c}$ is bounded, i.e., it is contained in a bounded interval of $\mathbb{R}$.

Next,  we establish that $\sup_{t}I_{n_0 1}(t) = o_p((n_0+n_1)^{-1/2})$ in three steps. 

\textbf{Step 1.} We first show that the function class  
\begin{equation}
    \label{class-F}
    \mathcal{F}= \left\{f_{\nu t}(x): \nu  \in \mathbb{R}_{\nu_0}, t \in \mathbb{R}_{c}   \right\},
\end{equation}
where 
$$
f_{\nu t}(x) = I(c(x;\nu) \leq t),
$$ is a Donsker class.

Since both $\mathbb{R}_{\nu_0}$ and $\mathbb{R}_{c}$ are bounded subsets of finite-dimensional Euclidean spaces, they are totally bounded. 
Hence, by Lemma~2.7 of \citet{sen2018gentle}, for any $\delta_1, \delta_2\in (0,1)$, the covering numbers satisfy 
\begin{equation}
\label{cnb_nu0}
    N_{1} = N(\delta_1,\mathbb{R}_{\nu_0},\|\cdot\|) \lesssim \bigl(\frac{1}{\delta_1}\bigr)^d
\end{equation}
and 
\begin{equation}
\label{cnb_t}
    N_{2} = N(\delta_2,\mathbb{R}_{c},\|\cdot\|) \lesssim  \frac{1}{\delta_2},
\end{equation}
where ``$\lesssim$" denote smaller than, up to a universal constant.

In particular, there exists a finite set 
$$\left\{ \nu^*_{1},\ldots, \nu^*_{N_{1}} \right\},$$
not necessarily contained in $\mathbb{R}_{\nu_0}$, such that for any $\nu \in \mathbb{R}_{\nu}$, one can find some $j\in \left\{1,\ldots, N_{1}\right\}$ satisfying  
\begin{equation*}
    \|\nu - \nu^*_j\| \leq \delta_1.
\end{equation*}

Similarly, for $\mathbb{R}_{c}$, there exists a finite set 
\[
\{t^*_{1}, \ldots, t^*_{N_{2}}\},
\]  
not necessarily contained in $\mathbb{R}_{c}$, such that for any $t \in \mathbb{R}_{c}$, there exists $j \in \{1,\ldots,N_{2}\}$ with  
\begin{equation*}
    \|t - t^*_{j}\| \leq \delta_2.
\end{equation*}

Define the bracketing functions
\begin{align*}
    \ell_{kj}(x) &= I(c(x;\nu_j^*) \leq t_k^* -\delta_1 L(x)- \delta_2 ),\\
    u_{kj}(x) &= I(c(x;\nu_j^*) \leq t_k^* + \delta_1 L(x)+ \delta_2 ) 
\end{align*}
for $j \in \{1,\ldots,N_{1}\}$ and $k \in \{1,\ldots,N_{2}\}$. 

We verify that $[\ell_{kj}(x),u_{kj}(x)]$ indeed brackets $f_{\nu t}(x)$.
For any fixed  $x\in \mathcal{X}$,  if $\ell_{kj}(x) = 1$, then under the Lipschitz condition in Part (b) of Condition~8, 
we have 
\begin{align*}
    c(x;\nu) &\leq c(x;\nu_j^*) + L(x) \|\nu - \nu_j^*\| \\
    &\leq c(x;\nu_j^*) + \delta_1 L(x) \\
    &\leq t_k^* -\delta_1 L(x) - \delta_2 + \delta_1 L(x) \leq t,
\end{align*}
which implies that
$$f_{\nu t}(x) = I(c(x;\nu) \leq t) = 1.$$
Hence, in this case,  $[\ell_{kj}(x),u_{kj}(x)]$ covers $f_{\nu t}(x)$.

If $u_{kj}(x) = 0$,  then similarly
\begin{align*}
    c(x;\nu) &\geq c(x;\nu_j^*) - L(x) \|\nu - \nu_j^*\| \\
    & > t_k^*+ \delta_2 +  \delta_1 L(x)  - L(x) \|\nu - \nu_j^*\| \\
    & \geq t_k^*  - \delta_2 \geq t,
\end{align*}
so that 
$$f_{\nu t}(x) = I(c(x;\nu) \leq t) = 0.$$
Thus, $[\ell_{kj}(x),u_{kj}(x)]$ again covers $f_{\nu t}(x)$.

Finally, if $\ell_{kj}(x)=0$ and $u_{kj}(x)=1$, the bracket trivially covers $f_{\nu t}(x)$.   

Thus, $\{[\ell_{kj}(x),u_{kj}(x)]\}$ forms a bracketing cover of $\mathcal{F}$.  
Combining \eqref{cnb_nu0} and \eqref{cnb_t}, the bracketing number satisfies
\begin{equation}
\label{cnb_bracket-1}
    N_{[]} = N_{1}\cdot N_{2} \lesssim \bigl(\frac{1}{\delta_1}\bigr)^d\cdot \frac{1}{\delta_2}.
\end{equation}

To quantify the size of each bracket, we evaluate the $L_2(P^{(0)})$ distance between its endpoints.  Specifically, we have
\begin{align}
\label{len-bracket-1}
\|\ell_{kj}(X)-u_{kj}(X)\|_{L_2(P^{(0)})}^2 &= \mathbb{P}_{0}(|c(X;\nu_j^*)-t_{k}^*| \leq \delta_1 L(X) + \delta_2) \notag \\
&= \mathbb{P}_{0}(|c(X;\nu_j^*)-t_{k}^*| \leq \delta_1 L(X) + \delta_2, L(X) \leq L) \notag \\
&\hspace{0.5cm}+  \mathbb{P}_{0}(|c(X;\nu_j^*)-t_{k}^*| \leq \delta_1 L(X) + \delta_2, L(X) > L) \notag \\
&\leq \mathbb{P}_{0}(|c(X;\nu_j^*)-t_{k}^*| \leq \delta_1 L + \delta_2 )  +   \mathbb{P}_{0}( L(X) > L) \notag \\
&\leq \mathbb{P}_{0} (|c(X;\nu_j^*)-t_{k}^*| \leq \delta_1 L + \delta_2 ) + \Ex_{0}(L^{2+\delta}(X))/L^\delta,
\end{align}
where the last inequality follows from Markov’s inequality, with $\delta > 0$ and $L$ is some positive constant.

Finally, under the assumption
\[
\sup_{\nu \in \mathbb{R}_{\nu_0}} \sup_{t \in \mathbb{R}_{c}} f_{y}(t;\nu) \leq M < \infty,
\]
in Part (a) of Condition~8,
the first term on the right-hand side of \eqref{len-bracket-1} can be bounded as 
\begin{align}
    \label{first-term} 
    \mathbb{P}_{0}(|c(X;\nu_j^*)-t_{k}^*| \leq \delta_1 L  + \delta_2 ) 
    &=\mathbb{P}_{0}(t_{k}^* - \delta_1 L - \delta_2 \leq c(X;\nu_j^*) \leq t_{k}^* + \delta_1 L + \delta_2 ) \notag  \\
    &= \int_{t_{k}^* - \delta_1 L - \delta_2}^{t_{k}^* + \delta_1 L + \delta_2} f(z;\nu_j^*)\,dz \notag \\
    &\leq \int_{t_{k}^* - \delta_1 L - \delta_2}^{t_{k}^* + \delta_1 L + \delta_2} \sup_{z\in \mathbb{R}_{c}} f(z;\nu_j^*)\,dz \notag \\
    &\leq 2M(\delta_1 L + \delta_2),
\end{align}
where $f(z;\nu_j^*) = \mu_0 f_{1}(z;\nu_j^*) + (1-\mu_0)f_{0}(z;\nu_j^*). $

Substituting \eqref{first-term} into \eqref{len-bracket-1}, we obtain
\begin{equation}
\label{len-bracket-2}
    \|\ell_{kj}(X)-u_{kj}(X)\|_{L_2(P^{(0)})}^2 \leq 2M(\delta_1 L + \delta_2) + \Ex_{0}(L^{2+\delta}(X))/L^{2+\delta}.
\end{equation}

For any $\varepsilon >0$, choose 
\begin{equation}
\label{choice-par}
    L = \bigl[\frac{2\Ex_{0}(L^{2+\delta}(X)) }{\varepsilon^2}\bigr]^{1/(2 + \delta)},\quad\delta_1 = \frac{\varepsilon^2}{8ML},\quad \delta_2 = \frac{\varepsilon^2}{8M}. 
\end{equation}
Then \eqref{len-bracket-2} implies
\begin{equation*}
     \|\ell_{kj}(X)-u_{kj}(X)\|_{L_2(P^{(0)})} \leq \varepsilon. 
\end{equation*}

Therefore, combining \eqref{cnb_bracket-1} with \eqref{choice-par}, the $\varepsilon$-bracketing cover number of $\mathcal{F}$ satisfies
\begin{align}
\label{cnb_bracket-2}
    N_{[]}= N_{[]}(\varepsilon,\mathcal{F},L_2(P^{(0)}))  
     &\lesssim \varepsilon^{-\bigl(2 + 2d + \tfrac{2d}{2+\delta}\bigr)}. 
\end{align}

Finally, using \eqref{cnb_bracket-2}, a direct calculation shows
\begin{align}
\label{entropy-bnd}
    \int_{0}^{1} \sqrt{\log(N_{[]}(\varepsilon,\mathcal{F},L_2(P^{(0)})))} d\varepsilon &= \sqrt{2 + 2d + \frac{2d}{2+\delta}}\cdot \int_{0}^1 \sqrt{\log(\frac{1}{\varepsilon})}d\varepsilon \notag \\
    &= \sqrt{2 + 2d + \frac{2d}{2+\delta}}\cdot \int_{0}^1 z^{1/2}e^{-1}dz \notag \\
    &= \sqrt{2 + 2d + \frac{2d}{2+\delta}}\cdot \Gamma(\frac{3}{2}) < \infty.
\end{align}
Equation~\eqref{entropy-bnd}, together with the fact that the constant function $1$ is a square-integrable envelope function of $\mathcal{F}$, implies (by Theorem~11.3 in \cite{sen2018gentle}) that $\mathcal{F}$ is a Donsker class.

\textbf{Step 2.} 
Define 
\begin{equation*}
    \mathcal{F}_1 = \left\{x \mathrel{\mapsto}
H(x;\nu_0) I(c(x;\nu) \leq t) : \nu \in \mathbb{R}_{\nu_0},t \in \mathbb{R}_{c}\right\}. 
\end{equation*}
By Example~2.10.10 of \citet{van1996weak}, the class $\mathcal{F}_1$ is also Donsker, since $H(x;\nu_0) \in [0,1]$ uniformly.

\textbf{Step 3.} 
Recall that $H(x;\nu_0) = p^{(0)}(Y=1\mid x) \in [0,1]$. Then
\begin{align}
\label{squ-ineq-1}
&\sup_{t} \int H^2(x;\nu_0)\left\{ I(c(x;\hat{\nu}) \leq t) - I(c(x;{\nu_0}) \leq t)  \right\}^2 dP^{(0)}(x) \notag \\
&\leq \sup_{t} \int  |I(c(X;\hat{\nu}) \leq t) - I(c(X;{\nu_0}) \leq t) | dP^{(0)}(x) \notag\\
&= \sup_{t} \int I(\min\{c(X;\hat{\nu}),c(X;\nu_0)\} \leq t < \max\{c(X;\hat{\nu}),c(X;\nu_0)\}) dP^{(0)}(x) \notag\\
&\leq \sup_{t} \int I(t-L(X)\|\hat{\nu} - \nu_0\| \leq c(X;\nu_0)  < t+L(X)\|\hat{\nu} - \nu_0\| )dP^{(0)}(x),
\end{align}
where the last inequality follows from the fact that if $t$ lies between  $\min\{c(x;\hat{\nu}),c(x;\nu_0)\} $ and $\max\{c(x;\hat{\nu}),c(x;\nu_0)\}$, then
\begin{equation*}
 |t - c(x;\nu_0) | \leq    |c(x;\hat{\nu})-c(x;\nu_0)| \leq L(x) \|\hat\nu -\nu_0\|. 
\end{equation*}

Let $Z_{cL}$, $Z_{c|L}$, and $Z_L$ denote, respectively, the joint cumulative distribution function (cdf) of $\big(c(X;\nu_0), L(X)\big)$ under the target population, the conditional cdf of $c(X;\nu_0)$ given $L(X)$ under the target population, and the marginal cdf of $L(X)$ under the target population.
Under Part (b) of Condition~8, 
applying the mean value theorem, 
\begin{align}
\label{squ-ineq-2}
&\sup_{t} \int I(t-L(X)\|\hat{\nu} - \nu_0\| \leq c(X;\nu_0)  < t+L(X)\|\hat{\nu} - \nu_0\|)dP^{(0)}(x) \notag \\
&= \sup_{t} \int \left\{ Z_{c|L}(t+u  \|\hat{\nu} - \nu_0\|) - Z_{c|L}(t-u  \|\hat{\nu} - \nu_0\|)\right\} dZ_L(u) 
\notag\\
&= 2\|\hat{\nu} - \nu_0\| \cdot  \sup_{t} \int z_{cL}(t_{u}, u)\cdot u d u   \notag \\
&\leq  2\|\hat{\nu} - \nu_0\| \cdot    \int \sup_{t} z_{cL}(t_{u}, u)\cdot u du \notag \\
&= 2\|\hat{\nu} - \nu_0\| \cdot    \int  z^{*}(u) udu \convergepto 0,
\end{align}
where $t_{u}$ lies between $t-u\|\hat\nu-\nu_0\|$ and $t+u\|\hat\nu-\nu_0\|$.

Combining these three steps, Theorem~2.1 of \citet{van2007empirical}, together with Lemma~19.24 of \citet{van2000asymptotic}, implies
\begin{equation*}
   \sup_{t}|\mathbb{G}_{n_0}\left[ \left\{I(c(X;\hat{\nu}) \leq t) - I(c(X;{\nu_0}) \leq t)\right\} H(X;\nu_0) \right]|  \convergepto 0.
\end{equation*}
Hence, 
\begin{equation}
\label{Im1-final}
    \sup_{t}|I_{n_0 1}(t)| =  o_p(n_0^{-1/2}) = o_p((n_0+n_1)^{-1/2}).
\end{equation}

Next, we turn to the analysis of $I_{n_0 2}(t)$.  Recall that $H(x;\nu_0) = p^{(0)}(Y=1 \mid x)$.  
Then,
\begin{align*}
     I_{n_0 2}(t) &= \mu_0 \mathbb{P}_{0}\left\{ I(c(X;\hat{\nu}) \leq t) - I(c(X;{\nu_0}) \leq t) |Y=1\right\} \\
     &= \mu_0 \left\{F_{1}(t;\hat{\nu}) - F_{1}(t;\nu_0)\right\}.  
\end{align*}

Applying a first-order Taylor expansion around $\nu_0$, we obtain   
\begin{align}
\label{Im2-1}
     I_{n_0 2}(t) 
     &=  \mu_0 \nabla^{\top}_{\nu} F_{1}(t;\nu_0) (\hat{\nu} - \nu_0) + \mu_0  (\hat{\nu} - \nu_0) ^{\top}\nabla_{\nu\nu^{\top}}F_{1}(t;\tilde{\nu})  (\hat{\nu} - \nu_0),
\end{align}
where $\tilde{\nu}$ lies between $\hat{\nu}$ and $\nu_0$.

By Part (c) of Condition~8, 
and using the result in Part~(a) of Lemma~\ref{lem-toprovethm3}, it follows that
\begin{equation}
    \label{err-Im2}
    (\hat{\nu} - \nu_0) ^{\top}\nabla_{\nu\nu^{\top}}F_{1}(t;\tilde{\nu})  (\hat{\nu} - \nu_0) = o_p((n_0 + n_1)^{-1/2}). 
\end{equation}

Substituting \eqref{err-Im2} into \eqref{Im2-1}, we conclude that
\begin{align}
\label{Im2-final}
      I_{n_0 2}(t) 
      =  \mu_0 \nabla^{\top}_{\nu} F_{1}(t;\nu_0)(\hat{\nu} - \nu_0) 
      + o_p\bigl((n_0+n_1)^{-1/2}\bigr),
\end{align}
uniformly in $t$.  

Finally, combining \eqref{Im1-final} and \eqref{Im2-final}, we obtain
\begin{align*}
%\label{Im12-final}
    &\frac{1}{n_0}\sum_{j=1}^{n_0} H(x_{n_1 + j};\nu_0)\Bigl\{I(c(x_{n_1 + j};\hat{\nu}) \leq t) - F_1(t)\Bigr\} \notag \\
    &= I_{n_0 1}(t) + I_{n_0 2}(t) \notag \\
    &= \mu_0 \nabla^{\top}_{\nu} F_{1}(t;\nu_0)(\hat{\nu} - \nu_0) 
    + o_p\bigl((n_0+n_1)^{-1/2}\bigr).
\end{align*}

Next, we analyze $\bm{I}_{n_0 3}(t).$ 
To bound $\sup_{t} |\bm{I}_{n_0 3}^{\,1}(t)|$, we proceed in two steps, following arguments analogous to those used for $\sup_{t} |I_{n_0 1}(t)|$.

\textbf{Step 1. }
Define
\bas
\mathcal{F}_{2} = \left\{x \mathrel{\mapsto}
\nabla_{\nu}H(x;\nu_0) I(c(x;\nu) \leq u):\nu \in \mathbb{R}_{\nu_0},t \in \mathbb{R}_{c}
\right\}. 
\eas 
Recall that the class $\mathcal{F}$ in \eqref{class-F} is Donsker. Since $\nabla_{\nu}H(x;\nu_0)$ is square-integrable, by Example~2.10.10 of \citet{van1996weak} (multiplication by a bounded/square-integrable factor preserves Donsker under an $L_2$-envelope), the class $\mathcal{F}_2$ is also Donsker.

\textbf{Step 2.}
By Hölder’s inequality,
\begin{align}
\label{squ-ineq-3}
&\sup_{t} \int \|\nabla_{\nu}H(x;\nu_0)\|^2\left\{ I(c(x;\hat{\nu}) \leq t) - I(c(x;{\nu_0}) \leq t)  \right\}^2 dP^{(0)}(x) \notag \\
&\leq \left[\Ex_0 \left\{  \|\nabla_{\nu}H(X;\nu_0)\|^{2+\delta} \right\}\right]^{2/(2+\delta)}
\cdot \sup_{t}  \left[\mathbb{P}_{0} \left\{  |I(c(X;\hat{\nu}) \leq t) - I(c(X;{\nu_0}) \leq t)| \right\}\right]^{\delta/(2+\delta)}  \notag \\
&\lesssim  \sup_{t}  \left[\mathbb{P}_{0} \left\{  |I(c(X;\hat{\nu}) \leq t) - I(c(X;{\nu_0}) \leq t)| \right\}\right]^{\delta/(2+\delta)},
\end{align}
where the last inequality holds since  $\Ex_0 \left\{  \|\nabla_{\nu}H(x;\nu_0) \|^{2+\delta}\right\} < \infty$ under Condition 6.

Furthermore, by \eqref{squ-ineq-1} and \eqref{squ-ineq-2}, it follows that 
\begin{equation}
\label{idx-conv}
    \sup_{t} \int |I(c(X;\hat{\nu}) \leq t) - I(c(X;{\nu_0}) \leq t) | dP^{(0)}(x)  \convergepto 0. 
\end{equation}

Combining \eqref{squ-ineq-3} with \eqref{idx-conv}, we conclude that 
\begin{equation}
\label{conv-secm}
     \sup_{t} \int  \|\nabla_{\nu}H(x;\nu_0)\|^2\left\{ I(c(x;\hat{\nu}) \leq t) - I(c(x;{\nu_0}) \leq t)  \right\}^2 dP^{(0)}(x) \convergepto 0. 
\end{equation}

Therefore, by an argument parallel to Step~1 for $I_{n_0 1}(t)$, we finally obtain
\begin{align}
\label{Im31-final}
    \sup_{t} \bm{I}^{1}_{n_0 3}(t) = o_p(n_0^{-1/2}) = o_p((n_0+n_1)^{-1/2}).
\end{align}

Next, we turn to the analysis of $\bm{I}^2_{n_0 3}(t)$.  Observe that
\begin{align*}
    \|\bm{I}^2_{n_0 3}(t)\|^2 &\leq \mathbb{P}_0\left[ \|\nabla_{\nu} H(X;\nu_0) \|^2 \left\{I(c(X;\hat{\nu}) \leq t) - I(c(X;{\nu_0}) \leq t)\right\}^2  \right]. 
\end{align*}
The right-hand side coincides with the convergence bound in~\eqref{conv-secm}, and hence
\begin{equation}
    \label{Im32-final}
\sup_{t}\bm{I}^{2}_{n_0 3}(t) 
= o_p(1). 
\end{equation}

Therefore, combining \eqref{Im31-final} and \eqref{Im32-final}, we obtain
\begin{equation}
     \bm{I}_{n_0 3}(t) = \bm{I}^{1}_{n_0 3}(t)  + \bm{I}^{2}_{n_0 3}(t)  = o_p(1),
\end{equation}
 uniformly in $t$.  Consequently, we deduce that
\begin{equation}
    \label{Im3-final}
    \bm{I}_{n_0 3}^{\top}(t) (\hat{\nu} -\nu_0) 
    = o_p((n_0+n_1)^{-1/2}).
\end{equation}

Collecting the results in \eqref{Im1-final}, \eqref{Im2-final}, and \eqref{Im3-final}, the expansion \eqref{keyexp-til} can be rewritten as
\begin{align*}
        \widetilde{D}_{n_0 1}(t) =& 
\frac{1}{n_0}\sum_{j=1}^{n_0} H(x_{n_1 + j};\nu_0) \left\{I(c(x_{n_1 + j};\nu_0) \leq t) - F_1(t)\right\} + {\bm{K}}^{\top}_{1}(t)(\hat{\nu} -\nu_0) \notag \\
&+ \mu_0 \nabla^{\top}_{\nu} F_{1}(t;\nu_0) (\hat{\nu} - \nu_0) + o_P((n_0+n_1)^{-1/2})  \notag\\
:=& \frac{1}{n_0}\sum_{j=1}^{n_0} H(x_{n_1 + j};\nu_0) \left\{I(c(x_{n_1 + j};\hat{\nu}) \leq t) - F_1(t)\right\} \notag \\
&  + \widetilde{\bm{K}}^{\top}_{1}(t)(\hat{\nu} -\nu_0) + o_P((n_0+n_1)^{-1/2}),
\end{align*}
where 
\begin{align*}
     \widetilde{\bm{K}}_{1}(t) &= {\bm{K}}_{1}(t)  + \mu_{0} \nabla_{\nu} F_{1}(t;\nu_0) \notag \\
        &= \Ex_{0}\left[\nabla_{\nu}H(X;\nu_0)\left\{I(c(X;\nu_0) \leq t) -F_1(t) \right\} \right] + \mu_{0} \nabla_{\nu} F_{1}(t;\nu_0).
\end{align*}

In Section~\ref{proof-thm3}, we have shown that 
$$
\sup_t \|{\bm{K}}_{k}(t)\| < \infty,k=0,1.
$$
Similarly, under the additional assumption in Condition 8 that
$$
\sup_{t}\|\nabla_{\nu}F_1(t;\nu_0) \|< \infty,
$$
for the modified $\widetilde{\bm{K}}_{1}(t)$, we also have
\begin{equation*}
 \sup_t \|\widetilde{\bm{K}}_{1}(t)\| < \infty,    
\end{equation*}
and consequently,
\begin{equation}
\label{sup1}
    \sup_t |\widetilde{\bm{K}}_{1}^{\top}(t) (\hat{\nu} - \nu_0)|  = O_p((n_0+n_1)^{-1/2})
\end{equation}
due to the uniform boundedness of $\nabla_{\nu}F_1(t;\nu_0)$ and the consistency of $\hat{\nu}$.

Moreover, Section~\ref{proof-thm3} also established that 
\begin{equation}
\label{sup2}
    \sup_{t}|\frac{1}{n_0}\sum_{j=1}^{n_0} H(x_{n_1 + j};\nu_0) \left[I(c(x_{n_1 + j};\nu_0) \leq t) - F_1(t)\right]| = O_p((n_0+n_1)^{-1/2}).
\end{equation}

Therefore, combining \eqref{sup1} and \eqref{sup2}, we conclude that 
\begin{equation*}
    \sup_{t}|\widetilde{D}_{n_0 1}(t)| = O_p((n_0+n_1)^{-1/2}).
\end{equation*}

Following a similar proof procedure as in Section~\ref{S4.section1}, and omitting intermediate technical details for brevity, we obtain the expansion
    \begin{align}
    \label{final-tilf1}
    \widetilde{F}_{1}(t) - F_1(t) =& \frac{1}{n_0+n_1}
\sum_{i=1}^{n_0+n_1}  \left[
\frac{1+\rho}{\mu_0 \rho}(1-s_i)
H(x_i;\nu_0)\left\{I(c(x_i;\nu_0) \leq t) - F_1(t)\right\} \right. \notag  \\
&\left.
- \frac{1+\rho}{\mu_0}\widetilde{\bm{K}}^{\top}_{1}(t)V T(x_i,y_i,s_i;\nu_0) \right] +  o_p((n_0+n_1)^{-1/2})\notag \\
:=&   \frac{1}{n_0+n_1} 
\sum_{i=1}^{n_0+n_1} \tilde{L}_{1}(x_i,y_i,s_i;t) + o_p((n_0+n_1)^{-1/2}).
    \end{align}
Similarly, 
   \begin{align}
          \label{final-tilf0}
    \widetilde{F}_{0}(u) - F_0(u) =& \frac{1}{n_0+n_1}
\sum\limits_{i=1}^{n_0+n_1}  \left[
\frac{1+\rho}{(1-\mu_0) \rho}(1-s_i)
(1-H(x_i;\nu_0))\left\{I(c(x_i;\nu_0) \leq u) - F_0(u)\right\} \right. \notag  \\
& \left.
+ \frac{1+\rho}{1-\mu_0}\widetilde{\bm{K}}^{\top}_{0}(u)V T(x_i,y_i,s_i;\nu_0) \right] +  o_p((n_0+n_1)^{-1/2}) \notag \\
:=& \frac{1}{n_0+n_1}
\sum\limits_{i=1}^{n_0+n_1} \tilde{L}_{0}(x_i,y_i,s_i;u) + o_p((n_0+n_1)^{-1/2}), 
    \end{align}
where
\begin{align*}
        \widetilde{\bm{K}}_{0}(u) &= {\bm{K}}_{0}(k) + (1-\mu_0) \nabla_{\nu} F_{0}(u;\nu_0)\notag \\
        &= \Ex_{0}\left[ \nabla_{\nu}H(X;\nu_0)\left\{I(c(X;\nu_0) \leq u) -F_0(u) \right\}  \right] + (1-\mu_0) \nabla_{\nu} F_{0}(u;\nu_0).
    \end{align*}

}

The asymptotic approximations in \eqref{final-tilf1} and \eqref{final-tilf0} imply that both $\widetilde{F}_{1}(t)$
and $\widetilde{F}_{0}(u)$ are asymptotically
linear with influence functions being $\tilde{L}_1(x, y,s;t)$ and $\tilde{L}_0(x,y,s;u)$, respectively.
With the above discussion, the function class of $ \tilde{L}_1(x,y,s;t) $ (indexed by $t$) and the class of $\tilde{L}_0(x,y,s;u)$ (indexed by $u$) are both Donsker classes.  Consequently, the empirical processes  $\sqrt{n_0+n_1} \{ \widetilde{F}_0(u) - F_0(u) \}$
and  $\sqrt{n_0+n_1} \{ \widetilde{F}_1(t) - F_1(t) \}$
converge jointly to a bivariate, mean-zero, Gaussian process with right-continuous sample paths.
 \end{proof}

\subsection{Proof of the first part of Theorem 4}
Similar to \eqref{expan-auc0}, we have 
\begin{align*}
     \widetilde{AUC} - {AUC} &= - \int \bigl\{\widetilde{F}_{1}(u) - F_1(u)\bigr\}dF_0(u) \notag  \\
        &\hspace{0.5cm} + \int \bigl\{\widetilde{F}_{0}(u) - F_0(u)\bigr\}dF_1(u) 
        + o_p((n_0+n_1)^{-1/2}).
\end{align*}
With \eqref{til-F0}--\eqref{til-F1} and straightforward algebra, we further get 
    \begin{align}
     \label{expan-tilauc}
     \widetilde{AUC} - {AUC} =& \bar{H}^{-1}\cdot \frac{1}{n_0}\sum_{j=1}^{n_0}H(x_{n_1 + j};\hat{\nu})\left\{F_0(c(x_{n_1 + j};\hat{\nu})) - AUC\right\} \notag    \\
     &+(1-\bar{H})^{-1}\cdot \frac{1}{n_0}\sum_{j=1}^{n_0}\left\{1-H(x_{n_1 + j};\hat{\nu}) \right\}\left\{1-F_1(c(x_{n_1 + j};\hat{\nu})) - AUC\right\} \notag \\
     &+ o_p((n_0+n_1)^{-1/2}).
    \end{align}
Under Condition~6, and recalling the definitions of $\widetilde{\bm{E}}_{1}$ and $\widetilde{\bm{E}}_{2}$ in \eqref{matrix-tile1} and \eqref{matrix-tile2}, respectively, a first-order Taylor expansion yields
    \begin{align}
        \label{expan-tilauc1}
 & \frac{1}{n_0}\sum_{j=1}^{n_0}H(x_{n_1 + j};\hat{\nu})\left\{F_0(c(x_{n_1 + j};\hat{\nu})) - AUC\right\} \notag  \\
  &= \frac{1+\rho}{\rho}\cdot \frac{1}{n_0+n_1} \sum_{i=1}^{n_0+n_1}(1-s_i)H(x_i;\nu_0)
  \left\{F_0(c(x_{i};{\nu_0})) - AUC\right\} \notag \\
  &\hspace{0.5cm}+\widetilde{\bm{E}}_{1}^{\top}(\hat{\nu} - \nu_0) +  o_p((n_0+n_1)^{-1/2})  \notag  \\
  &= \frac{1+\rho}{\rho}\cdot 
  \widetilde{Z}_{n_0+n_1,2}  +\widetilde{\bm{E}}_{1}^{\top}(\hat{\nu} - \nu_0) +  o_p((n_0+n_1)^{-1/2}).
    \end{align}
Similarly,
    \begin{align}
        \label{expan-tilauc2}
 & \frac{1}{n_0}\sum_{j=1}^{n_0}\left\{1-H(x_{n_1 + j};\hat{\nu})\right\}\left\{1 -F_1(c(x_{n_1 + j};\hat{\nu})) - AUC\right\} \notag \\
  &= \frac{1+\rho}{\rho}\cdot \frac{1}{n_0+n_1} \sum_{i=1}^{n_0+n_1}(1-s_i)\left\{ 1-H(x_i;\nu_0) \right\}
  \left\{1-F_1(c(x_{i};{\nu_0})) - AUC\right\}\notag \\
  &\hspace{0.5cm}-\widetilde{\bm{E}}_{2}^{\top}(\hat{\nu} - \nu_0) +  o_p((n_0+n_1)^{-1/2}) \notag  \\
  &= \frac{1+\rho}{\rho}\cdot 
  \widetilde{Z}_{n_0+n_1,3}  -\widetilde{\bm{E}}_{2}^{\top}(\hat{\nu} - \nu_0) +  o_p((n_0+n_1)^{-1/2}).
    \end{align}

Plugging \eqref{expan-tilauc1} and \eqref{expan-tilauc2} into \eqref{expan-tilauc} and using the result from Part (a) of Lemma~\ref{lem-toprovethm3}, we further obtain 
\begin{align*}
      \widetilde{AUC} - {AUC} =& \frac{1}{\bar{H}}\left\{ 
        \frac{1+\rho}{\rho}\cdot 
  \widetilde{Z}_{n_0+n_1,2}  -(1+\rho)\widetilde{\bm{E}}_{1}^{\top}V{Z}_{n_0+n_1,1}
        \right\} \notag\\
        & + \frac{1}{1-\bar{H}}
        \left\{ 
        \frac{1+\rho}{\rho}\cdot 
  \widetilde{Z}_{n_0+n_1,3}  +(1+\rho)\widetilde{\bm{E}}_{2}^{\top}V{Z}_{n_0+n_1,1}
        \right\}  + o_p((n_0+n_1)^{-1/2}) \notag\\
        =& \frac{1}{\mu_0}\left\{ 
        \frac{1+\rho}{\rho}\cdot 
  \widetilde{Z}_{n_0+n_1,2}  -(1+\rho)\widetilde{\bm{E}}_{1}^{\top}V{Z}_{n_0+n_1,1}
        \right\} \notag \\
        & + \frac{1}{1-\mu_0}
        \left\{ 
        \frac{1+\rho}{\rho}\cdot 
  \widetilde{Z}_{n_0+n_1,3} +(1+\rho)\widetilde{\bm{E}}_{2}^{\top}V{Z}_{n_0+n_1,1}
        \right\}  + o_p((n_0+n_1)^{-1/2}) \notag\\
        =& \widetilde{\bm{M}}_{1}^{\top}\widetilde{Z}_{n_0+n_1}  + o_p((n_0+n_1)^{-1/2}),
\end{align*}
where
\begin{equation*}
    \widetilde{\bm{M}}_{1}  =  \begin{pmatrix}
     V^{\top}\left\{\widetilde{\bm{E}}_2/(1-\mu_0) - \widetilde{\bm{E}}_1/\mu_0\right\} \cdot (1+\rho)\\
    1/\mu_0\cdot (1+\rho)/\rho\\
    1/(1-\mu_0)\cdot (1+\rho)/\rho\\
    0\\
    0
\end{pmatrix}.
\end{equation*}
Applying Slutsky's theorem and \eqref{asy-tilZ}, we have
\begin{equation*}
    %\label{asym-tilAUC}
    \sqrt{n_0+n_1}(\widetilde{AUC} - {AUC} ) \convergeto \mathcal{N}(0,\tilde{\sigma}_{AUC}^2),
\end{equation*}
with 
\begin{equation}
    \tilde{\sigma}_{AUC}^2 = \widetilde{\bm{M}}_1^{\top}\widetilde{\Sigma}_Z \widetilde{\bm{M}}_1.
\end{equation}

\subsection{Proof of the second part of Theorem 4}
Similarly to~\eqref{reexpan1-roc}, we obtain 
    \begin{align}
    \label{reexpan1-tilroc}
        \widetilde{ROC}(u) - ROC(u) =&\frac{f_1(\tau_{1-u})}{f_0(\tau_{1-u})} \bigl\{\widetilde{F}_0(\tau_{1-u}) - (1-u)\bigr\} \notag \\
        & +  \bigl\{F_1(\tau_{1-u}) - \widetilde{F}_1({\tau}_{1-u})\bigr\} + o_p((n_0+n_1)^{-1/2}).
\end{align}

Recall the definitions of $\widetilde{\bm{E}}_{3}$ and $\widetilde{\bm{E}}_{4}$ in \eqref{matrix-tile3} and \eqref{matrix-tile4}, respectively. 
For fixed $\tau_{1-u}$, we have 
\begin{equation}
\label{expan-tilF0}
    \widetilde{F}_0(\tau_{1-u}) - F_0(\tau_{1-u}) = \frac{\widetilde{D}_{n_0 0}(\tau_{1-u})}{1- \bar{H}},
\end{equation}
where 
\begin{align}
\label{Dm0-til}
  \widetilde{D}_{n_0 0}(\tau_{1-u}) =& 
  \frac{1}{n_0}\sum_{j=1}^{n_0} (1-H(x_{n_1 + j};{\nu_0}) ) \left\{ I(c(x_{n_1 + j};\nu_0) \leq \tau_{1-u} ) - F_0(\tau_{1-u})\right\} \notag  \\
  & - \widetilde{\bm{K}}^{\top}_{0}(\tau_{1-u})(\hat{\nu} -\nu_0) + o_P((n_0+n_1)^{-1/2}) \notag \\
  =& \frac{1+\rho}{\rho}\cdot \widetilde{Z}_{n_0+n_1,4}  - \widetilde{\bm{E}}_3^{\top}(\hat{\nu} - \nu_0) + 
  o_P((n_0+n_1)^{-1/2}). 
\end{align}
Similarly, 
\begin{equation}
\label{expan-tilF1}
    \widetilde{F}_1(\tau_{1-u}) - F_1(\tau_{1-u}) = \frac{\widetilde{D}_{n_0 1}(\tau_{1-u})}{1- \bar{H}},
\end{equation}
where 
\begin{align}
\label{Dm1-til}
  \widetilde{D}_{n_0 1}(\tau_{1-u}) =& 
  \frac{1}{n_0}\sum_{j=1}^{n_0} H(x_{n_1 + j};\nu_0) \left\{I(c(x_{n_1 + j};\nu_0) \leq \tau_{1-u} ) - F_1(\tau_{1-u})\right\} \notag \\
  &\ + \widetilde{\bm{K}}^{\top}_{1}(\tau_{1-u})(\hat{\nu} -\nu_0) + o_P((n_0+n_1)^{-1/2}) \notag \\
  =& \frac{1+\rho}{\rho}\cdot \widetilde{Z}_{n_0+n_1,5} + \widetilde{\bm{E}}_4^{\top}(\hat{\nu} - \nu_0) + 
  o_P((n_0+n_1)^{-1/2}).
\end{align}

Considering that $1-u = F_0(\tau_{1-u})$, plugging~\eqref{expan-tilF0} and~\eqref{expan-tilF1} into \eqref{reexpan1-tilroc}, we have 
\begin{equation}
    \label{final0-tilroc}
    \widetilde{ROC}(u) - ROC(u)  = \frac{f_1(\tau(1-u))}{f_0(\tau(1-u))}\cdot \frac{\widetilde{D}_{n_0 0}(\tau_{1-u})}{1-\bar{H}}  - \frac{\widetilde{D}_{n_0 1}(\tau_{1-u})}{\bar{H}} + o_P((n_0+n_1)^{-1/2}). 
\end{equation}
Replacing $\bar{H}$ with $\mu_0$ in~\eqref{final0-tilroc}, we obtain
\begin{equation}
\label{final1-tilroc}
\widetilde{ROC}(u) - ROC(u) 
    = \frac{f_1(\tau(1-u))}{f_0(\tau(1-u))}\cdot \frac{\widetilde{D}_{n_0 0}(\tau_{1-u})}{1-\mu_0}  - \frac{\widetilde{D}_{n_0 1}(\tau_{1-u})}{\mu_0} + o_P((n_0+n_1)^{-1/2}).
\end{equation}

Substituting $\hat{\nu} - \nu_0 = -(1+\rho)VZ_{n_0+n_1,1}$ into \eqref{Dm0-til} and \eqref{Dm1-til}, and then plugging the results into \eqref{final1-tilroc}, we get
\begin{align*}
\widetilde{ROC}(u) - ROC(u) 
    =&\frac{f_1(\tau(1-u))}{f_0(\tau(1-u))}\cdot \frac{1}{1-\mu_0} \left\{
    \frac{1+\rho}{\rho}\cdot \widetilde{Z}_{n_0+n_1,4}  +(1+\rho) \widetilde{\bm{E}}_3^{\top}VZ_{n_0+n_1,1}
    \right\}  \notag \\
    & - \frac{1}{\mu_0}\left\{
    \frac{1+\rho}{\rho}\cdot \widetilde{Z}_{n_0+n_1,5} -(1+\rho) \widetilde{\bm{E}}_4^{\top}VZ_{n_0+n_1,1}
    \right\} + o_P((n_0+n_1)^{-1/2}) \notag \\
    =&\widetilde{\bm{M}}_{2}^{\top}\widetilde{Z}_{n_0+n_1} + + o_P((n_0+n_1)^{-1/2}),
\end{align*} 
where 
\bas
\widetilde{\bm{M}}_2 = 
\begin{pmatrix}
     V^{\top}\left[\frac{f_1(\tau_{1-u})}{f_0(\tau_{1-u})} \widetilde{\bm{E}}_3 /(1-\mu_0) + \widetilde{\bm{E}}_4/\mu_0\right] \cdot (1+\rho)\\
     0\\
     0\\
     1/(1-\mu_0)\cdot \frac{f_1(\tau_{1-u})}{f_0(\tau_{1-u})} \cdot (1+\rho)/\rho\\
   - 1/\mu_0\cdot (1+\rho)/\rho
\end{pmatrix}.
\eas 
Applying Slutsky’s theorem and \eqref{asy-tilZ}, we get 
\bas
\sqrt{n_0+n_1}( \widetilde{ROC}(u) - ROC(u)) =
\widetilde{\bm{M}}_2^{\top}\cdot \sqrt{n_0+n_1} \widetilde{Z}_{n_0+n_1} + o_p(1) \convergeto N(0,  \tilde{\sigma}_{ROC}^2(u)),
\eas
where 
\begin{equation}
    \tilde{\sigma}_{ROC}^2(u) = \widetilde{\bm{M}}_2^{\top}\widetilde{\Sigma}_{Z}\widetilde{\bm{M}}_2.
\end{equation}
This completes the proof of Theorem 4 in the main paper.

\section{Additional simulation results}
\label{add-simu-res}

In this section, we present additional simulation results, including parameter estimation and the case from the main paper under unbalanced sample sizes with $n_1 \ne n_0$.

\subsection{Simulation results on  parameter estimation}

Table~\ref{tab:esttheta-simu} reports the parameter estimates for all sample size combinations. As described in Section~2.3 of the main paper, our procedure first obtains $\hat{\xi}$ by fitting a correctly specified logistic model for $P^{(1)}(Y=1 \mid x)$, and then estimates $\theta$ conditional on $\hat{\xi}$; we refer to this as the \emph{Proposed} method. For comparison, we also consider an \emph{Ideal} method that estimates $\theta$ using full information on $(X,Y)$
from both source and target populations. This oracle estimator provides a benchmark in the simulation study but is infeasible in practice, since outcome labels are unavailable in the target population.

The results in Table~\ref{tab:esttheta-simu} demonstrate that the proposed estimator performs well in finite samples.
Across all parameters and sample size settings, the \emph{Proposed} method achieves low {RB} and MSE, which are competitive compared to the ideal benchmark. The coverage probabilities (CPs) of the bootstrap confidence intervals are close to the nominal level, and the average lengths (ALs) decrease substantially as the sample size increases.

However, some simulation results still show slight differences depending on the sample size.
For cases with sample size imbalance, particularly when $n_1 = 500$ and $n_0 = 2000$, the proposed method yields smaller MSEs compared to the balanced case $n_1 = n_0 = 500$, yet the corresponding CPs are slightly further from the nominal level. For instance, the CP for $\alpha_1$ is only 92.6\%.
This phenomenon aligns with intuition: while a larger sample size generally provides more information for estimation, it is challenging to effectively leverage a relatively small source sample to improve estimation accuracy in the target population, especially when the source data may be less representative or coarser.
In contrast, when $n_1 = 2000$ and $n_0 = 500$, the \emph{Proposed} method exhibits both lower MSEs and shorter ALs relative to the balanced setting, along with CPs that are closer to the nominal level. These findings confirm the validity and efficiency of our two-step estimation procedure, even in the presence of distributional shift and partial label information for unbalanced sample sizes.

\begin{table}[!htt]
  \centering
  \setlength{\tabcolsep}{4pt} 
  \caption{Estimation performance of the Proposed and Ideal methods. RB: relative bias ($\%$); MSE: mean square error ($\times 1000$).}
  \begin{tabular}{clcccccccc}
  	\toprule
  	&       & {RB} & {MSE} & {CP} & {AL} & {RB} & {MSE} & {CP} & {AL} \\
  	\midrule
  	&       & \multicolumn{4}{c}{$n_1 = 500,n_0 = 500$} & \multicolumn{4}{c}{$n_1 = 2000,n_0 = 2000$} \\
  	\cmidrule(lr){3-6} \cmidrule(lr){7-10}
  	    \multirow{2}[0]{*}{$\alpha_0$} & Proposed & -0.375  & 0.024  & 95.60\% & 0.632  & 0.066  & 0.006  & 96.20\% & 0.307  \\
          & Ideal & 0.394  & 0.020  & 95.60\% & 0.560  & 0.283  & 0.005  & 94.40\% & 0.275  \\
    \multirow{2}[0]{*}{$\beta_0$} & Proposed & -0.650  & 0.109  & 96.20\% & 1.564  & -0.384  & 0.028  & 95.40\% & 0.678  \\
          & Ideal & 0.405  & 0.054  & 94.20\% & 0.877  & 0.099  & 0.011  & 94.40\% & 0.432  \\
    \multirow{2}[0]{*}{$\alpha_1$} & Proposed & -0.701  & 0.123  & 95.20\% & 1.873  & -0.548  & 0.027  & 96.00\% & 0.683  \\
          & Ideal & 0.521  & 0.022  & 94.20\% & 0.578  & 0.199  & 0.005  & 94.60\% & 0.283  \\
    \multirow{2}[0]{*}{$\beta_1$} & Proposed & -1.069  & 0.165  & 95.20\% & 2.128  & -0.574  & 0.041  & 95.00\% & 0.808  \\
          & Ideal & 0.300  & 0.058  & 93.00\% & 0.901  & 0.125  & 0.014  & 92.00\% & 0.445  \\
     \midrule 
          &       & \multicolumn{4}{c}{$n_1 = 500,n_0 = 2000$} & \multicolumn{4}{c}{$n_1 = 2000,n_0 = 500$} \\
          \cmidrule(lr){3-6} \cmidrule(lr){7-10}
   \multirow{2}[0]{*}{$\alpha_0$} & Proposed & -0.024  & 0.021  & 95.60\% & 0.598  & -0.085  & 0.008  & 94.80\% & 0.370  \\
          & Ideal & 0.505  & 0.019  & 95.00\% & 0.541  & 0.144  & 0.006  & 93.60\% & 0.315  \\
    \multirow{2}[0]{*}{$\beta_0$} & Proposed & -1.087  & 0.071  & 93.80\% & 1.062  & 0.002  & 0.064  & 95.60\% & 1.137  \\
          & Ideal & 0.179  & 0.034  & 93.40\% & 0.696  & 0.273  & 0.028  & 94.80\% & 0.686  \\
    \multirow{2}[0]{*}{$\alpha_1$} & Proposed & -2.802  & 0.056  & 92.60\% & 1.033  & 1.919  & 0.087  & 95.80\% & 1.313  \\
          & Ideal & -0.048  & 0.008  & 93.00\% & 0.338  & 0.710  & 0.020  & 93.80\% & 0.545  \\
    \multirow{2}[0]{*}{$\beta_1$} & Proposed & -2.192  & 0.092  & 94.40\% & 1.286  & 0.782  & 0.111  & 95.80\% & 1.457  \\
          & Ideal & -0.037  & 0.033  & 94.00\% & 0.702  & 0.463  & 0.036  & 93.20\% & 0.717  \\
          \bottomrule
  \end{tabular}%
  \label{tab:esttheta-simu}%
\end{table}%

\subsection{Additional results for  prediction performance}
{The results in Table~\ref{tab:summary-simu} are consistent with those reported in Table 1 of the main paper. While different sample sizes lead to some numerical variation, the relative performance of the methods remains largely unchanged. Accordingly, we omit a detailed discussion of Table~\ref{tab:summary-simu} here. }

\begin{table}[!htt]
  \centering
  \setlength{\tabcolsep}{2pt} 
  \caption{Simulation results for classification performance of four methods under unbalanced sample sizes ($n_1 \neq n_0$). RB: relative bias ($\%$); MSE: mean square error ($\times 1000$).
 }
   		\begin{tabular}{lccccccccccccccc}
   		\toprule 
   		&  {mean} &  {RB} &  {MSE} &  {mean} &  {RB} &  {MSE} &  {mean} &  {bias} &  {MSE}  \\
   		\midrule 
   		 & \multicolumn{9}{c}{$n_1 = 500, n_0 = 2000$} \\
         \cmidrule(lr){2-10}  
          & \multicolumn{3}{c}{Recall} & \multicolumn{3}{c}{Accuracy} & \multicolumn{3}{c}{Precision} \\
          \cmidrule(lr){2-4} \cmidrule(lr){5-7} \cmidrule(lr){8-10}
   Proposed & 0.786 & -0.708 & 1.302 & 0.844 & -0.727 & 0.150  & 0.818 & -1.018 & 0.786 \\
    Reweight & 0.786 & -0.731 & 1.496 & 0.843 & -0.882 & 0.175 & 0.816 & -1.307 & 0.934 \\
    Naive & 0.561 & -29.062 & 55.011 & 0.536 & -36.953 & 99.946 & 0.438 & -47.057 & 152.517 \\
    Oracle & 0.792 & 0.114 & 0.198 & 0.851 & 0.112 & 0.070  & 0.828 & 0.104 & 0.114 \\
    \midrule
          & \multicolumn{9}{c}{$n_1 =2000, n_0 = 500$} \\
          \cmidrule(lr){2-10}
          & \multicolumn{3}{c}{Recall} & \multicolumn{3}{c}{Accuracy} & \multicolumn{3}{c}{Precision} \\
          \cmidrule(lr){2-4} \cmidrule(lr){5-7} \cmidrule(lr){8-10}
     Proposed & 0.786 & -0.712 & 2.193 & 0.846 & -0.453 & 0.270  & 0.824 & -0.372 & 1.024 \\
    Reweight & 0.786 & -0.674 & 2.266 & 0.846 & -0.510 & 0.278 & 0.822 & -0.510 & 1.063 \\
    Naive & 0.560  & -29.186 & 55.28 & 0.535 & -37.021 & 99.839 & 0.437 & -47.095 & 152.727 \\
    Oracle & 0.795 & 0.481 & 0.700   & 0.852 & 0.264 & 0.231 & 0.829 & 0.237 & 0.386 \\
    \bottomrule
   	\end{tabular}%
  \label{tab:summary-simu-supp}%
\end{table}%

\subsection{Additional results for the target mean estimation}
{Table~\ref{tab:esty} presents the estimation results for $\mu = \Ex_0(Y)$ under unbalanced sample sizes. The overall patterns are consistent with those in Table~2 of the main paper. Among implementable methods, the \emph{Proposed} method generally achieves the lowest MSEs and shortest ALs while maintaining nominal coverage across both scenarios. An exception occurs when $n_1 = 2000$ and $n_0 = 500$, where the \emph{Reweight} method attains the smallest RB, as expected since it leverages the large source sample. Nevertheless, the \emph{Proposed} method consistently outperforms \emph{Reweight} in terms of MSE, CP, and AL, reaffirming its overall superiority. In contrast, the \emph{Naive} method performs poorly across all metrics due to ignoring distributional shift.}

\begin{table}[!htt]
  \centering
  \caption{Simulation results for estimating $\mu = \Ex_0(Y)$ (true value 0.4) under unbalanced sample sizes ($n_1 \neq n_0$). RB: relative bias ($\%$); MSE: mean square error ($\times 1000$).}
    \begin{tabular}{lcccccccc}
    \toprule
          &  {RB} &  {MSE} &  {CP} &  {AL} &  {RB} &  {MSE} &  {CP} &  {AL} \\
          \midrule
          & \multicolumn{4}{c}{$n_1=500,n_0=2000$} & \multicolumn{4}{c}{$n_1=2000,n_0=500$} \\
          \cmidrule(lr){2-5} \cmidrule(lr){6-9}
      IW    & 0.458  & 0.856  & 96.60\% & 0.121  & -0.123  & 1.271  & 95.40\% & 0.143  \\
    Proposed & 0.288  & 0.783  & 95.40\% & 0.112  & -0.177  & 1.263  & 95.40\% & 0.142  \\
    Reweight & 0.449  & 0.913  & 94.60\% & 0.119  & 0.018  & 1.302  & 95.40\% & 0.144  \\
    Naive & 27.228  & 12.318  & 0.20\% & 0.091  & 27.148  & 12.082  & 0.00\% & 0.068  \\
    Oracle & -0.132  & 0.123  & 93.80\% & 0.043  & 0.076  & 0.450  & 95.00\% & 0.085  \\
    \bottomrule
    \end{tabular}%
  \label{tab:esty}%
\end{table}%

\subsection{Additional results for ROC and AUC}

% \begin{table}[htbp]
%   \centering
%   \setlength{\tabcolsep}{4pt} 
%   \caption{Estimating $\theta$ under the sample sizes $n_1 \neq n_0$ for Example 1 in main paper.}
%     \begin{tabular}{clcccccccc}
%     \toprule 
%           &       &  {bias} &  {mse} &  {CP} &  {AL} &  {bias} &  {mse} &  {CP} &  {AL} \\
%           \midrule 
%           &       & \multicolumn{4}{c}{$n_1 = 500,n_0 = 2000$} & \multicolumn{4}{c}{$n_1 = 2000,n_0 = 500$} \\
%           \cmidrule(lr){3-6} \cmidrule(lr){7-10}
%     \multirow{2}[0]{*}{$\alpha_0$} & Proposed & 0.000  & 0.021  & 95.60\% & 0.598  & -0.001  & 0.008  & 94.80\% & 0.370  \\
%           & Ideal & 0.008  & 0.019  & 95.00\% & 0.541  & 0.002  & 0.006  & 93.60\% & 0.315  \\
%     \multirow{2}[0]{*}{$\beta_0$} & Proposed & 0.033  & 0.071  & 93.80\% & 1.062  & 0.000  & 0.064  & 95.60\% & 1.137  \\
%           & Ideal & -0.005  & 0.034  & 93.40\% & 0.696  & -0.008  & 0.028  & 94.80\% & 0.686  \\
%           \midrule 
%     \multirow{2}[0]{*}{$\alpha_1$} & Proposed & 0.039  & 0.056  & 92.60\% & 1.033  & -0.027  & 0.087  & 95.80\% & 1.313  \\
%           & Ideal & 0.001  & 0.008  & 93.00\% & 0.338  & -0.010  & 0.020  & 93.80\% & 0.545  \\
%     \multirow{2}[0]{*}{$\beta_1$} & Proposed & -0.054  & 0.092  & 94.40\% & 1.286  & 0.019  & 0.111  & 95.80\% & 1.457  \\
%           & Ideal & -0.001  & 0.033  & 94.00\% & 0.702  & 0.012  & 0.036  & 93.20\% & 0.717  \\
%           \bottomrule
%     \end{tabular}%
%   \label{tab:esttheta}%
% \end{table}%

Table~\ref{tab:estedROC-simu} summarizes the ROC estimation results under unbalanced sample sizes for both the fixed classifier $c(x) = P^{(1)}(Y = 1 \mid x)$ and the estimated classifier $\widehat{c}(x) = \widehat{P}^{(0)}(Y = 1 \mid x)$ at thresholds $u = 0.1$ and $0.2$.

\begin{table}[!htt]
  \centering
  \setlength{\tabcolsep}{2pt} 
  \caption{Estimation of ROC for the fixed classifier $c(x) = p^{(1)}(Y=1|x)$ and the 
  	estimated classifier  $\widehat{c}(x) = \widehat{p}^{(0)}(Y=1|x)$ at thresholds $u = 0.1$ and $0.2$ under unbalanced sample sizes ($n_1 \neq n_0$). RB: relative bias ($\%$); MSE: mean square error ($\times 1000$). }

   \begin{tabular}{cclcccccccc}
   	\toprule
   	& Threshold	&       &  {RB} &  {MSE} &  {CP} &  {AL} &  {RB} &  {MSE} &  {CP} &  {AL} \\
   	\midrule
   	&  & & \multicolumn{4}{c}{$n_1=500,n_0=2000$} & \multicolumn{4}{c}{$n_1=2000,n_0=500$} \\
     \cmidrule(lr){4-7}  \cmidrule(lr){8-11}
    \multirow{8}[0]{*}{$c(x)$}   &\multirow{4}[0]{*}{0.1} 
         &Proposed & 6.634  & 2.355  & 92.00\% & 0.196  & 2.077  & 1.730  & 96.60\% & 0.172  \\
    &&Reweight & 8.922  & 2.684  & 90.60\% & 0.201  & 2.884  & 1.825  & 95.80\% & 0.175  \\
    &&Naive & 177.769  & 133.318  & 0.00\% & 0.139  & 176.373  & 130.556  & 0.00\% & 0.096  \\
    &&Oracle & 0.250  & 0.197  & 95.40\% & 0.057  & 1.450  & 0.819  & 95.00\% & 0.116  \\
          \cmidrule{2-11}
    &\multirow{4}[0]{*}{0.2}
     &Proposed & 5.456  & 3.437  & 92.40\% & 0.236  & 2.373  & 2.561  & 95.60\% & 0.210  \\
    &&Reweight & 6.964  & 3.762  & 91.00\% & 0.238  & 2.999  & 2.641  & 95.20\% & 0.212  \\
    &&Naive & 138.003  & 178.115  & 0.00\% & 0.117  & 137.419  & 176.096  & 0.00\% & 0.081  \\
    &&Oracle & 0.124  & 0.296  & 95.60\% & 0.069  & 1.207  & 1.174  & 96.00\% & 0.140  \\
          \midrule    
          &       &       & \multicolumn{4}{c}{$n_1=500,n_0=2000$} & \multicolumn{4}{c}{$n_1=2000,n_0=500$} \\
           \cmidrule(lr){4-7}  \cmidrule(lr){8-11}
    \multirow{8}[0]{*}{$\widehat{c}(x)$} &\multirow{4}[0]{*}{0.1} 
    &Proposed & -0.506  & 1.240  & 96.80\% & 0.147  & -0.042  & 2.041  & 96.00\% & 0.181  \\
    &&Reweight & 0.131  & 1.438  & 97.20\% & 0.157  & 0.113  & 2.046  & 97.20\% & 0.182  \\
    &&Naive & -25.207  & 39.391  & 0.00\% & 0.142  & -26.706  & 43.304  & 0.00\% & 0.095  \\
    &&Oracle & 0.171  & 0.318  & 94.40\% & 0.067  & 0.547  & 1.156  & 93.00\% & 0.133  \\
           \cmidrule{2-11}
   & \multirow{4}[0]{*}{0.2} 
   &Proposed & -0.330  & 0.481  & 96.60\% & 0.092  & -0.118  & 0.789  & 96.00\% & 0.113  \\
    &&Reweight & 0.059  & 0.586  & 97.20\% & 0.101  & -0.036  & 0.789  & 96.80\% & 0.114  \\
    &&Naive & -16.386  & 21.706  & 0.00\% & 0.116  & -17.455  & 23.979  & 0.00\% & 0.080  \\
    &&Oracle & 0.104  & 0.135  & 95.40\% & 0.044  & 0.345  & 0.482  & 93.60\% & 0.086  \\
   	\bottomrule
   \end{tabular}%
  \label{tab:estedROC-simu}%
\end{table}%

{For the fixed classifier $c(x)$ under the setting $n_1 = 500$ and $n_0 = 2000$, the \emph{Proposed} method consistently outperforms \emph{Reweight} across both thresholds (0.1 and 0.2), achieving smaller RB and MSE, as well as shorter AL. Its coverage probabilities (92.0\% and 92.4\%) are slightly below nominal but still closer to the target level than those of \emph{Reweight} (90.6\% and 91.0\%). This reflects the limitation of a relatively small source sample, which hinders effective transfer and affects uncertainty quantification in the larger target population. While both methods leverage source information, the heavier reliance of \emph{Reweight} exacerbates undercoverage, whereas the \emph{Proposed} method more effectively balances information from the source and target populations. In contrast, the \emph{Naive} method performs poorly, with very large RBs (177.769 and 138.003), high MSEs (133.318 and 178.115), and zero coverage. When $n_1 > n_0$, both the \emph{Proposed} and \emph{Reweight} methods show marked improvements in RB and CP.}

% {\color{red} For the estimated classifier $\widehat{c}(x)$, both the \emph{Proposed} and \emph{Reweight} methods achieve near-nominal coverage across thresholds (0.1 and 0.2), with coverage ranging from 96.2\% to 96.8\%. The \emph{Proposed} method attains slightly smaller RB and MSE than \emph{Reweight} in most cases, while also producing shorter intervals, indicating more efficient estimation. In contrast, the \emph{Naive} method performs very poorly, with large negative RBs (around $-7$), high MSEs, and almost zero coverage. The \emph{Oracle} method, as expected, achieves the best performance, with negligible RB, minimal MSE, and the shortest intervals, though in the $n_1=2000, n_0=500$ setting its coverage drops below 91\%, suggesting some undercoverage when the target sample size is limited. Overall, the results confirm that $\widehat{c}(x)$ yields more stable estimation than the fixed classifier $c(x)$, and that the \emph{Proposed} method consistently outperforms other feasible approaches.}

{For the estimated classifier $\widehat{c}(x)$ at thresholds 0.1 and 0.2, \emph{Reweight} attains the smallest RB among feasible methods when $n_1<n_0$, and also at the 0.2 threshold when $n_1>n_0$. 
Despite these RB advantages, \emph{Proposed} consistently delivers lower MSE and shorter AL across both sample-size configurations and both thresholds than \emph{Reweight}. In terms of coverage, \emph{Proposed} achieves 96.0\%–96.8\%, whereas \emph{Reweight} is slightly higher at 96.8\%–97.2\%; however, the tighter intervals and smaller MSE of \emph{Proposed} suggest that it achieves higher efficiency without sacrificing nominal coverage.
By contrast, the \emph{Naive} method exhibits large RB, high MSE, and near-zero coverage. 
The \emph{Oracle} benchmark, as expected, yields the smallest MSE and shortest intervals, though its coverage falls below nominal in the $n_1=2000, n_0=500$ case (e.g., 93.0\% and 93.6\%), highlighting that sample size imbalance can still affect inference even when target outcomes are fully observed. 
Unlike the fixed classifier $c(x)$, there is no clear pattern associated with the direction of sample size imbalance.
This is because the estimation of $\widehat{c}(x)$ introduces additional variability: the ROC now depends not only on $\hat{\theta}$ but also on the uncertainty in $\widehat{c}(x)$. In contrast, when $c(x)$ is fixed, the ROC depends only on $\hat{\theta}$, so the coverage remains nominal. Hence, the small undercoverage observed with $\widehat{c}(x)$ is expected and reflects the extra uncertainty from estimating the classifier.
}

{In addition to ROC estimation, we also report the corresponding AUC estimates in Table~\ref{tab:estedAUC-simu}. Given that the AUC represents the area under the ROC curve, the findings regarding AUC estimation are largely consistent with those for ROC. Hence, for brevity, we omit a detailed discussion of the ROC results. }

\begin{table}[!htt]
   \centering
  \setlength{\tabcolsep}{4pt} 
  \caption{Estimation of AUC for the fixed classifier $c(x) = p^{(1)}(Y=1|x)$ and 
  	estimated classifiers $\widehat{c}(x) = \widehat{p}^{(0)}(Y=1|x)$ in simulation setting when $n_1 \neq n_0$. RB: relative bias ($\%$); MSE: mean square error ($\times 1000$).}
   	\begin{tabular}{clcccccccc}
   	\toprule
   	&       &  {RB} &  {MSE} &  { CP} &  {AL} &  {RB} &  {MSE} &  { CP} &  {AL} \\
   	\midrule
   	&       & \multicolumn{4}{c}{$n_1=500,n_0=2000$} & \multicolumn{4}{c}{$n_1=2000,n_0=500$} \\
    \cmidrule(lr){3-6} \cmidrule(lr){7-10} 
    \multirow{4}[0]{*}{$c(x)$} 
    &Proposed & 2.167  & 2.263  & 90.80\% & 0.187  & 0.594  & 1.540  & 96.20\% & 0.160  \\
    &Reweight & 2.809  & 2.417  & 90.20\% & 0.186  & 0.789  & 1.578  & 95.20\% & 0.161  \\
    &Naive & 46.801  & 73.439  & 0.00\% & 0.059  & 46.705  & 73.001  & 0.00\% & 0.038  \\
    &Oracle & 0.011  & 0.169  & 94.20\% & 0.050  & 0.130  & 0.626  & 95.20\% & 0.101  \\
          \midrule
   	&       & \multicolumn{4}{c}{$n_1=500,n_0=2000$} & \multicolumn{4}{c}{$n_1=2000,n_0=500$} \\
    \cmidrule(lr){3-6} \cmidrule(lr){7-10} 
    \multirow{4}[0]{*}{$\widehat{c}(x)$} 
    &Proposed & -0.108  & 0.118  & 96.60\% & 0.046  & 0.008  & 0.192  & 96.20\% & 0.055  \\
    &Reweight & 0.108  & 0.156  & 96.80\% & 0.051  & 0.058  & 0.196  & 96.40\% & 0.056  \\
    &Naive & -7.358  & 4.821  & 0.60\% & 0.058  & -7.817  & 5.272  & 0.00\% & 0.038  \\
    &Oracle & 0.070  & 0.038  & 95.60\% & 0.023  & 0.234  & 0.140  & 90.40\% & 0.045  \\
   	\bottomrule
   \end{tabular}%
  \label{tab:estedAUC-simu}%
\end{table}%

Overall, the results under unbalanced sample sizes demonstrate that the \emph{Proposed} method remains robust, exhibiting only mild sensitivity to sample size imbalance. Notably, even the \emph{Oracle} procedure experiences noticeable coverage deterioration when the source and target sample sizes differ substantially. These findings, together with the main results, further support the reliability of the proposed approach for transfer learning under distributional shift.

\section{Additional results for waterbirds dataset}
\label{add-real-res}
To conserve space in the main text, we report the results for estimating the model parameter $\theta$ under the waterbirds dataset here. For this semi-synthetic dataset, true labels are available for all observations, including those in the target sample. Thus, analogous to the simulation study, we present both \emph{Proposed} and \emph{Ideal} estimates (Ests) of the parameter $\theta$. Additionally, we provide bootstrap-based 95\% confidence intervals (CIs) and their interval lengths using 500 bootstrap samples. The detailed results are summarized in Table~\ref{tab:esttheta-water}.

\begin{table}[!htt]
	\centering
	\caption{Point estimates and confidence intervals for $\theta$ using the waterbirds dataset.}
	\begin{tabular}{clccc}
		\toprule 
		&       &  {Est} &  {CI} &  {CI length} \\
		\cmidrule{3-5}
		\multirow{2}[0]{*}{ $\alpha_0$} & Proposed & -0.721  &\([-0.750,-0.573]\) & 0.177  \\
		& Ideal & -0.628  &\([-0.663,-0.592]\) & 0.071  \\
		\multirow{2}[0]{*}{ $\beta_0$} & Proposed & 2.951  &\([\phantom{-}2.788,\phantom{-}3.117]\) & 0.328  \\
		& Ideal & 2.945  &\([\phantom{-}2.803,\phantom{-}3.101]\) & 0.298  \\
		\midrule
		\multirow{2}[0]{*}{ $\alpha_1$} & Proposed & 2.290  &\([\phantom{-}1.999,\phantom{-}2.563]\) & 0.564  \\
		& Ideal & 2.250  &\([\phantom{-}2.019,\phantom{-}2.559]\) & 0.540  \\
		\multirow{2}[0]{*}{$\beta_1$} & Proposed & -3.118  &\([-3.341,-2.504]\) & 0.837  \\
		& Ideal & -2.938  &\([-3.253,-2.682]\) & 0.571  \\
		\bottomrule
	\end{tabular}%
	\label{tab:esttheta-water}%
\end{table}%

Notably, the proposed estimates closely align with the ideal estimates, both in terms of point estimation and confidence interval coverage. This result highlights the effectiveness and reliability of the proposed estimation approach, thereby ensuring the validity and robustness of subsequent estimation and inference for functionals of interest based on $\hat{\theta}$.

\end{document}